\newtheorem{definition}{Definition}
\newtheorem{theorem}{Theorem}
\newtheorem{lemma}{Lemma}
\newtheorem{example}{Example}
\definecolor{cmdcolor}{RGB}{34, 153, 84}
\newcommand{\eg}{{\it e.g.}\xspace}
\newcommand{\ie}{{\it i.e.}\xspace}
\newcommand{\add}[1]{#1}
\newcommand{\task}{\tau}
\newcommand{\asso}{\kappa}          
\newcommand{\cnt}{\delta}           
\newcommand{\cp}{\lambda}           
\newcommand{\CR}{\mathcal{C}}
\newcommand{\ctx}{r}             
\newcommand{\ctxset}{R}          
\begin{document}

\thispagestyle{plain}
\pagestyle{plain}

\title{Tight Cache Contention Analysis for\\WCET Estimation on Multicore Systems}

\author{
\IEEEauthorblockN{Shuai Zhao\IEEEauthorrefmark{2}, Jieyu Jiang\IEEEauthorrefmark{2}, Shenlin Cai\IEEEauthorrefmark{2}, Yaowei Liang\IEEEauthorrefmark{2}, Chen Jie\IEEEauthorrefmark{2}, Yinjie Fang\IEEEauthorrefmark{3}, Wei Zhang\IEEEauthorrefmark{4},\\ Guoquan Zhang\IEEEauthorrefmark{6}, Yaoyao Gu\IEEEauthorrefmark{6}, Xiang Xiao\IEEEauthorrefmark{6}, Wei Qin\IEEEauthorrefmark{6}, Xiangzhen Ouyang\IEEEauthorrefmark{6}, Wanli Chang\IEEEauthorrefmark{3}
}
\IEEEauthorblockA{
\IEEEauthorrefmark{2}Sun Yat-sen University, China
\IEEEauthorrefmark{3}Hunan University, China
\IEEEauthorrefmark{4}Shandong University, China
\IEEEauthorrefmark{6}Xiaomi Corporation, China
}
}

\maketitle

\begin{abstract}

WCET (Worst-Case Execution Time) estimation on multicore architecture is particularly challenging mainly due to the complex accesses over cache shared by multiple cores.
Existing analysis identifies possible contentions between parallel tasks by leveraging the partial order of the tasks or their program regions.
Unfortunately, they overestimate the number of cache misses caused by a remote block access without considering the actual cache state and the number of accesses.
This paper reports a new analysis for inter-core cache contention.
Based on the order of program regions in a task, we first identify memory references that could be affected if a remote access occurs in a region.
Afterwards, a fine-grained contention analysis is constructed that computes the number of cache misses based on the access quantity of local and remote blocks.
We demonstrate that the overall inter-core cache interference of a task can be obtained via dynamic programming.
Experiments show that compared to existing methods, the proposed analysis reduces \add{inter-core cache interference and WCET estimations by 52.31\% and 8.94\% on average}, without significantly increasing computation overhead.
\end{abstract}

\section{Introduction} 
\label{sec:intro}

The worst-case execution time (WCET) of tasks is essential for ensuring the timing predictability of real-time systems, \ie, being able to finish before deadlines~\cite{wilhelm2008worst,hahn2018static}. 
In WCET estimations, the cache contention is a major source of latency, where cache misses can impose a significant delay on the execution of a task~\cite{liang2012timing,zhang2022precise,colin2000worst}.
For uniprocessor systems, the cache analysis focuses on the intra-core contention caused by memory accesses of a task itself, providing the maximum cache miss number among the execution paths of the task~\cite{lv2016survey,reineke2007timing,sen2007wcet}.
Matured techniques are available for analysing the intra-core cache contention~\cite{lv2016survey,chattopadhyay2009unified}, \eg, the Must/May~\cite{alt1996cache} and Persistence analysis~\cite{cullmann2013cache,zhang2015improving,stock2019cache} that compute the age of memory blocks across different paths and loop structures.

With the growing demand for performance and complex functionalities, multicore architectures are widely adopted in real-time systems, which often contain a shared cache that is accessed from multiple cores in parallel~\cite{davis2011survey, chattopadhyay2014unified}.
In such systems, the inter-core cache contention can occur when memory blocks required by a task are evicted due to the block accesses of the remote tasks~\cite{yan2008wcet, dharishini2021precise}.
Such contention can lead to substantial misses on the shared cache, resulting in significant latency in task execution.
Therefore, for multicore real-time systems, the inter-core cache contention analysis is necessary to provide correct WCET estimations of tasks~\cite{maiza2019survey}.

Ideally, a precise bound on the inter-core cache interference of a task can be obtained by analysing all potential contentions imposed from remote tasks at each program point, with the cache state taken into account (\ie, the blocks in the shared cache and their ages). 
However, due to the non-deterministic arrivals of parallel tasks~\cite{fischer2023analysis}, this can cause severe scalability issues that fundamentally undermine the applicability of the analysis. In addition, the complexity of the cache and the intricacies of the programs (\eg, loop structures) further exacerbate the problem~\cite{maiza2019survey}, making it even challenging to obtain a precise analytical bound efficiently.

Therefore, rather than formulating an optimisation problem across all program points~\cite{nagar2016precise}, existing approaches explore the polynomial-time solutions that identify contentions which can occur in the worst case~\cite{liang2012timing,zhang2022precise}.
Specifically, the analysis in~\cite{liang2012timing} leverages the lifetimes of tasks to determine the ones that cause a contention, thereby reducing the number of interfering tasks.
A more recent method in~\cite{zhang2022precise} exploits the partial order between program regions to identify contentions that are mutually exclusive in time, and applies a dynamic program to compute the number of misses based on the feasible contentions.

However, while reducing the computation complexity, these methods often rely on a pessimistic assumption that a remote access can impose a contention on all accesses of every block, neglecting the actual cache state of blocks (\ie, age) or the access behaviours (\ie, number of accesses) of the parallel tasks (see Sec.~\ref{sec:model-existing} for details).
Consequently, they can overestimate the number of cache misses caused by a remote access,
which leads to overly conservative analytical results.

\textbf{Main Contribution:} 
This paper presents a novel inter-core cache contention analysis. 
Similar to~\cite{zhang2022precise}, the constructed analysis is performed at the program regions of tasks. 
However, for each region, we determine the number of cache misses by examining every possible age and access number of a memory block, providing tighter analytical bounds that address the limitation in existing analysis.
To achieve this, the following contributions are proposed.
\begin{itemize}
\item 
A sequence of contention regions (CRs) where each contains the memory references can be interfered if a remote access occurs in the corresponding program region.
\item A fine-grained contention analysis that quantifies the cache miss number of memory references in a CR with the actual cache accessing behaviours taken into account. 
\item A complete inter-core cache analysis over a sequence of CRs via dynamic programming that leverages the partial order between the regions of parallel tasks.
\end{itemize}

We integrate the proposed analysis into an existing WCET analysis tool named LLVM-TA~\cite{hahn2022llvmta}, forming a complete solution for analysing WCETs on multicore real-time systems\footnote{The implementation of the proposed inter-core cache analysis is available at \url{https://github.com/RTS-SYSU/Timing-Analysis-Multicores}.}.
To the best of our knowledge, this is the first inter-core cache analysis that considers both the actual cache state and accessing behaviours of parallel tasks. 
Experiments show that compared to the existing analysis, the proposed analysis reduces \add{inter-core cache interference and WCET estimations by 52.31\% and 8.94\%} on average under tasks in TacleBench without significantly increasing the computation cost.

\textbf{Organisation:} Sec.~\ref{sec:model} presents the preliminaries of the proposed analysis. Sec.~\ref{sec:crm} constructs the CRs based on the program regions of a task. Sec.~\ref{sec:contention} presents the contention analysis on a CR. Sec.~\ref{sec:multi} forms the complete inter-core cache analysis of a task. Finally, Sec.~\ref{sec:result} describes the experimental results and Sec.~\ref{sec:conclusion} draws the conclusion.
\section{Preliminaries}
\label{sec:model}

\begin{table}[t]
\vspace{4pt}
\centering
\caption{Notations Introduced in Sec.~\ref{sec:model}.}
\label{tab:notations_model}
\renewcommand{\arraystretch}{1.1}
\begin{tabular}{p{.18\columnwidth}m{.72\columnwidth}}
\hline
\textbf{Notation} & \textbf{Description} \\
\hline
$\asso$ & The associativity of the shared cache.\\
$\task$~/~$\task'$ & The task under analysis / the interfering task.\\
\hline
$b_{i}$ & A memory block $b_i$.\\
$\omega_i$ & The block address of $b_i$. \\
$U_x$ & A unordered region with an index $x$.\\

$\cnt_x$ & The number of executions of $U_x$.\\

\hline
$\cp$ & A path that contains a sequence of out-most URs.\\
$\Omega_x$ & The set of unique block addresses in $U_x$.\\
$anc(U_x)$ & The complete set of the outer URs that contain $U_x$. \\
\hline
\end{tabular}
\vspace{-3pt}
\end{table}


\subsection{System Model} \label{sec:model-system}
\textbf{Hardware Platform.} 
To simplify the presentation of the analysis, we focus on a single-level fully associative cache shared between two cores.
The associativity (\ie, the number of cache lines) of the cache is denoted as $\asso$, where a cache line can store one memory block loaded from the main memory. 
\add{As with~\cite{wilhelm2008worst,hahn2018static,hardy2009using,nagar2016fast,zhu2024fine,stock2019cache}, the Least Recently Used (LRU) policy~\cite{reineke2007timing} is applied as the replacement policy.}
Sec.~\ref{sec:multi-multi} describes the application of the analysis for a multi-level set-associative cache hierarchy shared among multiple cores. 
Notations introduced in this section are summarised in Tab.~\ref{tab:notations_model}.

\textbf{Memory Blocks.} 
The system contains two parallel tasks $\task$ and $\task'$ competing for the cache, where $\task$ is under analysis and $\task'$ is an interfering task.
As with~\cite{zhang2022precise,stock2019cache}, a task is represented by a Control Flow Graph (CFG) that accesses a set of memory blocks at different program points.
Notation $b_i$ indicates a memory block being accessed at a distinct program point.
A $b_i$ has a block address $\omega_i$ that indicates the data location.
For one access to $b_i$, its data can be found either in the cache (cache hit) or the memory (cache miss). We note that different $b_i$ can have the same address, indicating the same piece of data accessed from different program points in $\tau$.
For example, Fig.~\ref{fig:cfg} presents a CFG that accesses six blocks at different program points, in which $\omega_1 = \omega_5$ and $\omega_2 = \omega_6$. 

Given a series of block accesses in $\tau$ (\ie, the traces in~\cite{stock2019cache}), the age of $b_i$ is defined as the maximum number of \textit{unique blocks} (\ie, ones with different addresses) being accessed since the most recent access to $\omega_i$ (if it exists)~\cite{stock2019cache,liang2012timing}.
For example, in the access sequence $\{b_1, b_2, b_3, b_5\}$ with $\omega_1 = \omega_5$, the age of $b_1$ is $\infty$ and the age of $b_5$ is 2.
In this work, we assume that the ages of blocks are known in advance based on the existing intra-core cache analysis~\cite{alt1996cache,stock2019cache}.

\textbf{Unordered Program Regions.} For a CFG, obtaining a complete block access sequence of a path (\ie, the control flow) requires unrolling all loops~\cite{stock2019cache}, which can be time-consuming due to the intricate loop structures. 
To avoid this, the unordered program region (UR) in~\cite{zhang2022precise} is applied to organise block accesses in loop structures while maintaining the partial order between the URs.
A UR is denoted as $U_x$, which has an execution number $\cnt_x$ based on the loop count,
\eg, $U_2 = \{ b_2, b_3 \}$ with $\cnt_2 = 3$ and $U_4 = \{ b_5, b_6 \}$ with $\cnt_4 = 5$ in Fig.~\ref{fig:cfg}.
For a block that does not reside in any loop, a UR is constructed as a singleton set with an execution number of one, \eg, $U_1 = \{ b_1 \}$ with $\cnt_1 = 1$ in Fig.~\ref{fig:cfg}. 
Notation $\Omega_x$ denotes the set of unique block addresses being accessed in a $U_x$, \ie, $\Omega_x=\{\omega_i~|~b_i \in U_x\}$.



A path of CFG (denoted as $\cp_q$) contains a sequence of the \textit{out-most URs} (the ones without any outer URs) that are executed in order, \eg, $\cp_1 = \{ U_1, U_2, U_4\}$ in Fig.~\ref{fig:cfg}. 
Hereinafter, we omit the index $q$ of $\cp_q$ if there exists only one path in the context. 
For nested UR structures (\ie, nested loops), function $anc(U_x)$ returns the complete set of the outer URs that contain $U_x$ directly or transitively, \eg, $anc(U_5) = \{U_4\}$.

\begin{figure}
    \centering
    \includegraphics[width=0.8\linewidth]{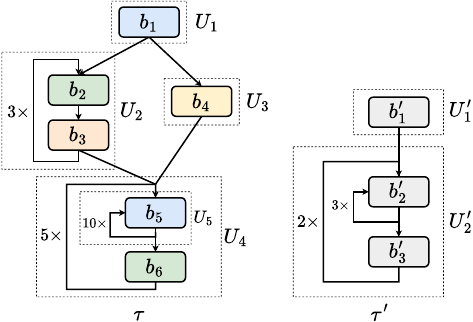}
    \caption{A control flow graph organised by unordered regions \textit{\add{(the same colour in $\tau$ indicates a same block addresses)}}.}
    \label{fig:cfg}
    \vspace{-7pt}
\end{figure}

\subsection{Existing Analysis on Inter-core Cache Contention } \label{sec:model-existing}

Extensive efforts have been conducted to estimate the inter-core cache contention, aiming to provide efficient solutions with tight bounds~\cite{hardy2009using,nagar2016fast,nagar2016precise,fischer2023analysis,zhu2024fine}. 
In~\cite{nagar2016precise}, an optimisation problem is constructed that finds the maximum number of cache misses of $\tau$ by examining the contention from remote accesses across different program points of $\tau$.
The analysis in~\cite{fischer2023analysis} determines whether a block can be evicted before its next access based on the minimum time required for the remote tasks to access a sufficient number of unique blocks.
However, these methods rely on integer linear programming, which can incur severe scalability issues~\cite{maiza2019survey}.
Below we detail two analyses with polynomial-time complexity.

In~\cite{liang2012timing}, a cache analysis is constructed that identifies the interfering tasks of $\task$ by examining their lifetimes, \ie, the duration from the earliest start to the latest finish of a task. The initial lifetime of a task is obtained based on the intra-core analysis.
The analysis then computes the inter-core cache interference and extends the lifetime based on the estimated latency in an iterative manner. 
During an iteration, it classifies accesses to a block in $\tau$ as cache misses as long as the number of unique addresses accessed by the interfering tasks exceeds the difference between $\asso$ and the block's age.
This analysis provides an efficient approach for computing the inter-core cache interference. However, it assumes that the accesses of interfering tasks can affect every block access of $\task$ at any time of its execution, which leads to overly pessimistic results.

\begin{table}[t]
\vspace{4pt}
\centering
\caption{Notations Introduced in Sec.~\ref{sec:crm}.}
\label{tab:notations_crm}
\renewcommand{\arraystretch}{1.1}
\begin{tabular}{p{.25\columnwidth}m{.65\columnwidth}}
\hline
\textbf{Notation} & \textbf{Description} \\
\hline

$age(b_i, U)$ & The age of $b_i$ in the scope of the program. \\
$age(b_i, U_x)$ & The age of $b_i$ in the scope of $U_x$. \\
\mbox{$\ctx^k=(\omega^k, \cnt^k, age^k)$} & A memory reference that contains $\cnt^k$ accesses to block address $\omega^k$ with an age of $age^k$. \\
$R^x$ & The set of all $r^k$ in $U_x$. \\
$U(b_i)$ & The set of nested URs that include $b_i$. \\
\hline
$\alpha^k$, $\beta^k$ & The indices of the first and last UR of a UR sequence in which a remote access can affect $r^k$. \\
\add{$\check{U}(r^k)$} & The index of the UR that contains the reference immediately following the previous access to $\omega^k$. \\
$\mathcal{C}_s$ & A contention region with an index s.\\
$\mathcal{C}$ & The sequence of contention regions for a UR path.\\
\hline
\end{tabular}
\vspace{-7pt}
\end{table}

More recently, an analysis is proposed in~\cite{zhang2022precise}, \add{which introduces the notion of UR to organise the memory blocks accessed in a path of CFG as a sequence of out-most URs.
The analysis constructs the paths by merging the basic blocks representing loop structures into a monolithic block (\ie, an out-most UR). 
This effectively simplifies the timing relations by transforming them from instruction-level to UR-level, alleviating the path explosion problem.}
Then, given one UR path of $\tau$ and $\tau'$, impossible contention scenarios can be identified based on the partial order between these UR paths.
\add{
For $\task$ and $\task'$ in Fig.~\ref{fig:cfg}, if $U_1$ incurs contention from $U'_2$, then $U_2$ cannot be affected by $U'_1$ due to their temporal order. 
The analysis then applies dynamic programming to find the maximum number of cache misses of $\task$, excluding contention scenarios that violate such temporal constraints. By eliminating the contention scenarios that are temporally impossible, it can tighten the WCET estimations.}
However, for a $U_x$ and $U'_x$, the analysis treats all accesses of a block $b_i \in U_x$ as cache misses if there exists a block in $U'_x$ that maps to the same cache set as $b_i$, overlooking the actual cache state (\eg, the $\asso$ and the ages) and the number of block accesses in $U'_x$. 
Hence, it still incurs a certain degree of pessimism, especially with a large $\asso$ or $\cnt_x$ of a $U_x$.

\vspace{-1pt}
\subsection{Motivation}\label{sec:model-mot}

To address these limitations, this paper presents a new inter-core cache analysis that provides fine-grained quantification of cache misses by considering both the cache state and the accessing behaviour of the parallel tasks. 
By examining the block accesses in each (nested) UR, we construct memory references that explicitly account for the different ages and the associated access numbers of a $b_{i}$  (Sec.~\ref{sec:crm-ref}).
Then, a sequence of contention regions (CR) is constructed to identify memory references that can be affected if a remote block access occurs at each $U_x \in \lambda$ (Sec.~\ref{sec:crm-region}).
For a CR, a fine-grained contention analysis is proposed to bound the number of misses given a sequence of interfering URs, with the age and the access number of blocks taken into account (Sec.~\ref{sec:contention}).
Finally, dynamic programming is applied to find the worst-case inter-core cache interference of $\tau$ across program regions, forming the complete cache contention analysis (Sec.~\ref{sec:multi}).


\section{Construction of Contention Regions}
\label{sec:crm}

To obtain precise inter-core cache interference estimations, it is necessary to (i) identify references to a $b_i$ with a different age (\eg, in nested URs) - reflecting different likelihoods of eviction for accesses to $b_i$; and (ii) determine accesses that can be affected if a remote access occurs at a $U_x$ - identifying ones that are prone to cache misses across different URs.
To achieve this, 
this section first identifies references of each $b_i \in U_x$ with different ages and access numbers (Sec.~\ref{sec:crm-ref}). 
Then, based on the temporal order between URs, a sequence of contention regions (CR) is constructed, where each CR contains the references that can be affected by a remote access (Sec.~\ref{sec:crm-region}). Notations introduced in this section are summarised in Tab.~\ref{tab:notations_crm}.

\begin{figure}[t]
\centering
\includegraphics[width=.9\linewidth]{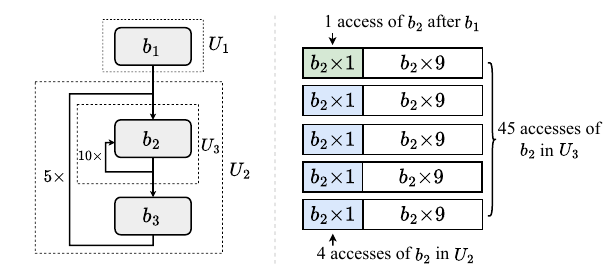}
\caption{An example of different references of $b_2$. The ages of $b_2$ are $age(b_{2},U)=\infty$, $age(b_{2},U_2)=1$ and $age(b_{2},U_3)=0$.}
\label{fig:sre}
\vspace{-7pt}
\end{figure}

\subsection{Constructing Memory References}
\label{sec:crm-ref}

We start by identifying different ages of a block, especially when it is accessed in nested URs.
For a block $b_i$ in a $U_x$ with $\cnt_x > 1$, the age of $b_i$ can differ significantly between its first access (\ie, the scope of the program) and the subsequent iterations (\ie, the scope within $U_x$)~\cite{huynh2011scope}.
First, for the scope of the program, the age of $b_i$ (denoted as $age(b_i, U)$) is defined as \add{the maximum reuse distance among all memory access traces in the preceding URs}, which can be computed using the Must/May analysis~\cite{alt1996cache}.
Then, the age of $b_i$ within the scope of $U_x$ (denoted as $age(b_i, U_x)$) is computed based on the repeating access traces inside $U_x$, which can be obtained based on the conflict set by the persistence analysis~\cite{stock2019cache}.
\add{In Fig.~\ref{fig:sre}, as $b_2$ resides in nested URs, it is accessed from the scope of the program ($U$), $U_2$ and $U_3$ with different ages.}

Based on the above, Def.~\ref{def:sre} describes a memory reference $r^k$ to a block $b_i$, which is characterised by a block address $\omega^k$, an access number $\delta^k$, and the age $age^k$ of these accesses. Accordingly, $b_2$ in Fig.~\ref{fig:sre} has three references, each of which has a specific age and an associated access number, \ie, $r^1 = \{ \omega_2, 1, \infty \}$, $r^2 = \{ \omega_2, 4, 1 \}$, and $r^3 = \{ \omega_2, 45, 0 \}$. 
\begin{definition}\label{def:sre}
A memory reference to $b_i$ is defined as $\ctx^k=(\omega^k=\omega_i, \cnt^k, age^k)$, indicating $\cnt^k$ accesses to the block address $\omega^k$ with an age of $age^k$.
\end{definition}

\begin{algorithm}[t]
\caption{Construction of memory references.}
\label{alg:sre}
\KwIn{$\forall U_x\in\lambda$;}
$R^x = \varnothing$;\\

\For{\normalfont{each} $ b_i \in U_x$}{
         {\footnotesize\ttfamily/* {The reference of the first access} */}\\
         $\omega^k = \omega_i$;~~~$\delta^k = 1$;~~~$age^k=age(b_i,U)$;\\
        $\ctxset^x =\ctxset^x \cup  \{r^k\}$;\\
    

    \For{\normalfont{each} $U_l \in U(b_i)$}{
        
        {\footnotesize\ttfamily/* {Access count in scope of $U_l$} */}\\
        $\delta^k= \Pi_{U_{o}\in anc(U_l)} \cnt_{o} \times (\cnt_l-1) $;\\

        {\footnotesize\ttfamily/* {The reference of accesses in $U_l$} */}\\
        \If{$\cnt^k > 0$}{
            $\omega^k=\omega_i$;~~~$age^k=age(b_i,U_l)$;\\
            $\ctxset^x =  \ctxset^x \cup \{r^k \}$;\\
        }

    }
}


\Return $\ctxset^x$;\\
\end{algorithm}

Let $R^x$ denote the set of $r^k$ in a $U_x$, Alg.~\ref{alg:sre} iterates over every out-most $U_x \in \lambda$ and identifies the references of $b_i \in U_x$.  
For a $b_i$, the reference of the first access in $U_x$ is constructed, which has an age of $age(b_i, U)$ in the scope of the program (lines 4-5).
Then, we identify the remaining references of $b_i$ by examining (i) the set of nested URs that contain $b_i$ and (ii) the execution numbers of these URs (lines 6-14). 
Function $U(b_i)$ at line 6 returns the nested URs that contain $b_i$, including $U_x$ itself. 
For each $U_l$ in $U(b_i)$ (line 6), the number of accesses to $b_i$ in the scope of $U_l$ (except its first access) is computed by $\cnt^k= \Pi_{U_{o}\in anc(U_l)} \cnt_{o} \times (\cnt_l-1)$, where $\Pi_{U_{o}\in anc(U_l)} \cnt_{o}$ gives the executions number of $U_l$ invoked by its outer URs and $\cnt_l-1$ indicates the iterative accesses to $b_i$ in $U_l$.
For instance, $b_2$ is accessed $\cnt_2 \times (\cnt_3 - 1)=45$ times in the scope of $U_3$ with $age(b_2, U_3) =0 $.
Then, if $\cnt^k > 0$, a $r^k$ to $b_i$ is constructed with $age(b_i, U_l)$ (at lines 10-13).

Finally, Alg.~\ref{alg:sre} terminates with $R^x$ returned for every $U_x \in \cp$. With the above, all the references to every $b_i$ accessed in $\lambda$ can be obtained. 
The time complexity of Alg.~\ref{alg:sre} is quadratic, as constructing references of every $b_i$ in $\lambda$ across scopes requires examining all the nested URs that contain $b_i$.

\subsection{Constructing the Contention Regions}
\label{sec:crm-region}

With the references constructed for each $U_x \in \lambda$, 
we then identify the set of references in which their ages can be affected by a remote access occurring at every $U_x \in \lambda$.
\add{As illustrated in Fig.~\ref{fig:CR}, $r^1$ can be affected by a remote access at $U_2$ or $U_3$, whereas $r^2$ can be affected by a remote access at $U_3$ or $U_4$. Accordingly, a remote access at $U_3$ can affect both $r^1$ and $r^2$ simultaneously.}
Thus, to identify references that can be affected by a remote access occurring at every $U_x \in \lambda$, we first determine the sequence of URs at which $r^k$ can be affected. 
Note, only the references with $age^k < \asso$ will be considered, as ones with $age^k \geq \asso$ will inevitably miss the shared cache regardless of inter-core contention.

\begin{figure}
\centering
\includegraphics[width=.95\linewidth]{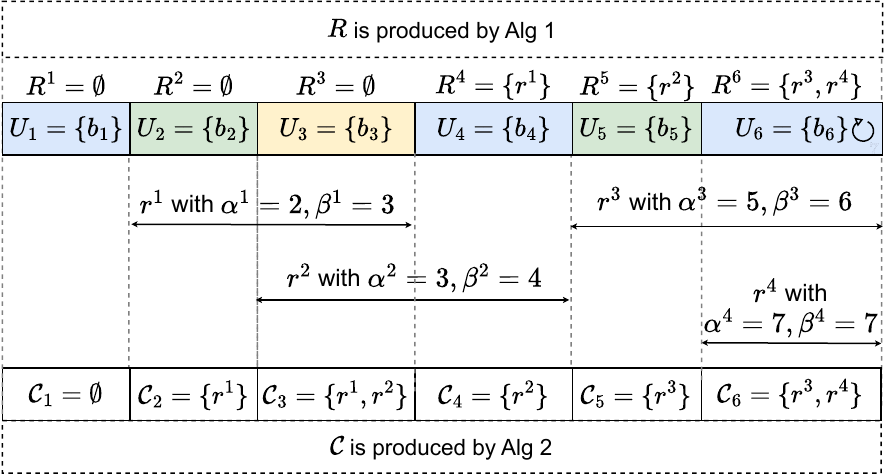}
\caption{\add{The construction of CRs \textit{(blocks in the same colour have the same address; $U_6$ is a loop that accesses $b_6$ iteratively; references with $age^k\geq\asso$ are omitted)}.}}
\label{fig:CR}
\vspace{-7pt}
\end{figure}


\add{For a reference $r^k$, notations $\alpha^k$ and $\beta^k$ denote the indices of the first and the last URs in the segment. That is, $\alpha^k$ and $\beta^k$ identify a contiguous UR segment between $r^k$ and the previous access to $r^k$'s address, in which the access to other addresses would increase the age of $r^k$.}
Intuitively, $\alpha^k$ and $\beta^k$ of a $r^k$ in $U_x$ can be identified based on the UR containing the previous access to $\omega^k$ and $U_x$, respectively. 
However, for a $U_x$ with $\delta_x > 1$, it may contain multiple references that access $\omega^k$ with different ages.
To cope with such situations, $\alpha^k$ and $\beta^k$ are decided based on the following three cases. Function \add{$\check{U}(r^k)$} returns the index of the UR that contains the memory reference immediately following the previous access to $\omega^k$.
\begin{itemize}
\item If $\cnt_x =1$, it indicates that $U_x$ only contains one $r^k$ that access $\omega^k$ once (\ie, $U_x$ does not contain any loops). In this case, the previous access to $\omega^k$ is allocated in a UR before $U_x$ in $\lambda$, and hence, $\alpha^k = \check{U}(r^k)$ and $\beta^k = x-1$.
\item If $\cnt_x > 1$ and $\cnt^k = 1$, it means $r^k$ could be the first access to $\omega^k$ in $U_x$, where the previous access to $\omega^k$ resides in a UR before $U_x$. \add{In addition, as $U_x$ may contain references that access the cache before $r^k$, a remote access occurs at $U_x$ can affect $r^k$}. Hence, $\alpha^k = \check{U}(r^k)$ and $\beta^k = x$.
\item If $\cnt_x > 1$ and $\cnt^k > 1$, it indicates $r^k$ is not the first access to $\omega_k$ in $U_x$. Hence, a remote access can affect $r^k$ only if it occurs at $U_x$, and therefore, $\alpha^k = x$ and $\beta^k = x$.
\end{itemize}

Based on the above, Eq.~\ref{eq:alpha} and~\ref{eq:beta} present the computations of $\alpha^k$ and $\beta^k$ for a $r^k$ in $U_x$, respectively. Note, if $\cnt_x=1$, then $\cnt^k=1$ as there exists no loop in $U_x$. \add{Example~\ref{exa:alpha_beta} illustrates the computations of $\alpha^k$ and $\beta^k$ based on references in Fig.~\ref{fig:CR}.}
\begin{equation} \label{eq:alpha}
\alpha^k =
\begin{cases}

\check{U}(r^k), & \text{if}~  \cnt^k=1 \\
x, & \text{if}~ \cnt^k>1 \\
\end{cases}
\end{equation}
\begin{equation} \label{eq:beta}
\beta^k = 
\begin{cases}
x-1, &  \text{if}~  \cnt_{x}=1 \\
x, & \text{if}~  \cnt_{x}>1\\
\end{cases}
\end{equation}

\begin{example}\label{exa:alpha_beta}
\normalfont
\add{
For $r^1$ to $r^4$ in Fig.~\ref{fig:CR}, $r^1$ and $r^2$ correspond to $U_4$ and $U_5$, respectively; while both $r^3$ and $r^4$ are references to $b_6$ in $U_6$ based on Alg.~\ref{alg:sre}. Since $U_4$ does not contain any loops (\ie, $\delta_4 = 1$), $r^1$ falls into the first case described above.
In addition, as the previous access of the same address occurs in $U_1$ (coloured in blue), $\alpha^1 = \check{U}(r^1) = 2$ and $\beta^1 = 4 - 1 = 3$. Similarly, we have $\alpha^2 = 3$ and $\beta^2 = 4$. 
For $r^3$, since $U_6$ contains a loop (\ie, $\delta_6 > 1$) and $\delta^3 = 1$ (\ie, the first access to $b_6$), it falls into the second case. In addition, as the previous access to the address occurs in $U_4$, so $\alpha^3 = \check{U}(r^3) = 5$ and $\beta^3 = 6$ accordingly.
Finally, as $\delta^4 > 1$ and $\delta_6 > 1$ for $r^4$, we have $\alpha^4 = \beta^4 = 6$ based on the third case. 
}
\end{example}

\begin{algorithm}[t]
\caption{Construction of contention regions.}
\label{alg:build_CR}
\KwIn{$\forall R^x \in U_x, \forall U_x \in \lambda$;}
$\mathcal{C} = \varnothing$;\\

{\footnotesize\ttfamily/* {The region where $r^k$ can be affected.} */}\\
\For{\normalfont{each} $r^k \in R^x, \forall U_x \in \lambda$}{
    compute $\alpha^k$ and $\beta^k$ by Eq.~\ref{eq:alpha} and~\ref{eq:beta};\\
          
}
\For{\normalfont{each} $U_x \in \lambda$}{ 
    $\mathcal{C}_{s} = \varnothing$;\\

{\footnotesize\ttfamily/* {Identify the references in a CR.} */}\\
    \For{\normalfont{each} $\ctx^k  \in \ctxset^x, \forall U_x \in \lambda$}{
        \If{$\alpha^k \leq x  \leq \beta^k ~\wedge~ age^k < \asso$}{
            $\mathcal{C}_{s} = \mathcal{C}_{s} \cup \{\ctx^k\}$;\\
        }
    }

    $\mathcal{C} =\mathcal{C}  \cup \{ \mathcal{C}_s \} $;\\
}

\Return $\mathcal{C}$;\\

\end{algorithm}

Following the above, if a remote access occurs at $U_x$, a contention region can be constructed (denoted as $\CR_s$) that identifies the references that can be affected by examining the value of their $\alpha^k$ and $\beta^k$, \ie, ones satisfying $\alpha^k \leq x \leq \beta^k$. \add{For instance, $\mathcal{C}_3=\{r^1,r^2\}$ is constructed for $U_3$.}
Then, by traversing all $U_x \in \lambda$, a sequence of CRs (defined as $\mathcal{C}$) can be obtained if a remote access occurs at every $U_x$.

Alg.~\ref{alg:build_CR} presents the construction process of contention regions for a UR path $\lambda$ in $\tau$.
The algorithm starts by computing the $\alpha^k$ and $\beta^k$ of every $r^k$ using Eq.~\ref{eq:alpha} and Eq.~\ref{eq:beta} (lines 3-5).
Then, for every $U_x \in \lambda$, a $\CR_s$ is constructed by identifying the $r^k$ that could be affected by a remote access occurred at $U_x$, \ie, $\alpha^k \leq x \leq \beta^k \wedge age^k < \kappa$ (lines 6-15).  
Finally, the algorithm terminates with the $\CR$ returned (line 16).
As with Alg.~\ref{alg:sre}, the time complexity of Alg.~\ref{alg:build_CR} is quadratic. 
We note that several optimisations can be applied to reduce the number of CRs in $\CR$, by removing the empty CRs \add{(\eg, $\CR_1$ in Fig.~\ref{fig:CR})} and merging two consecutive identical CRs.
This effectively speeds up the constructed analysis, as described in Sec.~\ref{sec:ana_cost}.

To this end, we identify the set of $r^k$ that can be affected by a remote access occurring at a $U_x$. Similar to URs, $\CR_s \in \CR$ follows a strict temporal order, enabling the identification of contention scenarios that are mutually exclusive in time.



\newcommand{\nkx}{n^k_x}
\newcommand{\nk}{n^k}

\newcommand{\swx}{\sum_{r^k \in R_s(\omega)} n^k_x}
\newcommand{\sw}{N^s(\omega)}

\newcommand{\sx}{N^s_x}
\newcommand{\s}{N^s}

\vspace{-1pt}
\section{Analysing Inter-core Contention of a CR}
\label{sec:contention}

\begin{figure*}[t]
\centering
\subfigure[An illustrative example of $\CR_1$ and $U'$.]
{\label{fig:queue_a}
{\includegraphics[width=0.334\textwidth]{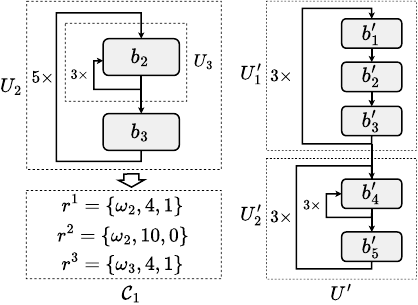}}}
\hspace{12pt}
\subfigure[A worst-case scenario.]
{\label{fig:queue_b}
{\includegraphics[width=.212\textwidth]{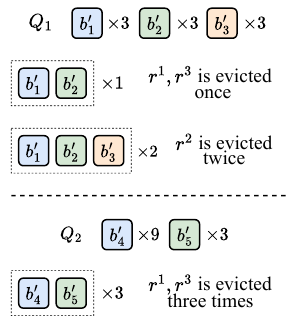}}}
\hspace{12pt}
\subfigure[Computation based on the aggregated queue.]
{\label{fig:queue_c}
{\includegraphics[width=.3531\textwidth]{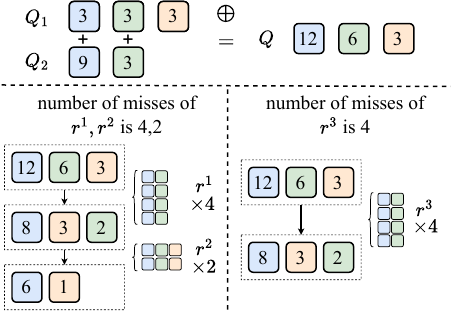}}}
\vspace{-2pt}
\caption{An illustrative example of the worst-case inter-core cache interference and the upper bound computations based on the aggregated queue \textit{(the cache associativity is $\asso = 3$; \add{the addresses in each queue are different, marked by different colours)}}.}
\vspace{-7pt}
\label{fig:queue}
\end{figure*}

With $\mathcal{C}_s\in\mathcal{C}$, this section analyses the cache misses number of references in $\CR_s$ imposed by a sequence of out-most URs in a $\lambda'$ of $\tau'$ (denoted as $U' \subseteq \lambda'$).
To achieve this, we first examine the situation where the references in $\CR_s$ are interfered by one  $U'_x \in U'$ (Sec.~\ref{sec:contention-oneU}), and then compute the maximum number of cache misses of $\CR_s$ imposed by $U'$ (Sec.~\ref{sec:contention-allU}). Notations introduced in this section are summarised in Tab.~\ref{tab:notations_contention}.

\begin{table}[t]
\vspace{4pt}
\centering
\caption{Notations introduced in Sec.~\ref{sec:contention}.}
\label{tab:notations_contention}
\renewcommand{\arraystretch}{1.1}
\begin{tabular}{p{.15\columnwidth}m{.74\columnwidth}}
\hline
\textbf{Notation} & \textbf{Description} \\
\hline

$\rho^k$ & The minimal number of unique block addresses accessed remotely that can cause $r^k$ to incur a cache miss. \\

$R_s(\omega)$ & The set of $r^k \in \CR_s$ with $\omega^k=\omega$ ordered non-decreasingly by $\rho^k$. \\
\hline
$Q_x$ & A list of access numbers to each unique address in $U'_x$. \\
$Q_x^v$ &  The $v^{\text{th}}$ element in $Q_x$.\\

$\Phi(Q_x, \{r^k\})$ & 
An operator that returns the number of iterations by taking one access from the first
$\rho^k$ addresses in $Q_x$ iteratively. \\
\hline
$\nkx$ & The number of misses of $r^k$ imposed by a $U'_x$. \\
$\nk$ & The number of misses of $r^k$ imposed $U'$. \\
$N_s^*(w)$ & Number of misses in $R_s(\omega)$ due to carry-on contention.\\
$N(\CR_s,U')$ & The number of misses in $\CR_s$ imposed by $U'$. \\
\hline
\end{tabular}
\end{table}

\subsection{Inter-core Contention Imposed by an Interfering UR}\label{sec:contention-oneU}

We first compute the number of cache misses of a $r^k \in \CR_s$ imposed by a $U'_x$ (denoted as $n^k_x$), which serves as the key for the constructed analysis. 
For a $r^k \in \CR_s$, Lemma~\ref{lem:miss} describes the case where $r^k$ can incur a cache miss due to the remote block accesses in an interfering UR $U'_x$.
\begin{lemma} \label{lem:miss}
For a $r^k$, it can incur a cache miss if there exists \add{one remote access to each of} $\rho^k = \kappa - age^k$ unique addresses.
\end{lemma}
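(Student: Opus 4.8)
\textbf{Proof proposal for Lemma~\ref{lem:miss}.}

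The plan is to argue directly from the LRU semantics together with the definition of the age of a memory reference. Recall that $age^k$ is, by construction (Sec.~\ref{sec:crm-ref}), the maximum reuse distance of the access captured by $r^k$ relative to the previous access to $\omega^k$ in the relevant scope: it counts the number of distinct block addresses that may intervene on the intra-core side. Under the LRU policy with associativity $\asso$, a block survives in the cache exactly as long as fewer than $\asso$ distinct addresses have been touched since its last access. So immediately before the access modelled by $r^k$, the block at $\omega^k$ sits at some cache line whose LRU-position is at most $age^k$ counting from the most-recently-used end (this is precisely the intra-core guarantee we are allowed to assume, from the Must/May and persistence analyses referenced earlier).

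First I would formalise the bookkeeping: starting from the state right after the previous access to $\omega^k$, the block is in position $0$ (most recently used). Each subsequent access to a \emph{new} distinct address — whether local (already counted in $age^k$) or remote — pushes $\omega^k$ one position further toward eviction; an access to an address already resident at a younger position does not age $\omega^k$ further. Hence the effective displacement of $\omega^k$ at the time of the $r^k$ access is bounded above by (intra-core contribution) $+$ (number of distinct remote addresses interleaved), and bounded below by that same quantity in the worst case where all these accesses are to mutually distinct, cache-resident-pushing addresses. The block is evicted — forcing $r^k$ to be a miss — as soon as this displacement reaches $\asso$.

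Next I would combine the two contributions. The intra-core part contributes at most $age^k$ to the displacement. Therefore it suffices for the remote side to contribute $\asso - age^k$ additional distinct addresses: if the interfering UR issues (at least) one access to each of $\rho^k = \asso - age^k$ distinct block addresses, and these are interleaved after the previous access to $\omega^k$ and before the access modelled by $r^k$, then the displacement of $\omega^k$ reaches $age^k + \rho^k = \asso$, so $\omega^k$ has been evicted and $r^k$ incurs a cache miss. Conversely — and this is the part that makes $\rho^k$ \emph{minimal} rather than merely sufficient — if strictly fewer than $\rho^k$ distinct remote addresses appear, the total displacement is at most $age^k + (\rho^k - 1) = \asso - 1 < \asso$, so there is an admissible schedule in which $\omega^k$ is still resident; thus no smaller number of remote addresses can guarantee a miss. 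I would state the lemma's claim as the ``sufficient'' direction and remark that minimality follows symmetrically.

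The main obstacle I anticipate is being careful about \emph{which} accesses count toward the displacement: remote accesses to addresses that are already cache-resident at a position younger than $\omega^k$ do not age $\omega^k$, and remote accesses to addresses that coincide with an intra-core address already counted in $age^k$ must not be double-counted. The clean way around this is to observe that $age^k$ is defined via \emph{unique} addresses, and to phrase the argument in terms of the set of distinct addresses touched since the last access to $\omega^k$: the displacement equals $\min(\asso, |S|)$ where $S$ is that set, the intra-core part forces $|S| \geq age^k$ in the worst trace, and each of the $\rho^k$ required remote addresses either already lies in $S$ (in which case the intra-core worst case already gave displacement $\geq \asso$, so $r^k$ was a miss even without remote interference and there is nothing to prove) or enlarges $S$ by one. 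Handling this case split explicitly is the only real subtlety; the rest is a one-line counting argument off the LRU invariant.
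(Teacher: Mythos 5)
Your main argument is correct and is exactly the straightforward LRU-counting reasoning the paper has in mind: the paper omits the proof of Lemma~\ref{lem:miss} as ``straightforward'', and your first three paragraphs supply precisely that argument (intra-core displacement at most $age^k$, each distinct remote address adds one, eviction once the total reaches $\asso$, hence $\rho^k=\asso-age^k$ suffices). One caveat: the case split in your final paragraph contains a non sequitur. If one of the $\rho^k$ remote addresses already lies in the set $S$ of distinct addresses touched since the last access to $\omega^k$, it does \emph{not} follow that ``the intra-core worst case already gave displacement $\geq\asso$''; with $|S|=age^k<\asso$ the block would simply survive, so that branch of your argument proves nothing. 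The subtlety you are trying to patch does not actually arise, because the lemma is a possibility statement (``can incur a cache miss'') used conservatively by the analysis: it suffices to exhibit the worst-case scenario in which the $\rho^k$ unique remote addresses are distinct from $\omega^k$ and from the intervening local addresses, in which case the reuse distance reaches $age^k+\rho^k=\asso$ and the access misses. With that branch removed (and likewise the ``minimality'' remark, which is not claimed by the lemma), your proof is the paper's intended one.
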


The proof of Lemma~\ref{lem:miss} is straightforward, and hence is omitted. 
Accordingly, to analyse the number of cache misses incurred by the $\delta^k$ accesses of $r^k$,
Eq.~\ref{eq:q} first constructs a \textit{non-zero, non-increasing} access queue that contains the number of accesses to every unique address in $U'_x$ (\ie, $\forall \omega\in\Omega'_x$). For instance, $Q_1 = \{3, 3, 3\}$ and $Q_2 = \{9, 3\}$ for $U'_1$ and $U'_2$ in Fig~\ref{fig:queue_a}. The $v^{\text{th}}$ element in $Q_x$ is denoted as $Q_x^v$.
\begin{equation}\label{eq:q}
Q_x = \{ \sum_{b'_i \in U'_x \land \omega_i=\omega}  \Pi_{U'_{l}\in U(b'_i)} \cnt_l ~|~ \forall \omega \in \Omega'_x \}
\end{equation}

Based on Lemma~\ref{lem:miss} and $Q_x$, \add{the
$\nkx$ is calculated as the number of iterations being performed by an iterative operation $\Phi(Q_x, \{r^k\})$, which takes one access from the first $\rho^k$ elements in $Q_x$ at each iteration (\ie, one iteration indicates a miss for $r^k$).} 
The $\Phi(Q_x, \{r^k\})$ terminates when $\nkx = \delta^k$ or $|Q_x| < \rho^k$.
For $r^1$ in Fig.~\ref{fig:queue_a}, $n^1_1$ is computed by taking $\{1, 1\}$ from first two elements in $Q_1=\{3,3,3\}$ iteratively, where $Q_1$ becomes $\{3, 2, 2\}$ and $\{ 2,2,1 \}$ after the first and the second iteration, respectively. Finally, the computation ends after the fourth iteration with $n^1_1=4$, where $Q_1=\{1,0,0\}$.
Theorem~\ref{the:n} provides a theoretical bound for $\nkx$.

\begin{theorem}\label{the:n}
For a $r^k$ and an interfering $U'_x$, it follows that $\nkx \leq \delta^k$ and $\nkx \times \rho^k \leq \sum_{v=1}^{|Q_x|} \min(Q_x^v, \nkx)$.
\end{theorem}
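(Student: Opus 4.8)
The plan is to analyze the iterative operator $\Phi(Q_x, \{r^k\})$ directly and track, as an invariant, how the total "mass" of the first few entries of the queue decreases. First I would fix notation: let $q_0 = Q_x$ and let $q_t$ denote the queue after $t$ iterations of $\Phi$, each iteration being well-defined only while $|q_{t-1}| \ge \rho^k$ and $t \le \delta^k$; by construction $\nkx$ is the number of iterations actually performed. The bound $\nkx \le \delta^k$ is immediate from the termination condition, so the real content is the second inequality $\nkx \cdot \rho^k \le \sum_{v=1}^{|Q_x|} \min(Q_x^v, \nkx)$.

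The key observation is that each iteration subtracts exactly $1$ from each of $\rho^k$ distinct entries of the current (sorted, non-increasing) queue, so after $\nkx$ iterations the total amount removed is exactly $\nkx \cdot \rho^k$. Hence $\nkx \cdot \rho^k = \sum_{v=1}^{|Q_x|} (Q_x^v - q_{\nkx}^v)$, where I extend both vectors with zeros to a common length if needed and note $q_{\nkx}^v \ge 0$ for all $v$ since we never take an access from a zero entry. Therefore it suffices to show $Q_x^v - q_{\nkx}^v \le \min(Q_x^v, \nkx)$ for every index $v$. The bound $Q_x^v - q_{\nkx}^v \le Q_x^v$ is trivial from $q_{\nkx}^v \ge 0$. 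For the bound $Q_x^v - q_{\nkx}^v \le \nkx$, the point is that any single position can lose at most one unit per iteration — each iteration decrements $\rho^k$ entries by $1$, so no entry drops by more than $1$ in a single step — and there are only $\nkx$ iterations; hence the cumulative decrease at position $v$ is at most $\nkx$. (Here one must be slightly careful because $\Phi$ operates on the sorted queue and resorting may permute which physical slot a value sits in; I would phrase the argument in terms of the multiset / sorted values, arguing that the $v$-th largest value decreases by at most $1$ per iteration, which follows from the fact that removing $1$ from $\rho^k$ of the values changes each order statistic by at most $1$.) Combining the two per-index bounds gives $Q_x^v - q_{\nkx}^v \le \min(Q_x^v, \nkx)$, and summing over $v$ yields the claim.

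The main obstacle I anticipate is making the "each order statistic decreases by at most $1$ per iteration" step fully rigorous, since $\Phi$ is defined on the sorted queue and the naive bookkeeping (track a fixed physical slot) does not directly work after resorting. The clean way is to prove a one-step lemma: if $q'$ is obtained from a non-increasing vector $q$ by choosing $\rho^k$ indices, decrementing each by $1$, and resorting, then $0 \le q^{(v)} - q'^{(v)} \le 1$ for every $v$, where $q^{(v)}$ is the $v$-th largest entry. This follows from a standard interlacing argument (the decremented vector is coordinatewise between $q - \mathbf{1}$ and $q$, so its sorted version interlaces accordingly), and then the telescoping sum over the $\nkx$ iterations gives $q_0^{(v)} - q_{\nkx}^{(v)} \le \nkx$. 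Everything else — the identity that total removed mass equals $\nkx \rho^k$, and nonnegativity of the final queue — is routine.
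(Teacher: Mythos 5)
Your proof is correct, and it establishes the bound by a different route than the paper. The paper's own argument never inspects the mechanics of $\Phi$: it reasons semantically that each of the $n^k_x$ misses consumes one access from each of $\rho^k$ \emph{distinct} addresses, so the accesses consumed from any single address are capped by $\min(Q_x^v, n^k_x)$ (at most one per miss, and at most the $Q_x^v$ available), and the total consumed is $n^k_x\times\rho^k$ --- summing the per-address caps gives the inequality in two lines. You instead prove the same double-counting identity as an invariant of the iterative operator: exact mass removed equals $n^k_x\,\rho^k$, the final queue is nonnegative, and each sorted entry drops by at most one per iteration, the last step requiring your one-step interlacing lemma because $\Phi$ re-sorts after every decrement. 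Both are sound; yours buys a fully rigorous operator-level argument (the paper's phrase ``the maximum number of accesses in $U'_x$ that can interfere'' is left at the intuitive level), while the paper's per-address view buys brevity. Note also that you could have avoided the interlacing lemma entirely by tracking entries by address identity rather than by sorted position: each address loses at most one access per iteration and ends nonnegative, which yields the per-entry cap $\min(Q_x^v, n^k_x)$ directly, and summing over the multiset of addresses is the same as summing over sorted indices $v$.
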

\begin{proof}
First, the condition $\nkx \leq \delta^k$ holds as $r^k$ has $\delta^k$ accesses.
Second, based on Lemma~\ref{lem:miss}, a cache miss of $r^k$ is caused by one access to $\rho^k$ addresses in $U'_x$.
Given number of accesses to each address in $U'_x$ (\ie, $Q_x$), the maximum number of accesses in $U'_x$ that can interfere $r^k$ is $\sum_{v=1}^{|Q_x|} \min(Q_x^v,\nkx)$. Hence, the theorem follows.
\end{proof}



The above computes the cache misses of a $r^k$.
For two references $r^a$ and $r^b$ in $\CR_s$, different interference scenarios can occur depending on whether they access the same address, as shown in Lemma~\ref{lemma:one_miss} and~\ref{lemma:all_miss}. 
Let $R$ denote a set of $r^k$ ordered non-decreasingly by $\rho^k$, $\Phi(Q_x, R)$ applies the above iterative process on each $r^k \in R$ in order, extracting accesses from the same $Q_x$ until no further accesses can be taken. 

\begin{lemma}\label{lemma:one_miss}
For $r^a$ and $r^b $ in $ \CR_s$ with $\omega^a = \omega^b$, the maximum possible number of cache misses due to $U'_x$ is $n^a_x + n^b_x = \Phi(Q_x, \{r^a, r^b\})$.
\end{lemma}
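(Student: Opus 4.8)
The plan is to argue that when $\omega^a = \omega^b$, the two references $r^a$ and $r^b$ compete for the *same* pool of remote accesses, so their combined miss count is governed by a single run of the iterative extraction operator $\Phi$ on the shared queue $Q_x$. First I would recall the mechanics of $\Phi(Q_x,\{r^k\})$ from the construction preceding Theorem~\ref{the:n}: each iteration removes one unit from each of the first $\rho^k$ entries of $Q_x$ (re-sorting to keep it non-increasing), and the number of completed iterations is exactly $n^k_x$. The key observation is that since $\omega^a=\omega^b$, any remote block address that gets "consumed" to cause a miss of $r^a$ is the *same kind of resource* that would cause a miss of $r^b$ — there is no disjointness between the two reference's interference requirements, so the accesses cannot be double-counted. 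Hence the correct joint bound is not $n^a_x + n^b_x$ computed independently (which would over-count), but the value produced by running $\Phi$ on $\{r^a,r^b\}$ against one copy of $Q_x$, i.e. $\Phi(Q_x,\{r^a,r^b\})$.

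The main steps I would carry out are: (i) \textbf{Upper bound.} Show $n^a_x + n^b_x \le \Phi(Q_x,\{r^a,r^b\})$ is false as an \emph{inequality in that direction} — rather, I must show the reverse is the tight characterization, namely that \emph{every} feasible contention scenario produces at most $\Phi(Q_x,\{r^a,r^b\})$ total misses. To do this I would take an arbitrary assignment of remote accesses (bounded per-address by the entries of $Q_x$) that causes $m_a$ misses of $r^a$ and $m_b$ misses of $r^b$, and show $m_a+m_b$ is dominated by the greedy extraction. By Lemma~\ref{lem:miss}, each miss of $r^a$ requires one access to each of some $\rho^a$ distinct addresses, similarly $\rho^b$ for $r^b$; since the references share an address and $\rho$-values are monotone in age, I'd order by $\rho$ and invoke an exchange/greedy argument: the worst case is realized by servicing the reference with the smaller $\rho$ first (cheaper misses), which is exactly what $\Phi$ with the non-decreasing ordering does. (ii) \textbf{Achievability.} Construct the explicit remote-access pattern realizing $\Phi(Q_x,\{r^a,r^b\})$ misses: run the operator, and whenever it extracts a unit from queue position $v$ "on behalf of" $r^a$ (resp. $r^b$), have the adversary actually perform that access between the relevant program points, so that the age of $r^a$ (resp. $r^b$) is pushed to $\kappa$. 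Conclude the bound is tight, hence "maximum possible."

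The hard part will be step (i): making the greedy/exchange argument rigorous, i.e. proving that no cleverer allocation of the shared access budget $Q_x$ across $r^a$ and $r^b$ beats the greedy $\Phi$. I expect this needs a small matroid-flavored or majorization lemma — that for fixed queue $Q_x$, the maximum of $m_a+m_b$ subject to "$m_a$ misses cost a sub-pattern of $\rho^a$-columns and $m_b$ misses cost $\rho^b$-columns, total per-column load $\le Q_x^v$" is achieved by the greedy order. I would either prove this directly by an adjacent-transposition exchange argument (swapping the order of two extractions never decreases the total) or by casting it as a flow/transportation problem and citing LP integrality. A secondary subtlety is the re-sorting of $Q_x$ inside $\Phi$: I must confirm that the greedy value is well-defined independent of tie-breaking, which follows because at each step the multiset of the top $\rho^k$ values is what matters, not their identities. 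Everything else — the per-reference cap $n^k_x \le \delta^k$ and the per-address cap — is inherited directly from Theorem~\ref{the:n} and Lemma~\ref{lem:miss}.
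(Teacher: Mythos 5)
Your central idea is exactly the paper's: because $\omega^a=\omega^b$, a single remote access in $U'_x$ can contribute to evicting at most one of $r^a$, $r^b$ (the block's age is reset as soon as it is re-accessed), so the two references draw on one shared copy of $Q_x$ and the joint count is given by $\Phi(Q_x,\{r^a,r^b\})$, which is precisely the two-sentence argument the paper gives. The additional machinery you plan—the exchange argument for optimality of the non-decreasing-$\rho$ greedy order and the explicit achievability construction—is not present in the paper, which treats the equality as immediate from the definition of $\Phi$, so your proposal is the same approach carried out with more (arguably welcome) rigor.
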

\begin{proof}
If $r^a$ and $r^b$ access the same address,
one access in $U_x'$ can only interfere with either $r^a$ or $r^b$, where the age of the block is updated immediately after the access.
Thus, with $\Phi(Q_x, R)$ defined, $ n^a_x + n^b_x = \Phi(Q_x, \{r^a, r^b\})$ follows.
\end{proof}

\begin{lemma}\label{lemma:all_miss} 
For $r^a$ and $r^b$ in $\CR_s$ with $\omega^a \neq \omega^b$, the maximum number of cache misses imposed by $U'_x$ is $n^a_x + n^b_x = \Phi(Q_x, \{r^a\}) + \Phi(Q_x, \{r^b\})$.
\end{lemma}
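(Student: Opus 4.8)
The plan is to establish the stated identity as two inequalities on the pairwise worst-case miss count of $\{r^a, r^b\}$ imposed by $U'_x$, mirroring the structure of the proof of Lemma~\ref{lemma:one_miss} but replacing its exclusive-choice observation with the opposite one: because $\omega^a \neq \omega^b$, a single remote access can count toward \emph{both} references at once. First I would record the easy direction. By Lemma~\ref{lem:miss}, a miss of $r^a$ requires a fresh supply of $\rho^a = \kappa - age^a$ unique remote addresses, and the largest number of such misses that $U'_x$ can produce is, by the definition of $n^a_x$ via the operator $\Phi$ and by Theorem~\ref{the:n}, exactly $n^a_x = \Phi(Q_x, \{r^a\})$; crucially this cap depends only on $Q_x$ and $\rho^a$, so it is unaffected by whether $r^b$ is also present (the accesses of $\tau$ to $\omega^b$ are already folded into the intra-core age $age^a$, and $U'_x$ itself is unchanged). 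The same holds for $r^b$, so the combined number of misses is at most $n^a_x + n^b_x$.

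For the matching lower bound I would construct an adversarial interleaving in which $r^a$ and $r^b$ simultaneously attain $n^a_x$ and $n^b_x$ misses. The key observation is that since $\omega^a \neq \omega^b$, every remote access to an address $\omega \notin \{\omega^a, \omega^b\}$ is a new unique block with respect to the current reuse window of $r^a$ \emph{and} with respect to that of $r^b$; hence, unlike in Lemma~\ref{lemma:one_miss}, no remote access is consumed by one reference at the expense of the other. Concretely, I would take the worst-case remote trace that realises $\Phi(Q_x, \{r^a\})$ misses of $r^a$ in isolation, interleave into it the $\delta^b$ local accesses to $\omega^b$ together with a suitable ordering of the \emph{same} remote accesses, and argue that $r^b$ then reaches $\Phi(Q_x, \{r^b\})$ misses as well — using that inserting a local access to $\omega^a$ or $\omega^b$ only ever lengthens the other reference's reuse distance, never shortens it. Combining the two directions gives $n^a_x + n^b_x$ as the exact pairwise maximum, and unfolding $n^k_x = \Phi(Q_x, \{r^k\})$ yields the displayed formula.

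The step I expect to be the main obstacle is making this simultaneous realisation rigorous when $\rho^a \neq \rho^b$ and the two reuse windows close at different program points (an access to $\omega^a$ resets $r^a$'s window but leaves $r^b$'s open, and symmetrically), so that the rounds of remote accesses demanded by the two $\Phi$ computations have different sizes and misaligned boundaries. I would handle this with an exchange argument: order the remote accesses of $U'_x$ by non-increasing multiplicity — the order implicit in the definition of $\Phi$ — and show that any global trace can be transformed, without decreasing the miss count of either reference, into this canonical form with every local access pushed as late as possible; in the canonical trace $r^a$ and $r^b$ attain exactly $\Phi(Q_x, \{r^a\})$ and $\Phi(Q_x, \{r^b\})$ misses, which is also the per-address form in which Sec.~\ref{sec:contention-allU} will aggregate the contributions.
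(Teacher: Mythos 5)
Your proof is correct and rests on the same key observation as the paper's: since $\omega^a \neq \omega^b$, the two blocks can reside in the shared cache at the same time and every remote access in $U'_x$ increases the age of both simultaneously, so the per-reference miss counts add with no exclusivity between $r^a$ and $r^b$. The paper's proof is essentially just this one-line observation; your two-inequality structure (the cap $\Phi(Q_x,\{r^k\})$ being independent of the other reference, plus the exchange-argument construction showing simultaneous attainability) is a more detailed write-up of the same argument rather than a genuinely different route.
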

\begin{proof}
If $\omega_a \neq \omega_b$, the blocks of $r^a$ and $r^b$ are \add{present} in the shared cache at the same time. Therefore, a remote access in $U'_x$ can affect the age of both blocks simultaneously, and hence, the lemma follows.
\end{proof}

Following Lemma~\ref{lemma:one_miss}, 
let $R_s(\omega)= \{ r^k | \omega^k = \omega, \forall r^k \in \CR_s \}$ denote the references in $\CR_s$ that have the same block address,
the number of cache misses that $R_s(\omega)$ can incur due to the interference of $U'_x$ is computed as Eq.~\ref{eq:swx}. 
\begin{equation}\label{eq:swx}
\swx =  \Phi(Q_x, R_s(\omega))
\end{equation}

Furthermore, the maximum number of cache misses of a $\CR_s$ imposed by a $U'_x$ can be computed as $\sum_{\omega \in \Omega_s}\swx$, based on Eq.~\ref{eq:swx} and Lemma~\ref{lemma:all_miss}.

\subsection{Inter-core Contention Imposed by Multiple Interfering URs} \label{sec:contention-allU}

The above produces the number of misses in a $\CR_s$ imposed by a $U'_x$. However, for a sequence of interfering URs (\ie, $U'$), simply applying $\Phi(Q_x, R_s(\omega))$ over every $U'_x \in U'$ cannot yield a correct bound, as illustrated by Example~\ref{example:worst_wrong}.
\begin{example}\label{example:worst_wrong}
\normalfont
\add{ Fig.~\ref{fig:queue_a} shows a CR of $\tau$ (\ie, $\CR_1 = \{r^1, r^2, r^3\}$) and the interfering URs $U' = \{U'_1, U'_2\}$.}
A direct application of $\Phi(Q_x, R_s(\omega))$ on $U'_1$ and $U'_2$ would lead to the following results.
\add{First, as $r^1$ and $r^2$ have the same address while $r^3$ has a different address, we have $n_x^1 + n_x^2 + n_x^3 = \Phi(Q_x, \{r^1, r^2\}) + \Phi(Q_x, \{r^3\})$ based on Eq.~\ref{eq:swx}.}
For $U'_1$, we have $\Phi(Q_1, \{r^1,r^2\}) = 4$ and $\Phi(Q_1, \{r^3\}) = 4$. As for $U'_2$, it cannot impose any cache miss as (i) every access in $r^1$ and $r^3$ is already accounted for and (ii) $U'_2$ cannot cause a cache miss on $r^2$ since $|Q_2|< \rho^2$. This leads to eight cache misses.
However, as shown in Fig.~\ref{fig:queue_b}, a worst-case scenario occurs where $U'_1$ causes one miss for both $r^1$ and $r^3$ while imposing two misses for $r^2$. Then, $U'_2$ imposes three misses for both $r^1$ and $r^3$. This leads to ten cache misses in total.
\end{example}

As illustrated by the example, the worst case occurs when $r^1$ and $r^3$ are interfered by $U'_1$ and $U'_2$ in a collaborative manner.
Let $\overline{\nkx}$ denote the number of misses of $r^k$ caused by $U'_x$ that leads to the maximum $\sum_{U'_x\in U'}\sum_{r^k \in R_s(\omega)} \overline{\nkx}$, Eq.~\ref{eq:bound_true} provides the theoretical bound for the total number of misses.
Therefore, finding every $\overline{\nkx}$ is an optimisation problem which can be solved by integer linear programming. 
Yet, this can bring significant computation cost that undermines the application of the analysis, especially when $|\CR_s|$ and $|U'|$ are high.
\begin{equation}\label{eq:bound_true}
\small
\sum_{U'_x\in U'}\sum_{r^k \in R_s(\omega)}\overline{\nkx} \times \rho^k
\;\le\;
\sum_{U'_x\in U'}\sum_{v = 1}^{|Q_x|} \min(Q^v_x, \sum_{r^k \in R_s(\omega)} \overline{\nkx})
\end{equation}

However, we notice that as the URs in $U'$ are executed sequentially in order, 
such collaborative contention effects from different $U'_x$ can be effectively computed by aggregating the $Q_x$ of every $U'_x \in U'$ using an element-wise addition, denoted as $Q = \bigoplus_{U'_x \in U'} Q_x$. For instance, $Q = Q_1 \oplus Q_2 = \{12, 6, 3\}$ in Fig.~\ref{fig:queue_c}. 
Below we first present the computations based on $Q$ and then prove the correctness of the analysis.

Based on the aggregated $Q$ from every $U'_x \in U'$, 
the number of cache misses incurred by $R_s(\omega)$ due to $U'$ can be computed as $\Phi(Q, R_s(\omega))$.  
This effectively abstracts away the computation order of every $U'_x \in U'$, forming a holistic approach that analyses the number of misses that $r^k$ can incur due to $U'$. 
Let $\nk$ denote the number of misses of $r^k$ obtained by $\Phi(Q, R_s(\omega))$ for every 
$r^k \in R_s(\omega)$, Eq.~\ref{eq:bound_upper} provides the  bound on the maximum value of $\sum_{r^k\in R_s(\omega)} \nk$. \add{Example~\ref{ex:aggregated_queue} illustrates the computations of $n^k$ according to $\Phi(Q, R_s(\omega))$.}
\vspace{-2pt}
\begin{equation}\label{eq:bound_upper}
\sum_{r^k\in R_s(\omega)} \nk \times \rho^k \leq \sum_{v=1}^{|Q|} \min \Big(Q^v, \sum_{r^k\in R_s(\omega)} \nk \Big)
\end{equation}

\begin{example}\label{ex:aggregated_queue}
    \normalfont
\add{
Fig.~\ref{fig:queue_c} illustrates the computations of $n^k$ based on the aggregated queue $Q = Q_1 \oplus Q_2 = \{12, 6, 3\}$, where $n^1 + n^2 + n^3 = \Phi(Q, \{r^1, r^2\}) + \Phi(Q, \{r^3\})$.
For $r^1$ and $r^2$ (with an age of 0 and 1, respectively), $r^1$ is examined first as only two unique interfering blocks can cause an eviction. Thus, $\Phi(Q,\{r^1,r^2\})$ is performed that subtracts one from the first two elements in $Q$ iteratively, where $Q = \{12, 6, 3\}$ is updated to $Q = \{8, 3, 2\}$ after 4 iterations, indicating that all four accesses of $r^1$ can be evicted. 
Then, a similar process is applied for $r^2$, which requires three unique interfering blocks to be evicted. Therefore, only 2 iterations can be performed, with $Q = \{8, 3, 2\}$ updated to $Q = \{6, 1\}$ (\ie, $r^2$ can be evicted twice). Hence, we have $\Phi(Q,\{r^1,r^2\})=6$. Following the similar process, we have $\Phi(Q, \{r^3\})=4$. Therefore, the total number of cache misses is $n^1 + n^2 + n^3 = 10$.}
\end{example}

Theorem~\ref{the:bound_sw} describes that $\sum_{r^k\in R_s(\omega)} \nk=\Phi(Q, R_s(\omega))$ bounds the number of misses in a $\CR_s$ due to each $U'_x \in U'$.
\begin{theorem}\label{the:bound_sw}
For a given $R_s(\omega)$ and $U'$, it follows that
$\sum_{r^k\in R_s(\omega)} \nk \geq   \sum_{U'_x \in U'} \sum_{r^k \in R_s(\omega)} \overline{\nkx}$.
\end{theorem}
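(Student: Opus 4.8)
The plan is to show that evaluating $\Phi$ on the aggregated queue never undershoots the optimal per-UR allocation, by reducing the claim to a single position-wise inequality between $Q=\bigoplus_{U'_x\in U'}Q_x$ and the individual $Q_x$. Write $m^k=\sum_{U'_x\in U'}\overline{\nkx}$ for the total optimal misses charged to a reference $r^k\in R_s(\omega)$, $S_x=\sum_{r^k\in R_s(\omega)}\overline{\nkx}$ for the misses charged to $R_s(\omega)$ by $U'_x$, and $S=\sum_{r^k\in R_s(\omega)}m^k=\sum_{U'_x\in U'}S_x$. Since $r^k$ has only $\delta^k$ accesses, any valid miss count satisfies $m^k\le\delta^k$. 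The target inequality $\sum_{r^k\in R_s(\omega)}\nk\ge\sum_{U'_x\in U'}\sum_{r^k\in R_s(\omega)}\overline{\nkx}$ is precisely $\Phi(Q,R_s(\omega))\ge S$, so it suffices to prove that the allocation $\{m^k\}$ is realizable by $\Phi$ on $Q$.

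First I would recall the characterization underlying $\Phi$ and Eq.~\ref{eq:bound_upper}: if an allocation $\{t^k\}$ over references sharing address $\omega$, with each $t^k\le\delta^k$, satisfies $\sum_{k}t^k\rho^k\le\sum_{v}\min(\hat Q^v,\sum_k t^k)$ on a queue $\hat Q$, then $\Phi(\hat Q,R_s(\omega))\ge\sum_k t^k$. The reason is that by Lemma~\ref{lemma:one_miss} one remote access is charged to at most one reference per round, there are only $\sum_k t^k$ rounds, each round for $r^k$ consumes $\rho^k$ distinct entries, and an entry with $c$ accesses is usable at most $\min(c,\sum_k t^k)$ times; moreover $\Phi$ — serving references in non-decreasing $\rho^k$, largest entries first — attains the largest total over all realizable allocations. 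Hence it is enough to verify $\sum_{r^k\in R_s(\omega)}m^k\rho^k\le\sum_{v}\min(Q^v,S)$.

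This I would obtain directly from Eq.~\ref{eq:bound_true}, whose left side is $\sum_{r^k\in R_s(\omega)}m^k\rho^k$ and whose right side equals $\sum_{v}\sum_{U'_x\in U'}\min(Q_x^v,S_x)$ (recall $\sum_{r^k\in R_s(\omega)}\overline{\nkx}=S_x$). Using $Q^v=\sum_{U'_x\in U'}Q_x^v$ and $S=\sum_{U'_x\in U'}S_x$ from the construction of the aggregated queue, together with the elementary bound $\min(a,b)+\min(c,d)\le\min(a+c,\,b+d)$, we get $\sum_{U'_x\in U'}\min(Q_x^v,S_x)\le\min\big(\sum_{U'_x\in U'}Q_x^v,\ \sum_{U'_x\in U'}S_x\big)=\min(Q^v,S)$ for every position $v$. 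Combining this with Eq.~\ref{eq:bound_true} and summing over $v$ yields
\begin{align*}
\sum_{r^k\in R_s(\omega)}m^k\rho^k
&\le \sum_{v}\sum_{U'_x\in U'}\min(Q_x^v,S_x)\\
&\le \sum_{v}\min(Q^v,S),
\end{align*}
which is exactly the realizability condition for $\{m^k\}$ on $Q$. Together with $m^k\le\delta^k$, this gives $\Phi(Q,R_s(\omega))=\sum_{r^k\in R_s(\omega)}\nk\ge S=\sum_{U'_x\in U'}\sum_{r^k\in R_s(\omega)}\overline{\nkx}$, proving the theorem.

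The step I expect to be the real obstacle is the realizability claim invoked in the second paragraph: turning the single aggregate inequality (plus the caps $\delta^k$) into a guarantee under $\Phi$'s sequential, ordered, greedy regime. Making this rigorous needs an exchange/invariant argument — that drawing the currently processed reference's rounds from the largest residual entries preserves the water-filling condition for the remaining, larger-$\rho^k$ references, so capacity a later reference requires is never stranded — together with an optimality argument for the non-decreasing-$\rho^k$ processing order. A companion subtlety worth making explicit is that $Q=\bigoplus_{U'_x\in U'}Q_x$ over-approximates the genuine concurrent eviction capacity regardless of how $\task'$'s addresses in different URs overlap; this is precisely what permits merging the per-UR terms of Eq.~\ref{eq:bound_true} without tracking address identity across URs, and it is where the safety of the bound ultimately resides.
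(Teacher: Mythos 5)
Your argument is essentially the paper's own proof written in contrapositive form: the paper assumes $\sum_{r^k\in R_s(\omega)} n^k < \sum_{U'_x\in U'}\sum_{r^k\in R_s(\omega)} \overline{n^k_x}$, invokes the maximality of $\Phi$'s output with respect to Eq.~\ref{eq:bound_upper}, and contradicts Eq.~\ref{eq:bound_true} using exactly the element-wise aggregation $Q^v=\sum_{U'_x\in U'}Q^v_x$ and the superadditivity $\sum_{U'_x\in U'}\min(Q^v_x,S_x)\le\min(Q^v,S)$ that you use. The ``realizability'' lemma you single out as the real obstacle is precisely the step the paper asserts without proof (``$\sum_{r^k} n^k$ is the maximum value that satisfies the condition''), so your proposal matches the paper both in route and in where the rigorous justification stops.
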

\begin{proof}
We prove this theorem via contradiction by assuming that $\sum_{r^k\in R_s(\omega)} \nk < \sum_{U'_x \in U'} \sum_{r^k \in R_s(\omega)}  \overline{\nkx}$. In this case, the condition in Eq.~\ref{eq:bound_upper} no longer holds if $ \nk$ is replaced by $\sum_{U'_x \in U'} \overline{\nkx}$, because $\sum_{r^k \in R_s(\omega)} \nk$ is the maximum value that satisfies the condition. This leads to the following computations, where the last derivation holds as ``$\min\sum \geq \sum\min$".
\begin{equation*}
\small
\begin{split}
 \sum_{U'_x \in U'} \sum_{r^k \in R_s(\omega)}\overline{\nkx} \times &\rho^k
 > \sum_{v = 1}^{|Q|} \min(Q^v, \sum_{U'_x \in U'}\sum_{r^k \in R_s(\omega)}  \overline{\nkx}) \\
&~~~ = \sum_{v=1}^{|Q|} \min( \sum_{U'_x \in U'} Q^v_x, \sum_{U'_x \in U'}\sum_{r^k \in R_s(\omega)} \overline{\nkx}) \\
&~~~ \ge \sum_{U'_x \in U'} \sum_{v=1}^{|Q_x|} \min(Q^v_x, \sum_{r^k \in R_s(\omega)}  \overline{\nkx} )
\end{split}
\end{equation*}
\normalsize

However, this contradicts with relationship depicted in Eq.~\ref{eq:bound_true} for $\sum_{U'_x\in U'}\sum_{r^k \in R_s(\omega)} \overline{\nkx}$. Thus, the situation of
$\sum_{r^k\in R_s(\omega)} \nk < \sum_{U'_x \in U'}\sum_{r^k \in R_s(\omega)}  \overline{\nkx}$ does not exist, and therefore, the theorem holds.
\end{proof}

\textbf{Carry-on Contention.} The above presents an efficient approach that considers the collaborative contention effects from $U'_x \in U'$ on $\CR_s$, in which an access of $r^k \in \CR_s$ can be evicted by any $U'_x \in U'$. However, it is performed by assuming the contentions caused by the $U'_x \in U'$ are independent of each other. 
By doing so, no cache misses would be identified for a $r^k$ if every $U'_x$ satisfies $\rho^k > |Q_x|$.
In reality, 
although $U'_x$ is not able to cause an eviction directly, 
it can increase the age of $r^k$ by $|Q_x|$.
Thus, for next UR (say $U'_{x+1}$), it only needs to access $\rho^k - |Q_{x}|$ unique addresses to impose a cache miss for $r^k$.
Example~\ref{eg:carry} shows the carry-on contention imposed from the consecutive interfering URs on $r^1$ and $r^3$ in Fig.~\ref{fig:queue_a}.

\begin{figure}
    \centering
    \includegraphics[width=0.7\linewidth]{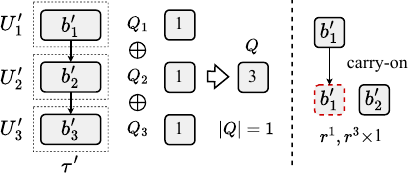}
    \caption{A carry-on contention imposed from $U'_1$ and $U'_2$ on $r^1$ and $r^3$ from Fig.~\ref{fig:queue_a} with $age^1 = age^3 = 1$ and $\kappa=3$.}
    \label{fig:carry}
    \vspace{-7pt}
\end{figure}

\begin{example}\label{eg:carry}
\normalfont
In Fig.~\ref{fig:carry}, neither $U'_x$ contain a sufficient number of addresses that can cause a cache miss on $r^1$ and $r^3$ directly. However, $U'_1$ can increase the age of $r^1$ and $r^3$ by one as $|Q_1|=1$, \ie, $age^1 = age^3 = 2$ now.
Then, $U'_2$ can impose one cache miss for these references by one block access.
However, with the aggregated queue $Q = \{3\}$, the computation based on $Q$ yields zero cache miss since $|Q| = 1$, which fails to consider such carry-on contention effects between consecutive URs.
\end{example}

However, as the carry-on contention can occur at most once between two consecutive URs in $U'$, the maximum number of cache misses caused by such contention effects on $R_s(\omega)$ can be effectively bounded as $N_s^*(w) =min\{\sum_{r^k \in R_s(\omega)} (\delta^k - \nk),|U'|-1\}$, where $(\delta^k-n^k)$ is the number of accesses in $r^k$ that have not incurred a cache miss yet.
We note that a finer analysis of the carry-on contention is possible, \eg, by examining the unique addresses in each UR. However, we decide to halt here to avoid over-complicated computations.

Based on the above, the number of cache misses incurred by the references in $\CR_s$ due to the contention from $U'$ can be obtained (denoted as $N(\CR_s,U')$), as shown in Eq.~\ref{Eq:totalbound}.
\begin{equation}\label{Eq:totalbound}
N(\CR_s,U')=
\sum_{\omega \in \Omega_s} \Big(
\Phi(Q, R_s(\omega))+ N_s^*(w) \Big)
\end{equation}

This concludes the proposed cache contention analysis for the memory references in a $\CR_s$ when facing a sequence of interfering URs. By examining the block address of each references, the proposed analysis identifies finer contention scenarios that are overlooked by existing methods.
In addition, a precise quantification of the number of cache misses is provided based on the ages and access numbers of blocks, leading to tighter analytical results (see Sec.~\ref{sec:result}).


\section{Complete Inter-core Cache Analysis}\label{sec:multi}

Based on the analysis constructed for a $\CR_s \in \CR$ and a sequence of interfering URs $U'$ (\ie, $N(\CR_s, U')$), this section presents the complete inter-core cache contention analysis that produces the maximum number of misses of a task imposed by the interfering ones.
We first describe the computations for a task $\tau$ and an interfering task $\tau'$ (Sec.~\ref{sec:multi-dual}). 
Then, we show that the proposed analysis can effectively support more generalised scenarios which contain a multi-level set-associative cache hierarchy shared among multiple cores (Sec.~\ref{sec:multi-multi}).
Notations introduced in this section are summarised in Tab.~\ref{tab:notations_multi}. 

\subsection{Bounding the inter-core latency of $\tau$ imposed by $\tau'$}\label{sec:multi-dual}

As described, both $\tau$ and $\tau'$ can be represented by a CFG that contains a set of execution paths. 
Given the sequence of CRs in a path of $\tau$ (\ie, $\CR$) and the sequence of URs in a path of $\tau'$ (\ie, $\lambda'$), the number of cache misses of $\CR$ imposed by $\lambda'$ can be computed as $\sum_{\CR_s \in \CR} N(\CR_s, \lambda')$ based on the analysis in Sec.~\ref{sec:contention}.
However, this can lead to overly pessimistic results due to the following two reasons. First, both the CRs in $\CR$ and the URs in $\lambda'$ follow a strict temporal order during execution. Hence, contentions that are mutually exclusive in time can be accounted for without considering their partial order~\cite{zhang2022precise}. Second, For the CRs in $\CR$, they may contain the same $r^k$ (\eg, all $\CR_2$, $\CR_3$ and $\CR_4$ contain the same $r^3$ in Fig.~\ref{fig:CR}) with the number of misses being accounted for multiple times. 

To address the above, we construct an analysis that identifies contention scenarios that satisfy the partial order between $\CR$ and $\lambda'$.
In addition, when examining every $\CR_s \in \CR$, we identify the $r^k$ in which every access incurs a cache miss, avoiding the repetitive computations on the same $r^k$ across different CRs.

\begin{table}[t]
\centering
\caption{Notations Introduced in Sec.~\ref{sec:multi}.}
\label{tab:notations_multi}
\renewcommand{\arraystretch}{1.1}
\begin{tabular}{p{.12\columnwidth}m{.78\columnwidth}}
\hline
\textbf{Notation} & \textbf{Description} \\
\hline

$\check{x}_s$~/~$\hat{x}_s$ & Index of the first / last UR in a UR sequence that affects $\CR_s$. \\
${V}_{s,\hat{x}}^h$ & A state of $\CR_s$ with an index of $h$ when $\{\CR_1,\cdots,\CR_s\}$ is interfered by $\{U'_1,\cdots,U'_{\hat{x}}\}$. \\
$\zeta_s^h$ & Set of $r^k$ in $h^{\text{th}}$ state of $\CR_s$ that every access incurs a miss. \\
$V_s$ & The set of all states for $\CR_s$. \\
\hline
\end{tabular}
\vspace{-7pt}
\end{table}

For each $\CR_s$, it can incur contentions from a sequence of consecutive URs in $\lambda'$. Let $\check{x}_s$ and $\hat{x}_s$ denote the indices of the first and the last UR of the UR sequence that interfere with $\CR_s$, Lemma~\ref{lem:boundary} describes the relationship of $\check{x}_s$ and $\hat{x}_s$ across different $\CR_s$ based on the partial order.
\begin{lemma}\label{lem:boundary}
For $\CR_{s-1}$ and $\CR_s$ in $\CR$, it follows $\hat{x}_{s-1} \le \check{x}_s \le \hat{x}_s$. 
\end{lemma}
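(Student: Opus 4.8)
\textbf{Proof proposal for Lemma~\ref{lem:boundary}.}

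The plan is to establish the two inequalities $\hat{x}_{s-1} \le \check{x}_s$ and $\check{x}_s \le \hat{x}_s$ separately, both by appealing to the temporal ordering that the CR construction (Sec.~\ref{sec:crm-region}) and the UR path $\lambda'$ inherit. The second inequality, $\check{x}_s \le \hat{x}_s$, is essentially definitional: since $\check{x}_s$ is the index of the \emph{first} UR in the interfering sequence and $\hat{x}_s$ the \emph{last}, a nonempty sequence forces $\check{x}_s \le \hat{x}_s$; I would only need to note that $\CR_s$ is always interfered by a contiguous (possibly length-one) block of URs in $\lambda'$, so the sequence is well-defined.

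The substantive part is $\hat{x}_{s-1} \le \check{x}_s$. First I would recall that each $\CR_s$ corresponds to the program region $U_s$ in $\lambda$ (the CR indexed $s$ is built when a remote access occurs at $U_s$), and that the URs in $\lambda$ and the URs in $\lambda'$ of $\tau'$ both execute in a strict sequential order. The interfering window $[\check{x}_s, \hat{x}_s]$ of $\CR_s$ is the set of URs of $\tau'$ that can run concurrently with (i.e.\ overlap in time with) the references collected in $\CR_s$. Since $\CR_{s-1}$ precedes $\CR_s$ in the program order of $\tau$, any UR of $\tau'$ that finishes before $\CR_{s-1}$'s references are accessed cannot interfere with $\CR_{s-1}$, and any UR of $\tau'$ that can still be running when $\CR_s$ begins must start no earlier than the last UR overlapping $\CR_{s-1}$. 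Concretely, I would argue: if $U'_j$ interferes with $\CR_{s-1}$, then $U'_j$ overlaps the execution interval of $\CR_{s-1}$, which ends no later than the execution interval of $\CR_s$ begins; hence the earliest UR $U'_{\check{x}_s}$ that overlaps $\CR_s$ satisfies $\check{x}_s \ge j$ for at least the largest such $j$, i.e.\ $\check{x}_s \ge \hat{x}_{s-1}$. This is the same partial-order reasoning used in~\cite{zhang2022precise} to discard temporally impossible contention scenarios, lifted from the UR level to the CR level.

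I expect the main obstacle to be making the notion of ``interfering UR sequence'' precise enough that the contiguity and the boundary relations are not merely asserted. In particular I would need to handle the degenerate cases cleanly: when $\CR_{s-1}$ or $\CR_s$ is empty or is only interfered by a single UR (so $\check{x}_s = \hat{x}_s$), and when consecutive CRs share the same interfering window because they collect overlapping references (e.g.\ $\CR_2$, $\CR_3$, $\CR_4$ all containing $r^3$ in Fig.~\ref{fig:CR}) — in that situation $\hat{x}_{s-1} = \check{x}_s$ and the inequality is tight, which is exactly why the statement uses $\le$ rather than $<$. A careful treatment would define the execution interval of $\CR_s$ as the span from the first to the last reference it contains in $\lambda$'s order, observe that these intervals are ordered (and possibly adjacent or touching) across $s$, and then transfer this ordering to the $\tau'$ side via the assumption that $\tau$ and $\tau'$ each execute their URs sequentially. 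Once the intervals are set up, the inequality chain follows by monotonicity, and the lemma is proved.
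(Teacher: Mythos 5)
Your overall strategy is the intended one: the paper itself omits the proof and defers to the partial-order argument of Zhang2022, and your two-step plan (the inequality $\check{x}_s \le \hat{x}_s$ being definitional for a contiguous window, and $\hat{x}_{s-1} \le \check{x}_s$ following from the sequential execution of the URs of $\tau$ and of $\tau'$) is exactly that argument lifted to the CR level. However, the formalization you sketch has one step that fails as written. You define the interfering window of $\CR_s$ as the URs of $\tau'$ that can run concurrently with \emph{the references collected in} $\CR_s$, and later propose to take the ``execution interval'' of $\CR_s$ to be the span from its first to its last reference, claiming these intervals are ordered and at most touching across $s$. That claim is false under your own definition: consecutive CRs can share references (your own example, $r^3$ appearing in $\CR_2$, $\CR_3$ and $\CR_4$), and a reference $r^k$ belongs to $\CR_s$ whenever $s$ lies anywhere in its window $[\alpha^k,\beta^k]$, so the reference spans of $\CR_{s-1}$ and $\CR_s$ can properly overlap rather than merely touch. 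With overlapping spans, the key sentence ``the execution interval of $\CR_{s-1}$ ends no later than the execution interval of $\CR_s$ begins'' does not hold, and the monotonicity chain you build on it collapses.

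The fix is the observation you make in passing but then abandon: $\CR_s$ is anchored to the single program region $U_s$ at which the remote access is assumed to occur, not to the span of the references it collects. The out-most URs of $\lambda$ are disjoint and execute in strict order, so the intervals of $U_{s-1}$ and $U_s$ (not of the reference spans) are ordered; combining this with the strict order of the URs in $\lambda'$ gives that any $U'_j$ overlapping $U_{s-1}$ must satisfy $j \le i$ for every $U'_i$ overlapping $U_s$ (if $i<j$ then $U'_i$ finishes before $U'_j$ starts, $U'_j$ starts before $U_{s-1}$ ends, and $U_s$ starts after $U_{s-1}$ ends, so $U'_i$ cannot overlap $U_s$). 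This yields $\hat{x}_{s-1} \le \check{x}_s$ directly, with equality possible when one remote UR straddles the boundary between $U_{s-1}$ and $U_s$ — which is exactly why the lemma uses $\le$. If you restate your interval argument with $U_s$ as the anchor, the rest of your proposal goes through.
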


A similar proof of Lemma~\ref{lem:boundary} can be found in~\cite{zhang2022precise}, and hence, is omitted in this paper.
Given the constraints of $\check{x}_s$ and $\hat{x}_s$ described in Lemma~\ref{lem:boundary}, 
the problem of finding the maximum number of cache misses of $\CR$ caused by $\lambda'$ can be formulated as an optimisation problem over all possible values of $\check{x}_s$ and $\hat{x}_s$ for every $\CR_s \in \CR$, as shown in Eq.~\ref{eq:optimise}, in which $\{  U'_{\check{x}_s}, \ldots, U'_{\hat{x}_s} \}$ is the sequence of interfering URs in $\lambda'$ that can impose contentions on $\CR_s$. 
\begin{equation} \label{eq:optimise}
    \begin{aligned}
    & {\text{Maximise }} 
    & & \sum_{\CR_s \in \CR} N(\CR_s, \{ U'_{\check{x}_s}, \ldots, U'_{\hat{x}_s} \}) \\
    & \text{on} 
    & &  \check{x}_{s},\hat{x}_{s} \in [1,|\lambda'|] ,~ \forall \CR_{s}\in \CR\\
    & \text{subject to} 
    & &  \check{x}_{s} \leq \hat{x}_{s} \land \hat{x}_{s-1} \leq\check{x}_{s}, ~\forall \CR_{s-1}, \CR_{s}\in \CR
    \end{aligned}
\end{equation}

Similar to the analysis in~\cite{zhang2022precise}, this optimisation problem can be solved via dynamic programming. By guaranteeing the conditions of $\check{x}_{s}$ and $\hat{x}_s$ in Lemma~\ref{lem:boundary} for every $\CR_s \in \CR$, only the feasible contentions will be accounted for during the computations~\cite{zhang2022precise}. This effectively avoids the pessimism in the first point described above. 
In addition, $\CR_{s-1}$ and $\CR_s$ may contain the same $r^k$ (\ie, $\CR_{s-1} \cap \CR_{s} \neq \varnothing$). For a $r^k \in \CR_{s-1} \cap \CR_{s}$ where every access incurs a cache miss in $\CR_{s-1}$, it should not be considered again when examining $\CR_s$.
This can be addressed in the construction of the analysis, as described below.

\SetKwInOut{Initialise}{Initialise}
\begin{algorithm}[t]
\caption{Constructing Inter-core Cache Analysis.}
\label{alg:dpmax}
\KwIn{$\forall \CR_s \in \CR$, $\forall U'_x \in \lambda'$;}


{\footnotesize\ttfamily/* {Create the states for $\CR_1$} */}\\

\For{\normalfont{each} $\hat{x} \in [1,~|\cp'|]$}{
    $U'=\{U'_1,...,U'_{\hat{x}}\}$; \\

    {\footnotesize\ttfamily/* {The $h^{th}$ state of $\CR_1$} */}\\
    $V_{1,\hat{x}}^h= N(\CR_1, U')$;~~~$V_1 = V_1 \cup \{ V_{1,\hat{x}}^h \}$;\\
    {\footnotesize\ttfamily/* {Identify refs. with full misses} */}\\
    $\zeta_{1}^h = \{r^k \in \CR_1~|~ n^k= \cnt^k\}$;\\
}

{\footnotesize\ttfamily/* {Create states for $\CR_s$ based on $V_{s-1}$} */}\\
\For{\normalfont{each} $s \in [2,~|\mathcal{C}|]$}
{
    \For{\normalfont{each} $V_{s-1, x}^l \in  V_{s-1}$}
    {
            \For{\normalfont{each} $\hat{x} \in [x,~|\cp'|]$}{
            
            $U'=\{U'_{x},...,U'_{\hat{x}}\}$;\\

            {\footnotesize\ttfamily/* {The $h^{\text{th}}$ state of $\CR_s$ via ${V}_{s-1,x}^l$} */}\\
            $V={V}_{s-1,x}^l+ N(\CR_s\setminus \zeta_{s-1}^l, U')\}$;\\
            ${V}_{s,\hat{x}}^h = max\{\mathrm{V}_{s,\hat{x}}^h,V\}$;~~~$V_s=V_s\cup\{V_{s,\hat{x}}^h\}$;\\
            {\footnotesize\ttfamily/* {Identify refs. with full misses} */}\\
            $\zeta_{s}^h = \{r^k\in \CR_s~|~r^k\in \zeta_{s-1}^l  \vee  n^k= \cnt^k\} $;\\
            
            }          
        
    }

}
\Return the maximum of all $V_{s,\hat{x}}^h$; \\

\end{algorithm}

Alg.~\ref{alg:dpmax} presents the analysis process given a $\CR$ and a $\lambda'$ via dynamic programming. Since the computation always focuses on the current $\CR_s \in \CR$, the index $s$ is omitted for $\check{x}_s$ and $\hat{x}_s$ in the algorithm.
For a $\CR_s \in \CR$, the $h^{th}$ state of $\CR_s$ (denoted as $V_{s,\hat{x}}^h$) represents a contention scenario when a sequence of CRs $\{\CR_1,\cdots,\CR_s\}$ is interfered by $\{U'_1,\cdots, U'_{\hat{x}}\}$, in which a CR in $\{\CR_1,\cdots,\CR_s\}$ can be interfered by a different UR segment in $\{U'_1,\cdots, U'_{\hat{x}}\}$. 
Notation $\zeta_s^h$ gives the set of $r^k\in\CR_s$ that every access of $r^k$ incurs a cache miss in the $h^{\text{th}}$ state of $\CR_s$.
The set of all states of $\CR_s$ is denoted as $V_s$.

The algorithm starts from $\CR_1$ to construct all its states.
For $\CR_1$, its states are constructed by the cache miss number when it is interfered by $U'=\{U'_1,...,U'_{\hat{x}}\}$, with $\hat{x} \in [1,|\lambda'|]$ (\ie, $V_{1,\hat{x}}^h=N(\CR_1, U')$ at lines 2–8). 
For a $V_{1,\hat{x}}^h$, $\zeta^h_1$ is constructed by $r^k \in \CR_1$ that each of its accesses incurs a cache miss (\ie, $n^k=\delta^k$ at line 7), where $n^k$ is the cache miss number of $r^k$ caused by $U'$. The $n^k$ is obtained by the operator $\Phi(\cdot)$ when computing $N(\CR_1, U')$, as described in Sec.~\ref{sec:contention}.

For $\CR_s$ with $s \geq 2$ (line 10), their states $V_{s,\hat{x}}^h$ depend on the states of the previous CRs.
Given a state of $\CR_{s-1}$ (denoted as $V_{s-1,x}^l$ to differentiate with the current state) and a $\hat{x} \in [x,~|\cp'|]$ (lines 11-12), a $V_{s,\hat{x}}^h$ of $\CR_s$ is obtained as follows.
First, the sequence of the interfering URs is constructed by $U'=\{U'_x,\cdots, U'_{\hat{x}}\}$ at line 13. 
Then, the number of misses in $\{\CR_1,\cdots, \CR_{s} \}$ is computed as $V_{s-1, x}^l + N(\CR_s\setminus \zeta_{s-1}^l, U')$ (line 15), based on the following two aspects. 
\begin{itemize}
\item[(i)] the number of misses incurred by  $\{\CR_1,\cdots, \CR_{s-1} \}$ from $\{U'_1,\cdots, U'_{x} \}$, \ie, the $V_{s-1, x}^l$ of $\CR_{s-1}$;
\item[(ii)] the number of misses incurred by $\CR_s$ from $U'=\{U'_{x},\cdots, U'_{\hat{x}}\}$, with all $r^k \in \zeta_{s-1}^l$ excluded from the computation, \ie, $N(\CR_s\setminus \zeta_{s-1}^l, U')$.
\end{itemize}


In addition, as $V_{s,\hat{x}}^h$ could also be computed based on another state of $\CR_{s-1}$, the maximum value of $V_{s,\hat{x}}^h$ is always taken as the worst case (line 16).
Finally, $\zeta_s^h$ is constructed to identify the set of $r^k\in\CR_s$ that every access incurs a cache miss, including ones in $\zeta_{s-1}^l \cap \CR_s$ (line 18).

The algorithm terminates with the maximum $V_{s,\hat{x}}^h$ returned as the cache miss number of $\CR$ imposed by $\lambda'$ (line 22).
The time complexity of Alg.~\ref{alg:dpmax} is $\mathcal{O}(|\CR| \cdot |V| \cdot |\lambda'|^2)$, where $|V|$ is the maximum number of states among $\CR_s \in \CR$. 
Given that the cache holds at most $\asso$ unique blocks, a $\CR_s$ contains no more than $\asso$ references with different addresses, limiting the state number to $2^\asso$. Thus, the time complexity is pseudo-polynomial as $\asso\leq16$ in most COST architectures of real-time systems~\cite{hennessy2017computer,fischer2023analysis,stock2019cache}.
Example~\ref{ex:dp} illustrates the state transitions
across the CRs in Fig.~\ref{fig:dp} based on the analysis.

\begin{figure}
    \centering
    \includegraphics[width=0.72\linewidth]{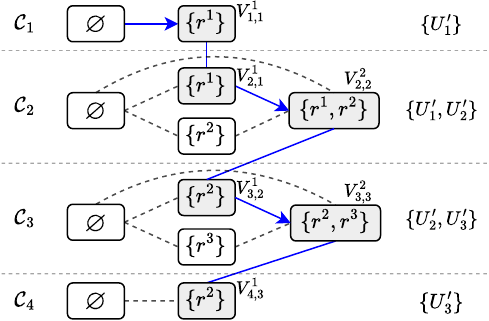}
    \caption{An example of state transition across four CRs with references that access a block address only once \textit{(grey blocks: worst-case states; white blocks: potential states)}.}
    \label{fig:dp}
    \vspace{-7pt}
\end{figure}

\begin{example}\label{ex:dp}
\normalfont
\add{ In Fig.~\ref{fig:dp}, $\{U'_1\}$, $\{U'_1, U'_2\}$, and $\{U'_2, U'_3\}$ can cause $r^1$, $r^2$, and $r^3$ to be evicted, respectively.}
First, $\CR_1$ has state $V_{1,1}^1= 1$, which is obtained when $\CR_1$ is interfered by $\{U'_1\}$ with $\zeta_1^1 = \{r^1\}$. 
For $\CR_2$, as $r^1$ already incurs a cache miss, we have \add{$V_{2,1}^1=V_{1,1}^1 + N(\{r^2\}, \{ U_1'\}) =1$} with $\zeta_2^1=\{r^1\}$. In addition, given that $r^2$ can incur a cache miss by $\{U'_1, U'_2\}$, \add{$V_{2,2}^2= V_{1,1}^1 + N(\{r^2\}, \{ U_1', U_2'\}) = 2$} is constructed with $\zeta_2^2 = \{r^1, r^2\}$.
The same process repeats for the following CRs, which terminate at $\CR_4$ with $V_{4,3}^1=3$ and $\zeta_4^1=\{r^2\}$. 
\end{example}

By examining every execution path of $\tau$ and $\tau'$, the worst-case inter-core cache latency incurred by $\tau$ can be obtained by multiplying the maximum number of misses among all paths of $\tau$ and the latency of a single cache miss~\cite{zhang2022precise}.

\subsection{Supporting Multi-level Set-associative Cache on Multicores} \label{sec:multi-multi}

This section describes the application of the proposed analysis to systems which contain a multi-level set-associative cache hierarchy shared among multiple cores.

\textbf{Supporting Multi-level Set-associative Cache.} 
In a multi-level cache, the processor queries each level of the cache hierarchy until the requested data is found~\cite{mueller1997timing}.
This implies only the memory accesses of $\tau$ and $\tau'$ that can reach the shared cache could interfere with each other.
Whether a memory access will reach a particular level of cache can be effectively determined by Cache Access Classification (CAC)~\cite{hardy2008wcet,zhang2015precise}.
According to CAC, if an access to $b_i$ is classified as Always Hit at the $(L{-}1)$-level cache, then it will not reach the $L$-level cache.
Based on this property, the proposed analysis can effectively support a multi-level cache hierarchy.
Hence, for a specified cache set and cache level, the $\CR$ of $\tau$ and interfering URs $U'$ of $\tau'$ only contain the memory accesses that (i) can access a given level of the shared cache and (ii) maps to the same cache set based on the block address~\cite{white1997timing}.
By doing so, the inter-core cache contention in every cache set across multiple cache levels can be effectively quantified.

\textbf{Supporting Multiple Interfering Tasks.}
\add{
For multiple tasks running on different cores, the interfering tasks on each remote core (\ie, ones that can interfere $\tau$) are first identified by the analysis in~\cite{liang2012timing}, where the URs of such tasks are merged into a set UR sequence (\ie, a $\lambda'$) for every remote core based on their execution order.}
\add{Then, the same process in Alg.~\ref{alg:dpmax} can be applied for the $\mathcal{C}$ of $\tau$ and the $\lambda'$ on every remote core, identifying the sequence of URs (\ie, $U' \subseteq \lambda'$) on each core that can interfere a $\CR_s \in \CR$ based on the temporal constraints.
However, as the $U'$ of each remote core requires a $\hat{x}$ to handle the temporal constraints, the dimensions of ${V}_{s,\hat{x}}^h$ would increase with the number of cores, leading to significant computation overhead. 
To improve efficiency, the temporal constraints of URs can be relaxed in the implementation by merging certain URs into a single UR for the interfering tasks.}

In addition, to quantify the number of cache misses incurred by a $\CR_s$ due to the $U'$s on remote cores, the contention analysis in Sec.~\ref{sec:contention} can be applied (\ie, the $\Phi$ operator) using a global access queue that concatenates the access queue (\ie, $Q$) of every $U'$. This is realised in the reference implementation provided in the paper.
However, it is worth noting that this approach may affect the accuracy of the analysis by overlooking the partial order among different $U'$.
To improve the accuracy, the global access queue can be constructed via a search algorithm based on the partial order of the different $U'$. However, this would further complicate the analysis with additional computation costs. 

 \begin{figure*}[t]
 \centering
 \includegraphics[width=.96\linewidth]{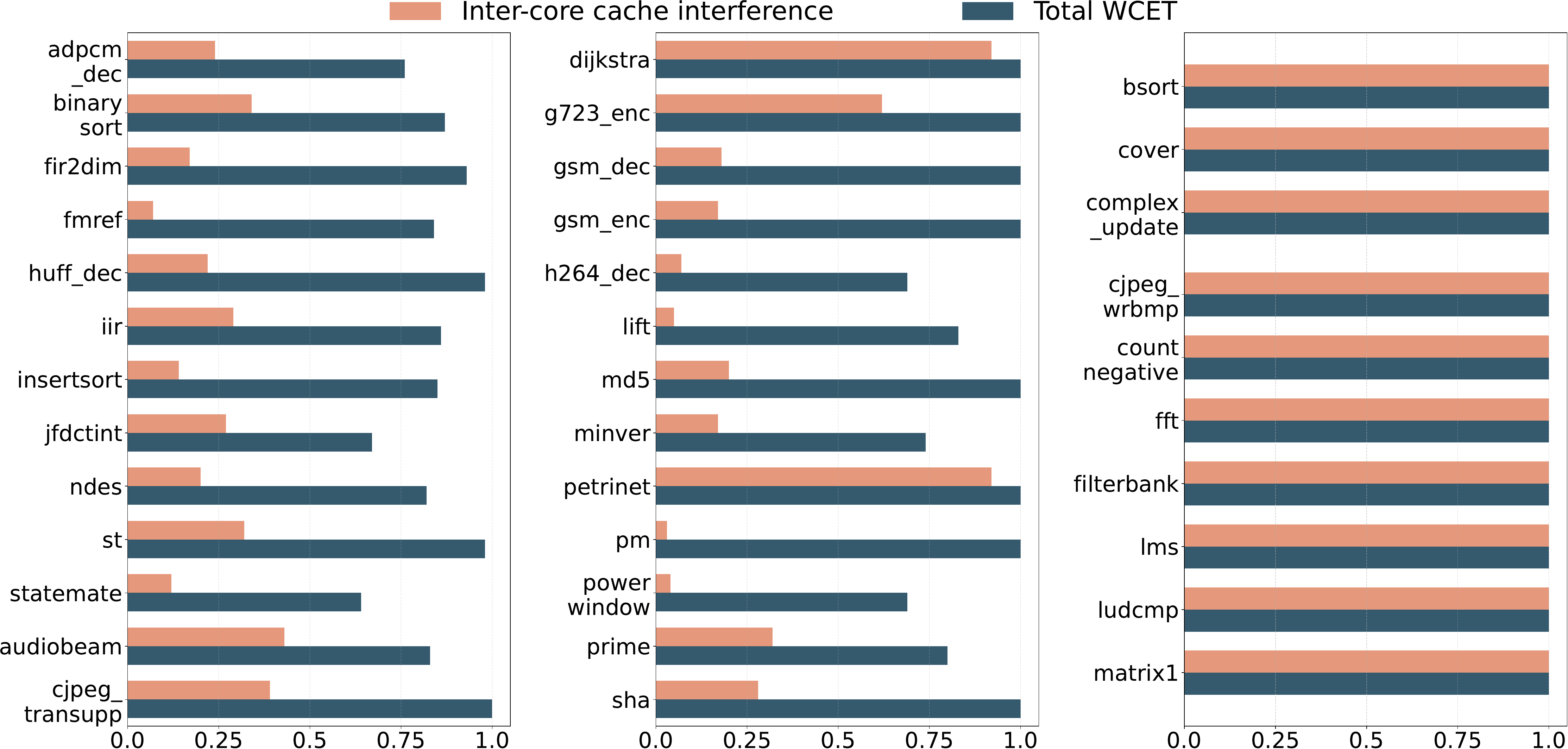}
 \caption{\add{The average ratio of inter-core cache interference and total WCET of proposed analysis to that of Zhang2022. Each task is analysed under the 12 interfering tasks. The complete results provided in Tab.~\ref{table:full_inter} and~\ref{table:full_wcet} in~\cite{zhao2025wcet}.}}
 \label{fig:Overall}
 \vspace{-4pt}
 \end{figure*}

\section{Experimental Results} 
\label{sec:result}

This section evaluates the proposed analysis against the existing approaches in~\cite{liang2012timing} (denoted as Liang2012) and~\cite{zhang2022precise} (denoted as Zhang2022), in terms of the resulting inter-core cache contention under various parallel tasks (Sec.~\ref{sec:ana_precision}) and different cache settings (Sec.~\ref{sec:scale_ana}). The computation cost of the proposed analysis is reported in Sec.~\ref{sec:ana_cost}.

\newcommand{\dd}[2]{%
  \diagbox[dir=SW, innerleftsep=1pt, innerrightsep=1pt, linewidth=0.3pt]{#1}{#2}%
}

\subsection{Experimental Setup}\label{sec:exp_setup}

We implemented the proposed and the existing analyses (\ie, Liang2012~\cite{liang2012timing} and Zhang2022~\cite{zhang2022precise}) in the WCET analysis framework LLVM-TA~\cite{hahn2022llvmta}, extending the LLVM-TA to support multicore platforms.  
The LLVM-TA can produce the CFG of a task under evaluation, as well as the ages of its blocks at different cache levels based on the intra-core cache analysis.
To compare the competing analysing methods, we modified the LLVM-TA to produce both the total WCET estimations of a task and the associated inter-core cache latency. 

\textbf{Hardware Platform.} 
The analysis is conducted on a dual-core in-order processor with a standard five-stage pipeline and a two-level cache hierarchy with the LRU replacement policy~\cite{touzeau2019fast, reineke2007timing} applied.
The L1 cache is dedicated to each core, whereas the L2 cache is shared among cores.
As with~\cite{stock2019cache,touzeau2019fast}, the size of a cache line for both the L1 and L2 cache is set to 16 bytes.
The L1 cache is configured with a total size of 512 bytes, which contains a separate instruction and data cache with 8 sets and 2 ways, \ie, 2 lines for each cache set. 
The shared L2 cache has 32 sets and $\asso$ ways, with $\asso \in \{2, 4, 8\}$ ($\asso=2$ by default).
Similar with~\cite{liang2012timing, zhang2022precise, nagar2016precise}, the latency of an L1 cache hit, an L2 cache hit and an L2 cache miss is set to 1 cycle, 5 cycles and 100 cycles, respectively.

\textbf{Benchmarks.} 
\add{In total, 36 tasks from the TACLeBench suite~\cite{falk2016taclebench} are analysed by the competing methods, where each task can incur the inter-core cache contention from interfering tasks running on other cores.}
\add{Tab.~\ref{table:l2-stats} lists the benchmark tasks that are used as the interfering tasks in the evaluation, along with their cache-access statistics obtained from the intra-core analysis in LLVM-TA ($\asso=2$)~\cite{hahn2022llvmta}, including the total number of accesses, number of unique blocks, L2 access rate, and L2 hit rate.}
As shown in the table, these benchmarks cover a wide range of access numbers (from 419 in \texttt{binarysearch} to 8,027,527 in \texttt{fmref}) and the number of unique blocks (from 32 to 688), 
reflecting diverse cache-accessing behaviours \add{that impose different degrees of inter-core cache interference on the shared cache for the evaluated tasks}.
The tasks are compiled with an optimisation level of O0.

\begin{table}[t]
    \centering
    \caption{Statistics of the interfering tasks with $\kappa=2$.}
    \label{table:l2-stats}
    \setlength{\tabcolsep}{5pt} 
\resizebox{\columnwidth}{!}{
    \begin{tabular}{llccc}
        \toprule
        Benchmark&\makecell[l]{Total \\Access}&\makecell[c]{Unique \\ blocks}& \makecell[c]{L2 Access \\Ratio} & \makecell[c]{L2 Hit\\Ratio} \\
        \midrule
        \texttt{adpcm\_dec}      & 7,052     & 300 & 39.60\% & 7.10\% \\
        \texttt{cover}           & 102,283   & 237 & 34.13\% & 0.00\% \\
        \texttt{fir2dim}         & 132,54    & 152 & 38.19\% & 1.80\% \\
        \texttt{huff\_dec}       & 5,403,338  & 263 & 37.60\% & 0.26\% \\
        \texttt{iir}             & 435      & 32  & 44.74\% & 4.24\% \\
        \texttt{ndes}            & 364,305   & 465 & 38.05\% & 6.30\% \\
        \texttt{st}              & 3,319,950  & 118 & 32.85\% & 0.47\% \\
        \texttt{binarysearch}    & 419      & 37  & 42.32\% & 3.38\% \\
        \texttt{fmref}           & 8,027,527  & 688 & 39.00\% & 14.91\% \\
        \texttt{insertsort}      & 13,638    & 32  & 40.43\% & 11.56\% \\
        \texttt{jfdctint}        & 2,289     & 83  & 37.02\% & 20.44\% \\
        \texttt{statemate}       & 315,361   & 656 & 36.54\% & 4.02\% \\
        \bottomrule
    \end{tabular}}
    \vspace{-7pt}
\end{table}

\subsection{Analytical Results under Different Parallel Tasks}\label{sec:ana_precision}

\add{
This section compares the proposed analysis and Zhang2022 in terms of the inter-core cache interference and the total WCET estimations (including the inter-core cache interference) of the 36 tasks, where each task (\ie, $\tau$) is analysed with the inter-core cache contention imposed by every interfering task (\ie, $\tau'$). 
Fig.~\ref{fig:Overall} presents the average ratio of the inter-core cache interference and the total WCET estimations of each task, \ie, the average ratio of the proposed method to Zhang2022 for that task across the 12 interfering tasks.
For the sake of completeness, detailed results of each evaluated task are available in Tab.~\ref{table:full_inter} and Tab.~\ref{table:full_wcet} of the technical report~\cite{zhao2025wcet}. 
Compared to Zhang2022, the proposed analysis reduces the inter-core cache interference and WCET estimations by 52.31\% and 8.94\% on average for all 36 tasks.}

\add{In particular, we observed three different types of  behaviours across the 36 tasks for accessing the L2 cache, as described below. (i) We observed that 26 out of 36 tasks can incur evictions on the L2 cache due to the interfering task (\ie, from \texttt{adpcm_dec} to \texttt{sha} in the figure). 
For such tasks, the proposed analysis is in general effective, which outperforms Zhang2022 by reducing the inter-core cache interference and WCET estimations by 72.43\% and 12.38\% on average, respectively. 
(ii) 3 tasks (from \texttt{bsort} to \texttt{complex_update}) can always hit the L1 cache except the first miss, \ie, all data required by the task can be fitted in the L1 cache. As these tasks would not incur delay on the L2 cache due to the interfering task, both methods return an inter-task interference of zero with identical WCET estimations. 
(iii) There are 7 tasks (from \texttt{cjpeg_wrbmp} to \texttt{matrix1}) that always miss the L2 cache, based on the intra-core cache analysis without considering any inter-core interference. For these tasks, both methods yield the same inter-core cache interference of zero.}

\add{From the observations, although there exist certain tasks in which the inter-core cache interference is not a dominant factor for the WCET, our analysis reduces WCET estimations by 12.38\% for the first 26 tasks (\ie, the first category where tasks incur L2 evictions). This demonstrates the effectiveness of the proposed analysis in the general case.
}

\subsection{Results under Varied Associativity \add{and Number of Cores}}\label{sec:scale_ana}

To further evaluate the effectiveness of the proposed analysis, \add{this section compares the analytical results of the competing analysis under varied cache associativity ($\kappa = 2, 4, 8$ for the L2 cache) and number of cores (2, 4, 6 and 8 cores).}

\begin{figure}[t]
\vspace{-4pt}
\centering
\subfigure[Results under $\asso = 2$.]
{\label{fig:ndes_2}
{\includegraphics[width=0.48\textwidth]{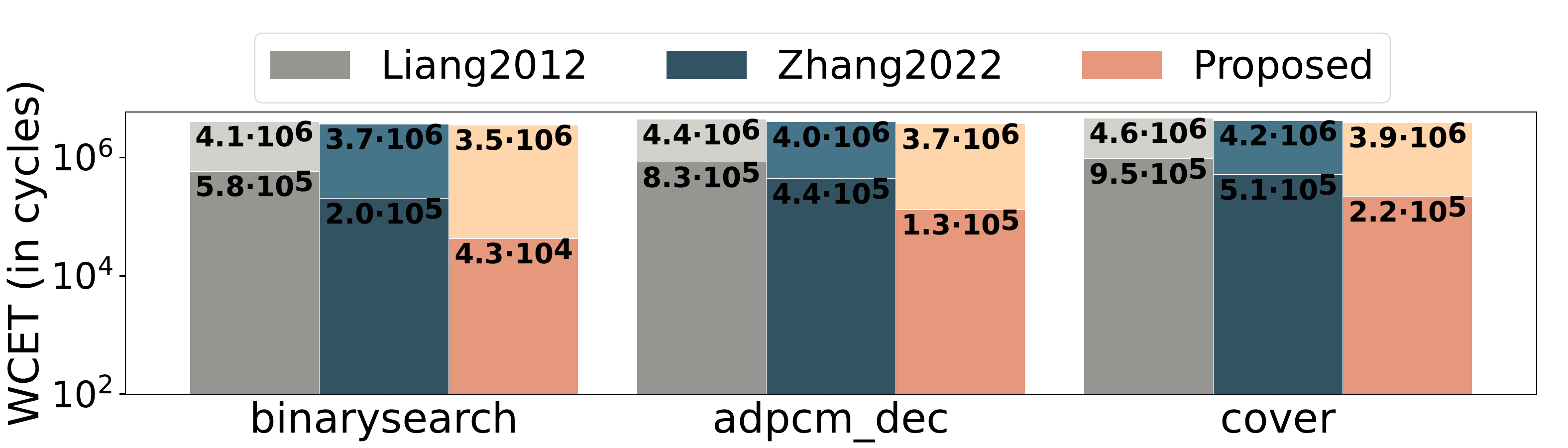}}}
\hspace{12pt}
\subfigure[Results under $\asso = 4$.]
{\label{fig:queue_4}
{\includegraphics[width=.48\textwidth]{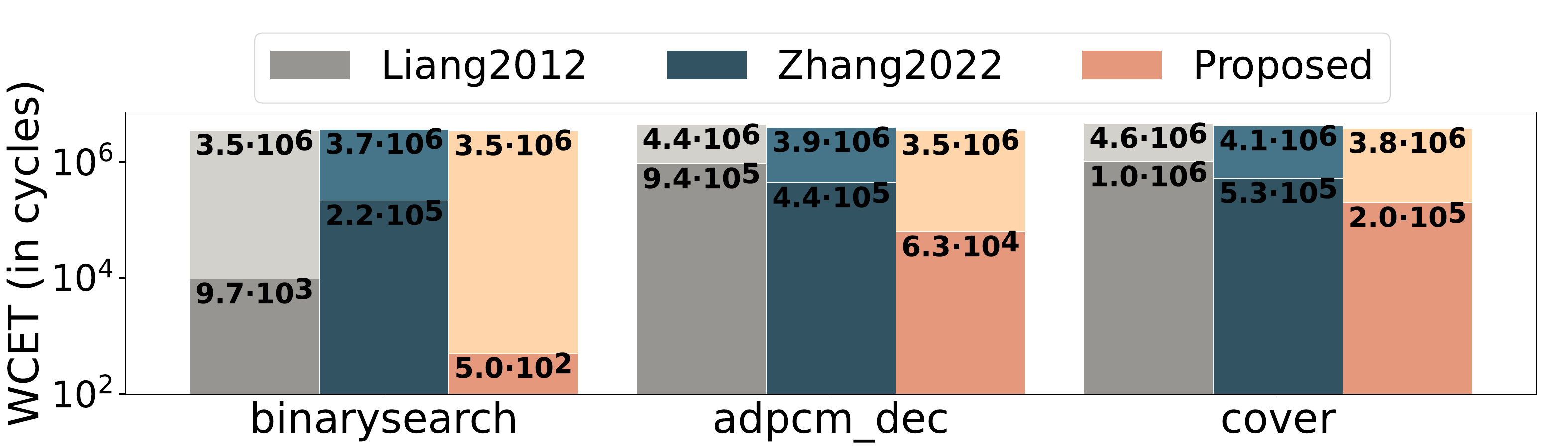}}}
\hspace{12pt}
\subfigure[Results under $\asso = 8$.]
{\label{fig:ndes_8}
{\includegraphics[width=.48\textwidth]{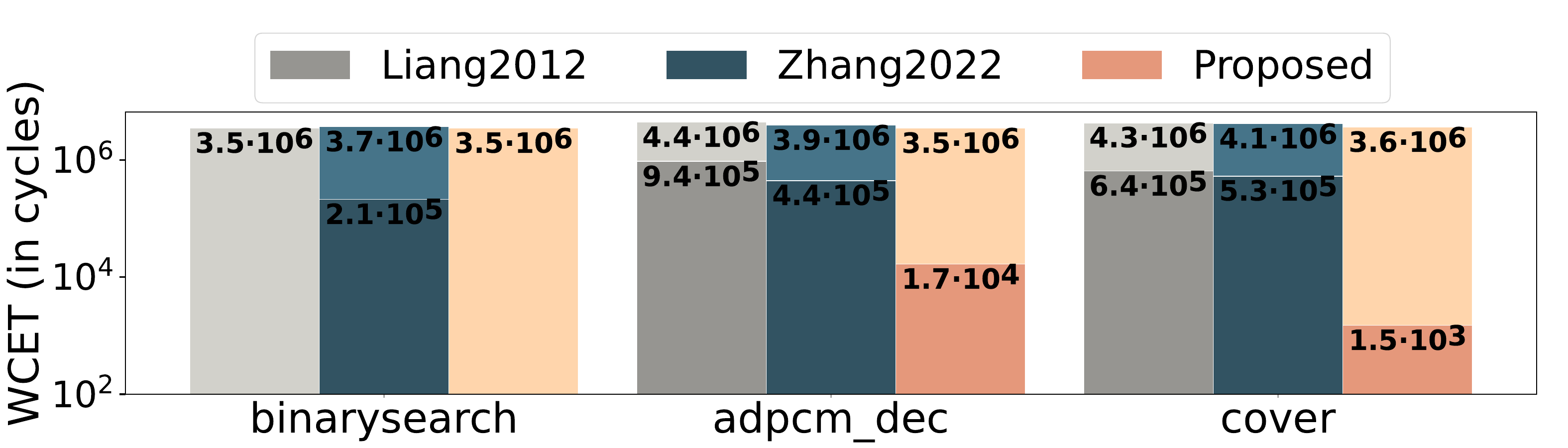}}}
\caption{Results (in log-scale) of \texttt{ndes} when being interfering by \texttt{binarysearch}, \texttt{adpcm\_dec} and \texttt{cover} \textit{\add{(dark colours: inter-core cache latency; light colours: total WCET)}}.}
\label{fig:ndes_248}
\vspace{-4pt}
\end{figure}

Fig.~\ref{fig:ndes_248} presents the WCET and the associated inter-core cache latency (in log-scale) of \texttt{ndes} obtained by each analysis under different $\asso$, with an interference task \texttt{binarysearch}, \texttt{cover}, or \texttt{adpcm\_dec}.
The \texttt{ndes} has a moderate amount of memory access, whereas \texttt{binarysearch}, \texttt{cover}, and \texttt{adpcm\_dec} provide different degrees of shared cache contention (see Tab.~\ref{table:l2-stats}).
As Liang2012 provides a holistic approach that analyses the WCET directly, its inter-core cache latency is obtained by subtracting the intra-core latency from the WCET estimations.
In addition, we note that as \texttt{ndes} are compiled together with each interfering task, the virtual addresses of \texttt{ndes} can vary with different interfering tasks, hence leading to a different intra-core latency.

As shown in the figure, both the WCET and the inter-core cache latency of \texttt{ndes} show a decreasing trend in a general case when $\asso$ is increased.
When $\kappa = 2$, the proposed analysis achieves lower estimations than both the competing analyses, whereas Zhang2022 produces tighter results than Liang2012. This is because the proposed method computes the number of cache misses by considering both the age and the number of memory accesses of the parallel tasks, whereas Zhang2022 utilises the partial order of URs to reduce possible contentions, thereby leading to tighter bounds. 

When $\kappa = 4$, we observe that the inter-core cache contention from \texttt{binarysearch} is significantly reduced as it only accesses 37 unique block addresses.
This is captured by both the proposed analysis and Liang2012.
However, as Zhang2022 does not account for the cache associativity, the resulting inter-core cache latency is similar to the case of $\asso=2$.
In addition, for \texttt{cover} and \texttt{adpcm\_dec} with a large number of accesses to different memory blocks, both Liang2012 and Zhang2022 report slightly higher inter-core cache latency compared to the results in $\kappa = 2$.
This is because with a higher $\asso$, more accesses in \texttt{ndes} can hit the L2 cache based on the intra-core analysis. In this case, the contention effects caused by a remote task can be amplified, thereby leading to an increased inter-core cache latency. However, as described, the total WCET estimations generally decrease under the competing analyses as $\kappa$ increases.

Finally, when $\kappa$ is increased to 8, the \texttt{binarysearch} can hardly cause any misses with a negligible inter-core cache latency, as suggested by both the proposed analysis and Liang2012.
In addition, we observe that under both the proposed analysis and Liang2012, the contention imposed by \texttt{cover} becomes lower than that of \texttt{adpcm_dec}.
This is because both analyses consider the $\asso$ and the number of unique block addresses in the interfering task, where
\texttt{adpcm\_dec} accesses a larger number of unique blocks than \texttt{cover} (see Tab.~\ref{table:l2-stats}).
However, while Liang2012 produces a lower bound for \texttt{cover}, the results for \texttt{adpcm\_dec} remain unchanged under $\asso=8$ as it overlooks the access number of the interfering task.  
As for our analysis, it achieves much lower estimations for both interfering tasks by considering both the access count and the number of unique addresses, where the reduction is more significant for \texttt{cover}.  
In addition, the estimations obtained by Zhang2022 remain unchanged in general across the experiment due to the obliviousness of $\kappa$.

\begin{figure}[t]
    \centering
    \includegraphics[width=\linewidth]{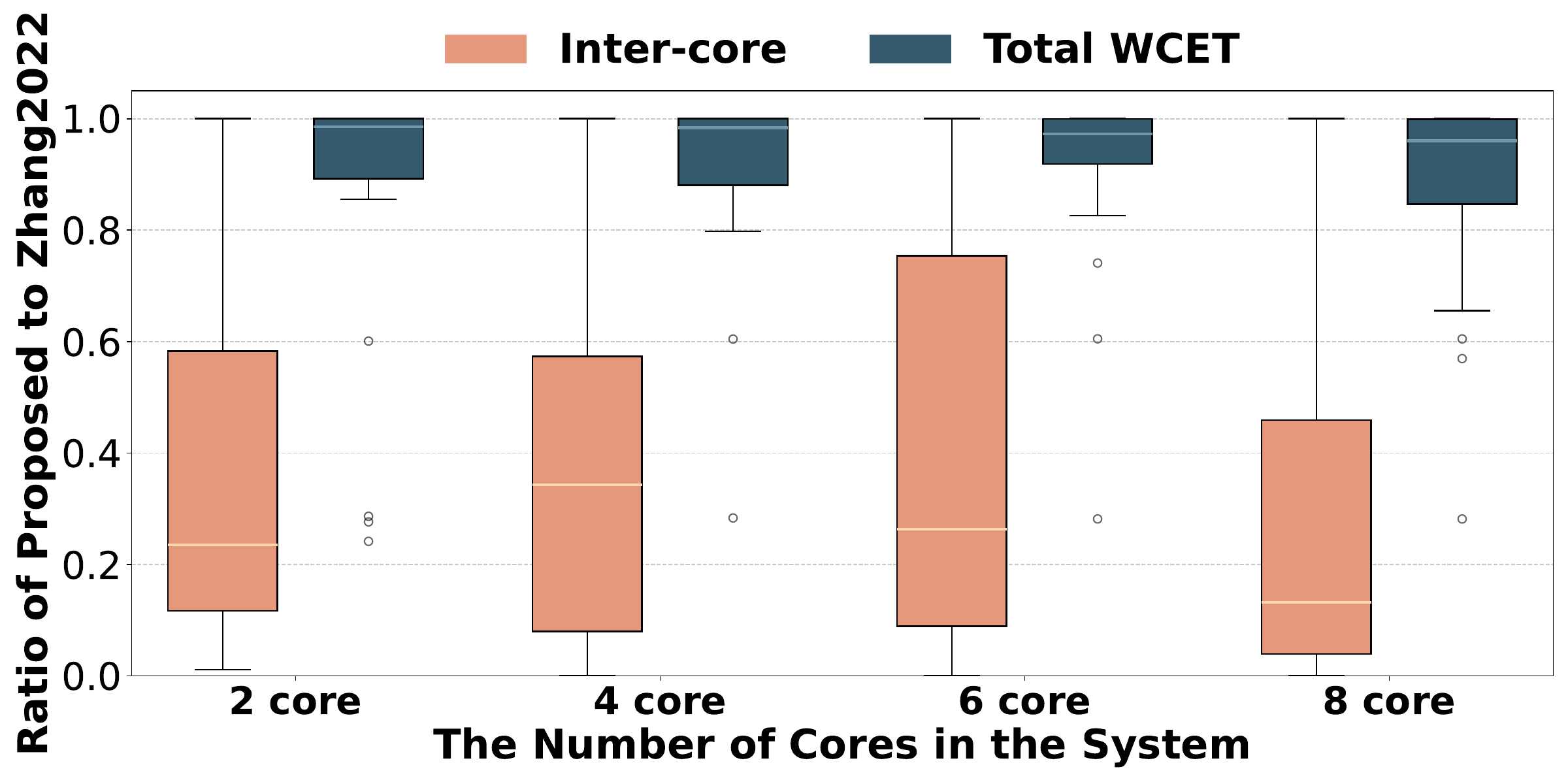}
    \caption{\add{The average ratio of inter-core cache interference and total WCET estimations of our analysis to Zhang2022 under a varied number of cores. \textit{(configuration: 2 core (L2: 1KB), 4 core (L2: 4KB), 6 core (L2: 8KB), 8 core (L2: 16KB))}.}}
    \label{fig:muti-core}
    \vspace{-7pt}
\end{figure}

\add{Fig.~\ref{fig:muti-core} presents the ratio of the inter-core cache interference and the total WCET estimations of the proposed analysis to that of Zhang2022 for the first 26 tasks (\ie, tasks in the first category described in Sec.~\ref{sec:ana_precision}) under a different number of cores.
In this experiment, the L2 cache size scales with the number of cores: 2 cores with 1 KB, 4 cores with 4 KB, 6 cores with 8 KB, and 8 cores with 16 KB.
For a system setting with $m$ cores, the first $m-1$ tasks in Tab.~\ref{table:l2-stats} are taken as the interfering tasks, where each task runs on a remote core. For instance, with $m=4$, the first three tasks (\texttt{adpcm\_dec} to \texttt{fir2dim}) in Tab.~\ref{table:l2-stats} are applied as the interfering tasks.}

\add{From the results, we can observe that compared to Zhang2022, the proposed analysis reduces the inter-core cache interference and WCET estimations by 57.84\% and 3.29\% on average, respectively. 
In addition, it is worth noting that the advantages of the proposed analysis become more obvious when the number of cores is increased. For instance, the proposed analysis reduces the inter-core cache interference and WCET estimation by 66.95\% and 4.13\% when the number of cores is eight, respectively.
This is because Zhang2022 does not take into account the cache-accessing behaviours (\eg, the cache associativity, see Sec.~\ref{sec:model-existing}) and sums the independently-computed interference of each parallel task, which fails to consider the actual cache state of the parallel tasks.
In contrast, the proposed analysis can produce tighter results by explicitly accounting for the cache-accessing behaviours and the number of cache accesses of the parallel interfering tasks in a holistic manner, especially when the number of cores is high.
}

\begin{table}[t]
\vspace{4pt}
\centering
\caption{\add{The average computation cost (in seconds) \textit{(binaryS: binarysearch; complex\_up: complex\_update; cjpeg\_tr: cjpeg\_transupp; countN: countnegative)}. }}
\label{table:exe_time}
\setlength{\tabcolsep}{1pt} 
\begin{tabular}{lcc@{\hskip 10pt}lcc} 
\toprule
Benchmark & Proposed & Zhang2022 & Benchmark & Proposed & Zhang2022 \\
\midrule
\texttt{adpcm\_dec}      & 1.49     & 0.83      & \texttt{lift}            & 1.07     & 0.80      \\
\texttt{binaryS}    & 0.20     & 0.45      & \texttt{md5}             & 0.66     & 24.67     \\
\texttt{fir2dim}         & 0.80     & 2.57      & \texttt{minver}          & 70.61    & 56.71     \\
\texttt{fmref}           & 60.86    & 19.10     & \texttt{petrinet}        & 0.01     & 1.14      \\
\texttt{huff\_dec}       & 5.26     & 11.01     & \texttt{pm}              & 242.20   & 449.89    \\
\texttt{iir}             & 0.22     & 0.35      & \texttt{powerwindow}     & 5.04     & 12.07     \\
\texttt{insertsort}      & 0.70     & 2.44      & \texttt{prime}           & 8502.78  & 1738.37   \\
\texttt{jfdctint}        & 9.27     & 0.64      & \texttt{sha}             & 2.72     & 69.44     \\
\texttt{ndes}            & 94.55    & 8.58      & \texttt{bsort}           & 0.01     & 1.00      \\
\texttt{st}              & 163.71   & 69.25     & \texttt{cover}           & 0.01     & 1.42      \\
\texttt{statemate}       & 1.38     & 1.81      & \texttt{complex\_up}& 0.01     & 0.58      \\
\texttt{audiobeam}       & 45.91    & 15256.67  & \texttt{cjpeg\_wrbmp}    & 0.01     & 161.99    \\
\texttt{cjpeg\_tr} & 86.34    & 13.98     & \texttt{countN}   & 0.01     & 1.43      \\
\texttt{dijkstra}        & 0.01     & 2.34      & \texttt{fft}             & 0.01     & 0.78      \\
\texttt{g723\_enc}       & 0.11     & 2.35      & \texttt{filterbank}      & 0.01     & 0.72      \\
\texttt{gsm\_dec}        & 0.56     & 6.56      & \texttt{lms}             & 0.01     & 0.43      \\
\texttt{gsm\_enc}        & 0.11     & 5.60      & \texttt{ludcmp}          & 0.01     & 4.50      \\
\texttt{h264\_dec}       & 337.13   & 254.36    & \texttt{matrix1}         & 0.01     & 0.25      \\
\bottomrule
\end{tabular}
\vspace{-6pt}
\end{table}

\subsection{Computation Cost}\label{sec:ana_cost}

Finally, we evaluate the average computation cost of the proposed analysis and Zhang2022 on an AMD EPYC 7763 processor running at 1.50 GHz with 503 GB of memory, as shown in
Tab.~\ref{table:exe_time} for \add{the 36 tasks analysed in Fig.~\ref{fig:Overall}}.

In general, the proposed analysis demonstrates a slight increase in the computation cost compared to Zhang2022.
This is expected due to the finer-grained quantification of inter-core cache contention in the proposed analysis, especially when the task contains a large number of URs (\eg, \texttt{st} and \texttt{fmref} in Tab.~\ref{table:exe_time}).
\add{However, we observe that the proposed analysis achieves a lower cost than Zhang2022 for tasks where the number of the CRs is significantly lower than that of the URs 
(\eg, \texttt{md5} with 2 CRs and 70 URs, and \texttt{sha} with 2 CRs and 408 URs). 
For these tasks, as only a limited number of CRs are constructed, this leads to fewer computations with a low cost for our analysis.}
\add{Especially, for tasks in the second and third categories described in Sec.~\ref{sec:ana_precision} (\eg, \texttt{bsort}), the cost of the proposed analysis is negligible} as no access can hit the L2 cache based on the intra-core analysis and thus no CR is constructed. 
Thus, although the proposed analysis introduces additional overhead for certain tasks, the increase remains modest due to its pseudo-polynomial-time complexity. Based on the above, our analysis effectively tightens the inter-core cache contention estimations while not significantly increasing the overhead, validating the strength of the proposed approach.

\section{Conclusion} \label{sec:conclusion}

This paper proposed an inter-core cache contention analysis for estimating the cache miss latency on the shared cache caused by parallel tasks. While preserving the partial order of program regions, the proposed analysis computes the number of cache misses by explicitly considering the age and the number of accesses to memory blocks of the tasks.
Experimental results show that compared to existing approaches, our analysis yields lower estimations across different parallel tasks and cache settings without significantly increasing the computation cost, providing an effective solution for the timing verification of multicore real-time systems.

\bibliographystyle{IEEEtran}
\bibliography{ref}


\section{Appendix}

\begin{table*}[h]
\centering
\caption{Ratio of the inter-core cache interference of proposed analysis to Zhang2022 \textit{(binaryS: binarysearch; insertS: insertsort)}.}
\label{table:full_inter}
\setlength{\tabcolsep}{1.5pt} 
\renewcommand{\arraystretch}{1.2}
\resizebox{.98\textwidth}{!}{
\begin{tabular}{|c|l|c|c|c|c|c|c|>{\hspace{5pt}}c<{\hspace{5pt}}|c|c|>{\hspace{4pt}}c<{\hspace{4pt}}|>{\hspace{5pt}}c<{\hspace{5pt}}|c|>{\hspace{4pt}}c<{\hspace{4pt}}|}

\hline
Classification &Benchmark & adpcm\_dec & binaryS & cover & fir2dim & fmref & huff\_dec & iir & insertS & jfdctint & ndes & st & statemate & \textbf{Avg.} \\
\hline
\multirow{26}{*}{\makecell[c]{Tasks that\\incur evictions on\\the L2 cache due to\\the interfering task}} 
&adpcm\_dec & 0.48 & 0.31 & 0.28 & 0.07 & 0.06 & 0.10 & 0.19 & 0.21 & 0.38 & 0.11 & 0.04 & 0.70 & 0.24 \\
\cline{2-15}  &
binarysearch & 1.0 & 0.01 & 0.33 & 0.03 & 0.07 & 0.07 & 1.0 & 1.0 & 0.17 & 0.06 & 0.05 & 0.25 & 0.34 \\
\cline{2-15}  &
fir2dim & 0.24 & 0.06 & 0.53 & 0.05 & 0.07 & 0.09 & 0.04 & 0.06 & 0.10 & 0.12 & 0.02 & 0.70 & 0.17 \\
\cline{2-15} &
fmref & 0.15 & 0.01 & 0.23 & 0.02 & 0.10 & 0.10 & 0.01 & 0.01 & 0.02 & 0.02 & 0.02 & 0.20 & 0.07 \\
\cline{2-15}  &
huff\_dec & 0.12 & 1.0 & 1.0 & 0.01 & 0.09 & 0.11 & 0.01 & 0.01 & 0.01 & 0.08 & 0.01 & 0.25 & 0.22 \\
\cline{2-15}  &
iir & 1.0 & 1.0 & 0.20 & 0.14 & 0.08 & 0.14 & 0.08 & 0.06 & 0.24 & 0.11 & 0.03 & 0.33 & 0.29 \\
\cline{2-15}  &
insertsort & 0.23 & 0.01 & 0.25 & 0.14 & 0.01 & 0.07 & 0.01 & 0.33 & 0.06 & 0.14 & 0.05 & 0.40 & 0.14 \\
\cline{2-15}  &
jfdctint & 0.81 & 0.19 & 0.39 & 0.11 & 0.11 & 0.16 & 0.22 & 0.25 & 0.13 & 0.16 & 0.06 & 0.64 & 0.27 \\
\cline{2-15}  &
ndes & 0.30 & 0.21 & 0.43 & 0.11 & 0.10 & 0.23 & 0.01 & 0.01 & 0.22 & 0.15 & 0.03 & 0.60 & 0.20 \\
\cline{2-15}  &
st & 0.92 & 0.04 & 0.51 & 0.15 & 0.20 & 0.34 & 0.01 & 0.34 & 0.27 & 0.27 & 0.01 & 0.82 & 0.32 \\
\cline{2-15}  &
statemate & 0.12 & 0.03 & 0.18 & 0.03 & 0.07 & 0.07 & 0.01 & 0.01 & 0.05 & 0.09 & 0.05 & 0.70 & 0.12 \\
\cline{2-15}  &
audiobeam & 0.61 & 0.13 & 0.40 & 0.18 & 0.07  & 0.14 & 1.0 & 1.0 & 0.50 & 0.11 & 0.08 & 1.0 & 0.43 \\
\cline{2-15}  
&cjpeg\_transupp&0.46&0.12&0.75&0.42&0.16&0.47&0.18&0.02&0.42&0.48&0.23&1.0&0.39 \\
\cline{2-15}  &
dijkstra & 1.0 & 1.0 & 0.03 & 1.0 & 1.0 & 1.0 & 1.0 & 1.0 & 1.0 & 1.0 & 1.0 & 1.0 & 0.92 \\
\cline{2-15}  &
g723\_enc & 0.20 & 0.01 & 0.18 & 0.05 & 0.05 & 1.0 & 1.0 & 1.0 & 1.0 & 1.0 & 1.0 & 1.0 & 0.62 \\
\cline{2-15}  &
gsm\_dec & 0.62 & 0.01 & 0.25 & 0.01 & 0.08 & 0.07 & 0.02 & 0.01 & 0.02 & 0.09 & 0.01 & 1.0 & 0.18 \\
\cline{2-15}  &
gsm\_enc & 0.57 & 0.10 & 0.24 & 0.09 & 0.05 & 0.10 & 0.01 & 0.01 & 0.17 & 0.12 & 0.08 & 0.50 & 0.17 \\
\cline{2-15}  &
h264\_dec & 0.12 & 0.01 & 0.08 & 0.01 & 0.03 & 0.06 & 0.01 & 0.01 & 0.04 & 0.04 & 0.01 & 0.39 & 0.07 \\
\cline{2-15}  &
lift & 0.01 & 0.01 & 0.07 & 0.01 & 0.08 & 0.08 & 0.01 & 0.02 & 0.05 & 0.04 & 0.01 & 0.25 & 0.05 \\
\cline{2-15}  &
md5&1.0&1.0&0.07&0.01&0.11&0.12&0.01&0.01&0.01&0.01&0.04&0.09&0.20\\
\cline{2-15}  &
minver & 0.52 & 0.16 & 0.23 & 0.06 & 0.02 & 0.05 & 0.15 & 0.10 & 0.15 & 0.16 & 0.05 & 0.41 & 0.17 \\
\cline{2-15}  &
petrinet & 1.0 & 0.01 & 1.0 & 1.0 & 1.0 & 1.0 & 1.0 & 1.0 & 1.0 & 1.0 & 1.0 & 1.0 & 0.92 \\
\cline{2-15}  &
pm & 0.05 & 0.01 & 0.02 & 0.01 & 0.11 &0.05 & 0.01 & 0.01 & 0.01 & 0.02 & 0.01 & 0.11 & 0.03 \\
\cline{2-15}  &
powerwindow & 0.01 & 0.01 & 0.05 & 0.01 & 0.06 & 0.09 & 0.01 & 0.01 & 0.01 & 0.02 & 0.01 & 0.29 & 0.04 \\
\cline{2-15}  &
prime & 0.92 & 0.13 & 0.41 & 0.14 & 0.14 & 0.19 & 0.28 & 0.26 & 0.34 & 0.22 & 0.06 & 0.76 & 0.32 \\
\cline{2-15}  &
sha & 0.67 & 0.01 & 0.14 & 0.10 & 0.06 & 0.14 & 0.01 & 0.01 & 1.0 & 0.15 & 0.12 & 1.0 & 0.28 \\

\hline\hline
\multirow{3}{*}{\makecell[c]{Tasks that always\\hit the L1 cache\\except the first miss}}

&bsort & 1.0 & 1.0 & 1.0 & 1.0 & 1.0 & 1.0 & 1.0 & 1.0 & 1.0 & 1.0 & 1.0 & 1.0 & 1.0 \\
\cline{2-15}  &
cover & 1.0 & 1.0 & 1.0 & 1.0 & 1.0 & 1.0 & 1.0 & 1.0 & 1.0 & 1.0 & 1.0 & 1.0 & 1.0 \\
\cline{2-15}  &
complex\_updates & 1.0 & 1.0 & 1.0 & 1.0 & 1.0 & 1.0 & 1.0 & 1.0 & 1.0 & 1.0 & 1.0 & 1.0 & 1.0 \\

\hline\hline
\multirow{7}{*}{\makecell[c]{Tasks that cannot\\hit the L2 cache\\according to\\the intra-core\\cache analysis}} 
&cjpeg\_wrbmp & 1.0 & 1.0 & 1.0 & 1.0 & 1.0 & 1.0 & 1.0 & 1.0 & 1.0 & 1.0 & 1.0 & 1.0 & 1.0 \\
\cline{2-15}  &
countnegative & 1.0 & 1.0 & 1.0 & 1.0 & 1.0 & 1.0 & 1.0 & 1.0 & 1.0 & 1.0 & 1.0 & 1.0 & 1.0 \\
\cline{2-15}  &
fft & 1.0 & 1.0 & 1.0 & 1.0 & 1.0 & 1.0 & 1.0 & 1.0 & 1.0 & 1.0 & 1.0 & 1.0 & 1.0 \\
\cline{2-15}  &
filterbank & 1.0 & 1.0 & 1.0 & 1.0 & 1.0 & 1.0 & 1.0 & 1.0 & 1.0 & 1.0 & 1.0 & 1.0 & 1.0 \\
\cline{2-15}  &
lms & 1.0 & 1.0 & 1.0 & 1.0 & 1.0 & 1.0 & 1.0 & 1.0 & 1.0 & 1.0 & 1.0 & 1.0 & 1.0 \\
\cline{2-15}  &
ludcmp & 1.0 & 1.0 & 1.0 & 1.0 & 1.0 & 1.0 & 1.0 & 1.0 & 1.0 & 1.0 & 1.0 & 1.0 & 1.0 \\
\cline{2-15}  &
matrix1 & 1.0 & 1.0 & 1.0 & 1.0 & 1.0 & 1.0 & 1.0 & 1.0 & 1.0 & 1.0 & 1.0 & 1.0 & 1.0 \\

\hline
\end{tabular}}
\end{table*}

This appendix provides the supplementary results from the evaluations described in Sec.~\ref{sec:result}, which compare the proposed analysis against Zhang2022. The results include four main aspects: (i) the comparison of the inter-core cache interference of the 36 tasks under every interfering task; (ii) the comparison of the total WCET the 36 tasks with the inter-core cache interference included; (iii) a detailed comparison of each evaluated task; and (iv) additional experiments using Papabench.

\textbf{Extended Evaluation.} First, Tab.~\ref{table:full_inter} presents the detailed results of the inter-core cache interference estimation comparisons shown in Fig.~\ref{fig:Overall}, where each task is analysed under 12 interfering tasks. 
Compared to Zhang2022, the proposed analysis reduces the inter-core cache interference estimations by 52.31\% on average for all 36 tasks.
As shown in the table, the tasks are organised into three groups according to their behaviours on the L2 cache: 
\begin{itemize}
\item Tasks that can incur evictions on the L2 cache due to the interfering task. 
For such tasks (26 out of 36 tasks), the proposed analysis outperforms Zhang2022~\cite{zhang2022precise} by reducing the inter-core cache interference estimations by 72.43\% on average. 

\item Tasks that always hit the L1 cache except the first miss, \ie, all data required by the task can be fitted in the L1 cache. For such tasks (3 out of the 36 tasks), they would not incur additional delay on the L2 cache due to the interfering task. In this case, both competing methods return an inter-task interference of zero.  
\item Tasks that always miss the L2 cache, based on the intra-core cache analysis without considering any inter-core interference. For these tasks (7 out of 36 tasks), both methods yield the inter-core cache interference of zero. 
\end{itemize}

\textbf{Comparison of WCET Estimations.} Second, Tab.~\ref{table:full_wcet} shows the comparison of the total WCET of 36 tasks under Zhang2022~\cite{zhang2022precise} and the proposed analysis. 
The tasks are organised by the same approach as described above.
From the results, we observed that for certain tasks (\eg, \texttt{huff_dec}), the inter-core cache interference is indeed negligible to the total WCET, \ie, the reduction in the inter-core cache interference estimations barely affects the total WCET estimations. However, the results also show there are 17 tasks in which the inter-core cache interference is non-trivial to the WCET, where our analysis can lead to reduced WCET estimations.
Compared to~\cite{zhang2022precise}, our analysis reduces the total WCET estimations by 12.38\% on average for tasks in the first group (26 tasks) and 8.94\% on average for all evaluated tasks, respectively. This observation illustrates that for certain tasks, the impact of the inter-core cache interference on the total WCET is trivial. However, the results justify that the proposed analysis is in general effective in reducing the WCET estimations. 

\begin{table*}[t]
\centering
\caption{Ratio of total WCET of proposed analysis to Zhang2022 \textit{(binaryS: binarysearch; insertS: insertsort)}. $1.0^* \approx 1$ means the situation where the proposed analysis tightens the inter-core cache interference compared to~\cite{zhang2022precise}, however, such interference is negligible to the total WCET.}
\label{table:full_wcet}
\setlength{\tabcolsep}{1.5pt} 
\renewcommand{\arraystretch}{1.2}
\resizebox{.98\textwidth}{!}{

\begin{tabular}
{|c|l|c|c|c|c|c|c|>{\hspace{5pt}}c<{\hspace{5pt}}|c|c|>{\hspace{4pt}}c<{\hspace{4pt}}|>{\hspace{5pt}}c<{\hspace{5pt}}|c|>{\hspace{4pt}}c<{\hspace{4pt}}|}

\hline
Classification &Benchmark & adpcm\_dec & binaryS & cover & fir2dim & fmref & huff\_dec & iir & insertS & jfdctint & ndes & st & statemate & \textbf{Avg.} \\
\hline
\multirow{26}{*}{\makecell[c]{Tasks that\\incur evictions on\\the L2 cache due to\\the interfering task}} 
&adpcm\_dec & 0.93 & 0.93 & 0.85 & 0.50 & 0.46 & 0.61 & 0.93 & 0.93 & 0.90 & 0.63 & 0.50 & 0.97 & 0.76 \\
\cline{2-15}
&binarysearch & 1.0 & 0.98 & 0.96 & 0.60 & 0.76 & 0.76 & 1.0 & 1.0 & 0.96 & 0.75 & 0.71 & 0.94 & 0.87 \\
\cline{2-15}
&fir2dim & 0.97 & 0.99 & 0.99 & 0.80 & 0.85 & 0.88 & 0.99 & 0.96 & 0.98 & 0.90 & 0.90 & 0.99 & 0.93 \\
\cline{2-15}
&fmref & 0.99 & 0.96 & 0.97 & 0.92 & 0.76 & 0.79 & 0.96 & 0.91 & 0.89 & 0.39 & 0.56 & 0.95 & 0.84 \\
\cline{2-15}
&huff\_dec & 1.0* & 1.0 & 1.0 & 0.99 & 0.99 & 0.97 & 0.99 & 1.0* & 0.98 & 0.97 & 0.90 & 0.99 & 0.98 \\
\cline{2-15}
&iir & 1.0 & 1.0 & 0.81 & 0.91 & 0.85 & 0.91 & 0.94 & 0.85 & 0.89 & 0.73 & 0.49 & 0.92 & 0.86 \\
\cline{2-15}
&insertsort & 0.86 & 0.80 & 0.96 & 0.96 & 0.41 & 0.90 & 0.70 & 0.98 & 0.96 & 0.91 & 0.80 & 0.98 & 0.85 \\
\cline{2-15}
&jfdctint & 0.97 & 0.74 & 0.75 & 0.45 & 0.48 & 0.46 & 0.86 & 0.87 & 0.72 & 0.49 & 0.32 & 0.90 & 0.67 \\
\cline{2-15}
&ndes & 0.92 & 0.96 & 0.93 & 0.78 & 0.61 & 0.84 & 0.91 & 0.72 & 0.93 & 0.74 & 0.56 & 0.96 & 0.82 \\
\cline{2-15}
&st & 1.0* & 0.99 & 0.99 & 0.96 & 0.96 & 0.98 & 0.99 & 0.99 & 0.98 & 0.97 & 0.93 & 1.0* & 0.98 \\
\cline{2-15}
&statemate & 0.28 & 0.97 & 0.73 & 0.07 & 0.63 & 0.10 & 0.92 & 0.82 & 0.88 & 0.67 & 0.67 & 0.98 & 0.64 \\
\cline{2-15}
&audiobeam & 1.0* & 1.0* & 1.0* & 1.0* & 1.0* & 1.0* & 1.0 & 1.0 & 1.0* & 1.0* & 1.0* & 1.0 & 0.83 \\

\cline{2-15}
&cjpeg\_transupp & 1.0* & 1.0* & 1.0* & 1.0* & 1.0* & 1.0* & 1.0* & 1.0* & 1.0* & 1.0* & 1.0* & 1.0 & 1.0* \\
\cline{2-15}
&dijkstra & 1.0 & 1.0 & 1.0* & 1.0 & 1.0 & 1.0 & 1.0 & 1.0 & 1.0 & 1.0 & 1.0 & 1.0 & 1.0* \\
\cline{2-15}
&g723\_enc & 1.0* & 1.0* & 1.0* & 0.99 & 0.99 & 1.0 & 1.0 & 1.0 & 1.0 & 1.0 & 1.0 & 1.0 & 1.0* \\
\cline{2-15}
&gsm\_dec & 1.0* & 1.0* & 1.0* & 1.0* & 1.0* & 1.0* & 1.0* & 1.0* & 1.0* & 1.0* & 1.0* & 1.0 & 1.0* \\
\cline{2-15}
&gsm\_enc & 1.0* & 1.0* & 1.0* & 1.0* & 1.0* & 1.0* & 1.0* & 1.0* & 1.0* & 1.0* & 1.0* & 1.0* & 1.0* \\
\cline{2-15}
&h264\_dec & 0.74 & 0.52 & 0.93 & 0.11 & 0.81 & 0.21 & 0.96 & 0.96 & 0.96 & 0.29 & 0.80 & 0.98 & 0.69 \\
\cline{2-15}
&lift & 0.88 & 0.95 & 0.82 & 0.63 & 0.92 & 0.86 & 0.91 & 0.96 & 0.98 & 0.68 & 0.44 & 0.94 & 0.83 \\
\cline{2-15}
&md5 & 1.0 & 1.0 & 1.0* & 1.0* & 1.0* & 1.0* & 1.0* & 1.0* & 1.0* & 1.0* & 1.0* & 1.0* & 1.0* \\
\cline{2-15}
&minver & 0.97 & 0.94 & 0.78 & 0.65 & 0.33 & 0.40 & 0.97 & 0.93 & 0.85 & 0.70 & 0.50 & 0.90 & 0.74 \\
\cline{2-15}
&petrinet & 1.0 & 1.0* & 1.0 & 1.0 & 1.0 & 1.0 & 1.0 & 1.0 & 1.0 & 1.0 & 1.0 & 1.0 & 1.0* \\
\cline{2-15}
&pm & 1.0* & 1.0* & 1.0* & 0.99 & 1.0* & 0.99 & 1.0* & 1.0* & 1.0* & 0.99 & 0.99 & 1.0* & 1.0* \\
\cline{2-15}
&powerwindow & 0.89 & 0.89 & 0.74 & 0.44 & 0.50 & 0.53 & 0.83 & 0.95 & 0.80 & 0.50 & 0.32 & 0.91 & 0.69 \\
\cline{2-15}
&prime & 0.99 & 0.73 & 0.90 & 0.77 & 0.31 & 0.74 & 0.94 & 0.94 & 0.92 & 0.77 & 0.62 & 0.97 & 0.80 \\
\cline{2-15}
&sha & 1.0* & 1.0* & 1.0* & 1.0* & 1.0* & 1.0* & 1.0* & 1.0* & 1.0 & 1.0* & 1.0* & 1.0 & 1.0* \\

\hline\hline
\multirow{3}{*}{\makecell[c]{Tasks that always\\hit the L1 cache\\except the first miss}} 

&bsort & 1.0 & 1.0 & 1.0 & 1.0 & 1.0 & 1.0 & 1.0 & 1.0 & 1.0 & 1.0 & 1.0 & 1.0 & 1.0 \\
\cline{2-15}
&cover & 1.0 & 1.0 & 1.0 & 1.0 & 1.0 & 1.0 & 1.0 & 1.0 & 1.0 & 1.0 & 1.0 & 1.0 & 1.0 \\
\cline{2-15}
&complex\_updates & 1.0 & 1.0 & 1.0 & 1.0 & 1.0 & 1.0 & 1.0 & 1.0 & 1.0 & 1.0 & 1.0 & 1.0 & 1.0 \\

\hline\hline
\multirow{7}{*}{\makecell[c]{Tasks that cannot\\hit the L2 cache\\according to\\the intra-core\\cache analysis}} 
&cjpeg\_wrbmp & 1.0 & 1.0 & 1.0 & 1.0 & 1.0 & 1.0 & 1.0 & 1.0 & 1.0 & 1.0 & 1.0 & 1.0 & 1.0 \\
\cline{2-15}
&countnegative & 1.0 & 1.0 & 1.0 & 1.0 & 1.0 & 1.0 & 1.0 & 1.0 & 1.0 & 1.0 & 1.0 & 1.0 & 1.0 \\
\cline{2-15}
&fft & 1.0 & 1.0 & 1.0 & 1.0 & 1.0 & 1.0 & 1.0 & 1.0 & 1.0 & 1.0 & 1.0 & 1.0 & 1.0 \\
\cline{2-15}
&filterbank & 1.0 & 1.0 & 1.0 & 1.0 & 1.0 & 1.0 & 1.0 & 1.0 & 1.0 & 1.0 & 1.0 & 1.0 & 1.0 \\
\cline{2-15}
&lms & 1.0 & 1.0 & 1.0 & 1.0 & 1.0 & 1.0 & 1.0 & 1.0 & 1.0 & 1.0 & 1.0 & 1.0 & 1.0 \\
\cline{2-15}
&ludcmp & 1.0 & 1.0 & 1.0 & 1.0 & 1.0 & 1.0 & 1.0 & 1.0 & 1.0 & 1.0 & 1.0 & 1.0 & 1.0 \\
\cline{2-15}
&matrix1 & 1.0 & 1.0 & 1.0 & 1.0 & 1.0 & 1.0 & 1.0 & 1.0 & 1.0 & 1.0 & 1.0 & 1.0 & 1.0 \\

\hline

\end{tabular}}
\end{table*}

\textbf{Detailed Comparison of Each Evaluated Task.} Third, Fig.~\ref{fig:adpcm_dec} to~\ref{fig:matrix1} present the absolute value of the inter-core cache interference and the total WCET estimations of the 36 tasks under different interfering tasks. 
The results highlight the portion of the inter-core cache interference in the total WCET of each task, which validates the results presented above and the effectiveness of the proposed analysis. 

\begin{table}[htbp]
\centering
\caption{Tasks of program FLY-BY-WIRE and AUTOPILOT in the PapaBench suite.}
\label{tab:task_frequencies}
\resizebox{.5\textwidth}{!}{
\begin{tabular}{|l|l|}
\hline
Tasks in AUTOPILOT  & Tasks in FLY-BY-WIRE  \\
\hline\hline
altitude\_control\_task &check\_failsafe\_task \\
climb\_control\_task & check\_mega128\_values\_task \\
link\_fbw\_send & \textbf{send\_data\_to\_autopilot\_task}  \\
stabilisation\_task&servo\_transmit   \\
\textbf{radio\_control\_task} & test\_ppm\_task   \\
\hline
\end{tabular}
}
\vspace{-5pt}
\end{table}


\textbf{Evaluation of Papabench.} Finally, the tasks in Papabench are analysed using the proposed analysis and Zhang2022. 
The Papabench produces two executable files (Fly-By-Wire and Autopilot), representing two applications running on two processors. 
Each application invokes a set of different tasks, as shown in Tab.~\ref{tab:task_frequencies}. 
However, we observe that most tasks (8 out of 10) in Papabench can always hit the L1 cache (except the first miss). Thus, similar to the second group of the TacleBench tasks evaluated above, both the proposed analysis and the analysis in~\cite{zhang2022precise} return an inter-core cache interference of zero with an identical WCET, and hence, their results are omitted. 

The task \texttt{radio\_control\_task} (in AUTOPILOT) and task \texttt{send\_data\_to\_autopilot\_task} (in FLY-BY-WIRE) would incur L2 cache evictions due to interfering tasks. 
Fig.~\ref{fig:papa_radio}
presents the inter-core cache interference and the total WCET for \texttt{radio\_control\_task} when being executed under the interference from each of the five tasks in FLY-BY-WIRE.
Fig.~\ref{fig:papa_send} shows the results of \texttt{send\_data\_to\_autopilot\_task} under the interference of each task in Autopilot.
As observed, for task \texttt{radio_control_task}, the proposed analysis outperforms Zhang2022 by reducing the inter-core cache interference and total WCET estimations by 87.1\% and 7.56\%, respectively. 
For task \texttt{send\_data\_to\_autopilot\_task}, the proposed analysis outperforms Zhang2022 by reducing the inter-core cache interference and total WCET estimations by 62.42\% and 2.78\%, respectively.

\clearpage
\twocolumn
\raggedbottom

\begin{figure}[t]
    \centering
    \includegraphics[width=1\linewidth]{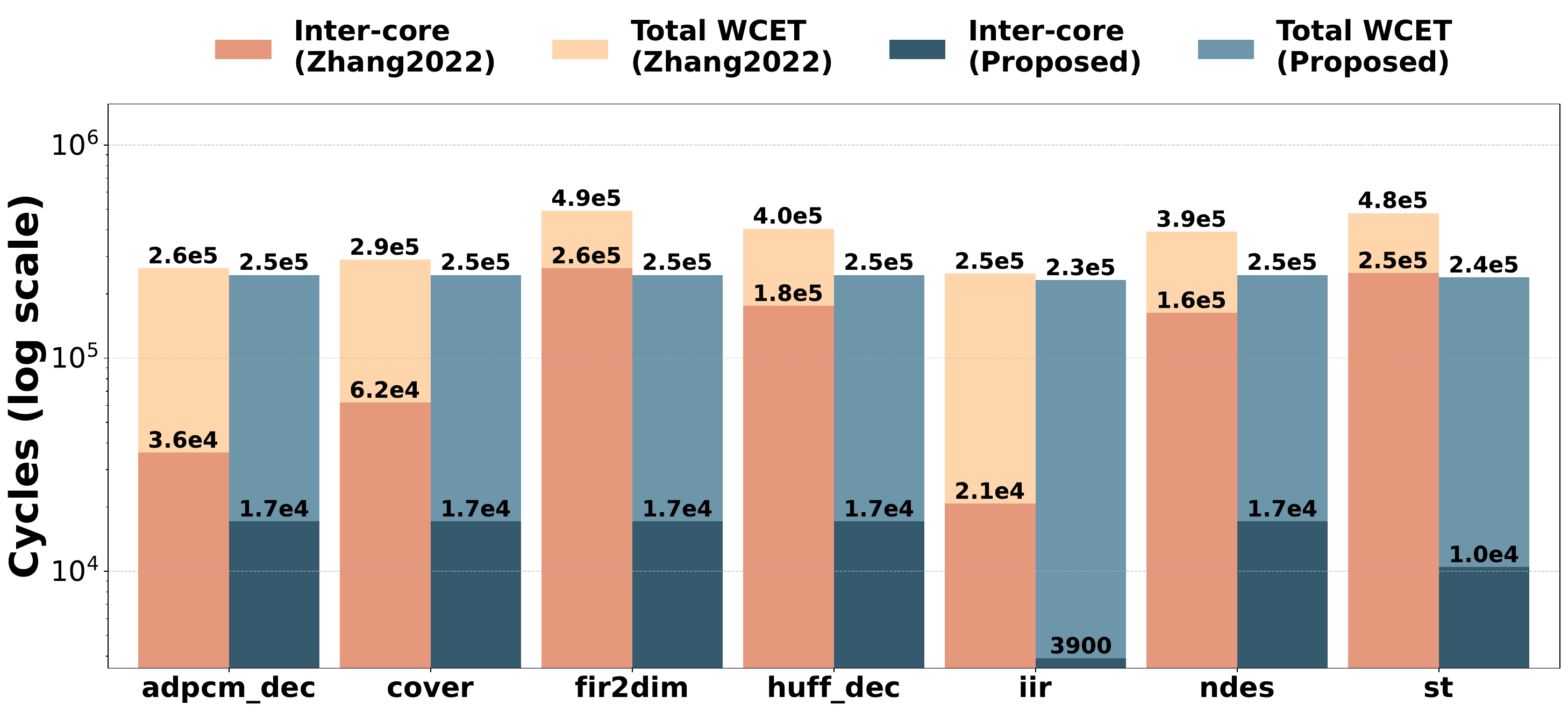}
    \caption{Inter-core cache interference and total WCET  of \texttt{adpcm\_dec} with $\kappa=2$ \textit{(y-axis: interfering task; dark colours: inter-core cache interference; light colours: WCET)}.}
    \label{fig:adpcm_dec}
\end{figure}

\begin{figure}[t]
    \centering
    \includegraphics[width=1\linewidth]{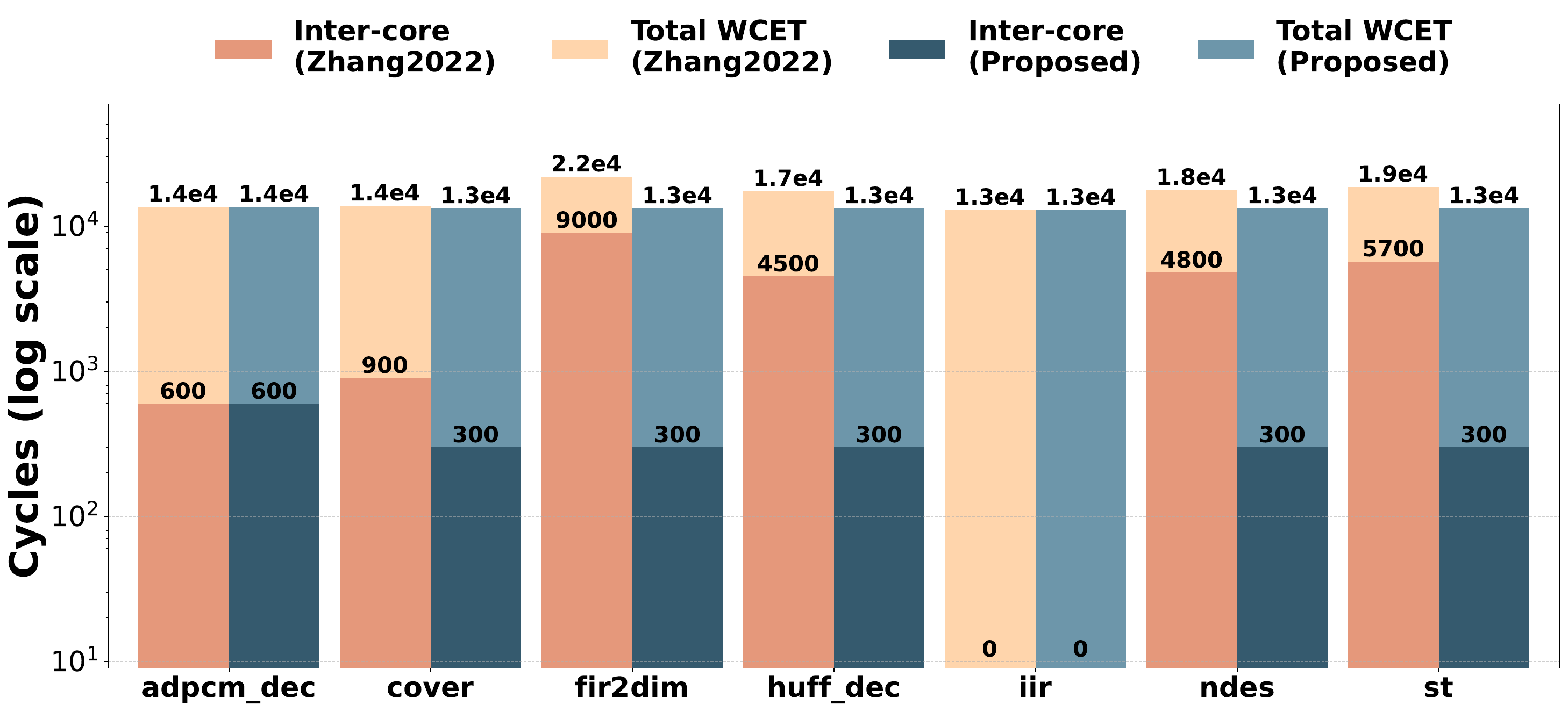}
    \caption{Inter-core cache interference and total WCET of \texttt{binarysearch} with $\kappa=2$ \textit{(y-axis: interfering task; dark colours: inter-core cache interference; light colours: WCET)}.}
    \label{fig:binarysearch}
\end{figure}

\begin{figure}[t]
\centering
\includegraphics[width=1\linewidth]{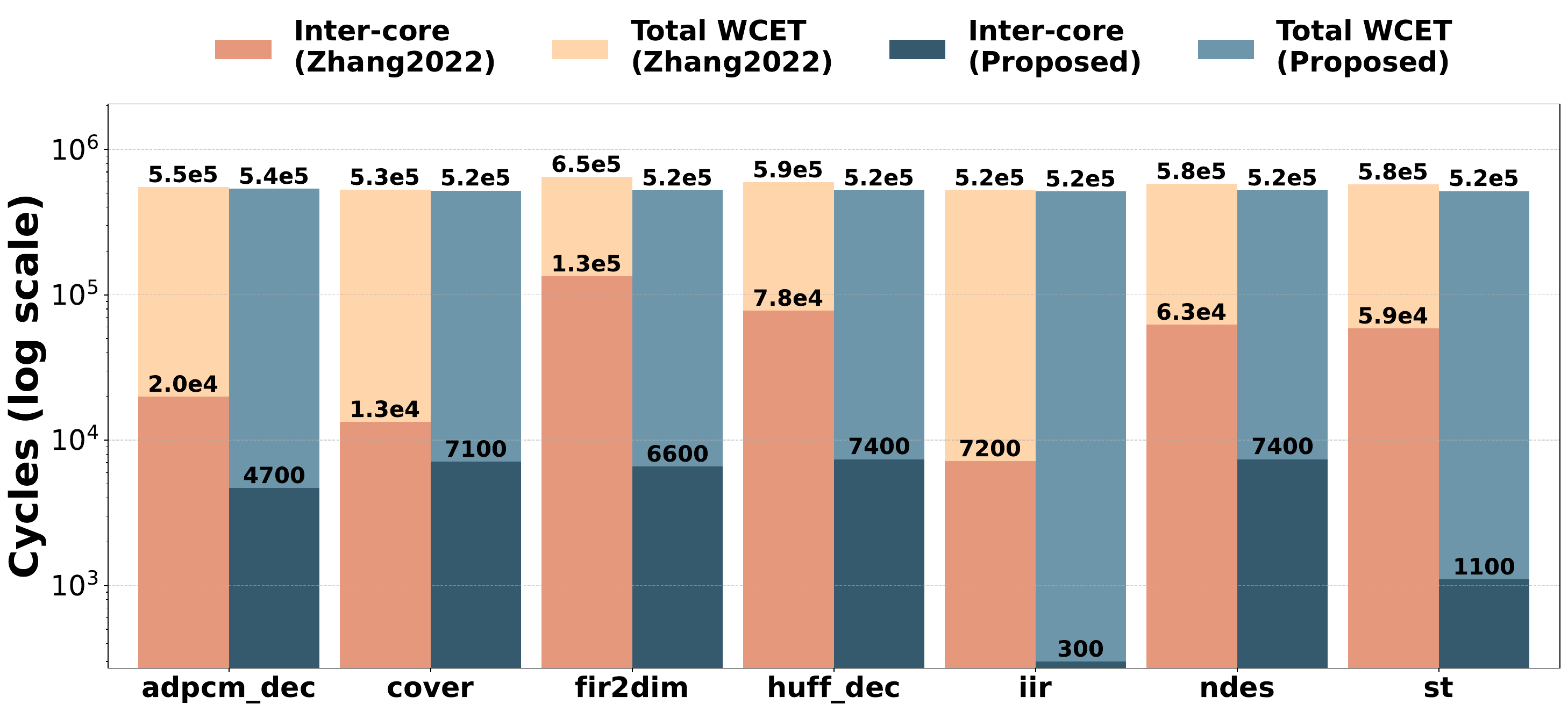}
\caption{Inter-core cache interference and total WCET  of \texttt{fir2dim} with $\kappa=2$ \textit{(y-axis: interfering task; dark colours: inter-core cache interference; light colours: WCET)}.}
\label{fig:fir2dim}
\end{figure}

\begin{figure}[t]
    \centering
    \includegraphics[width=1\linewidth]{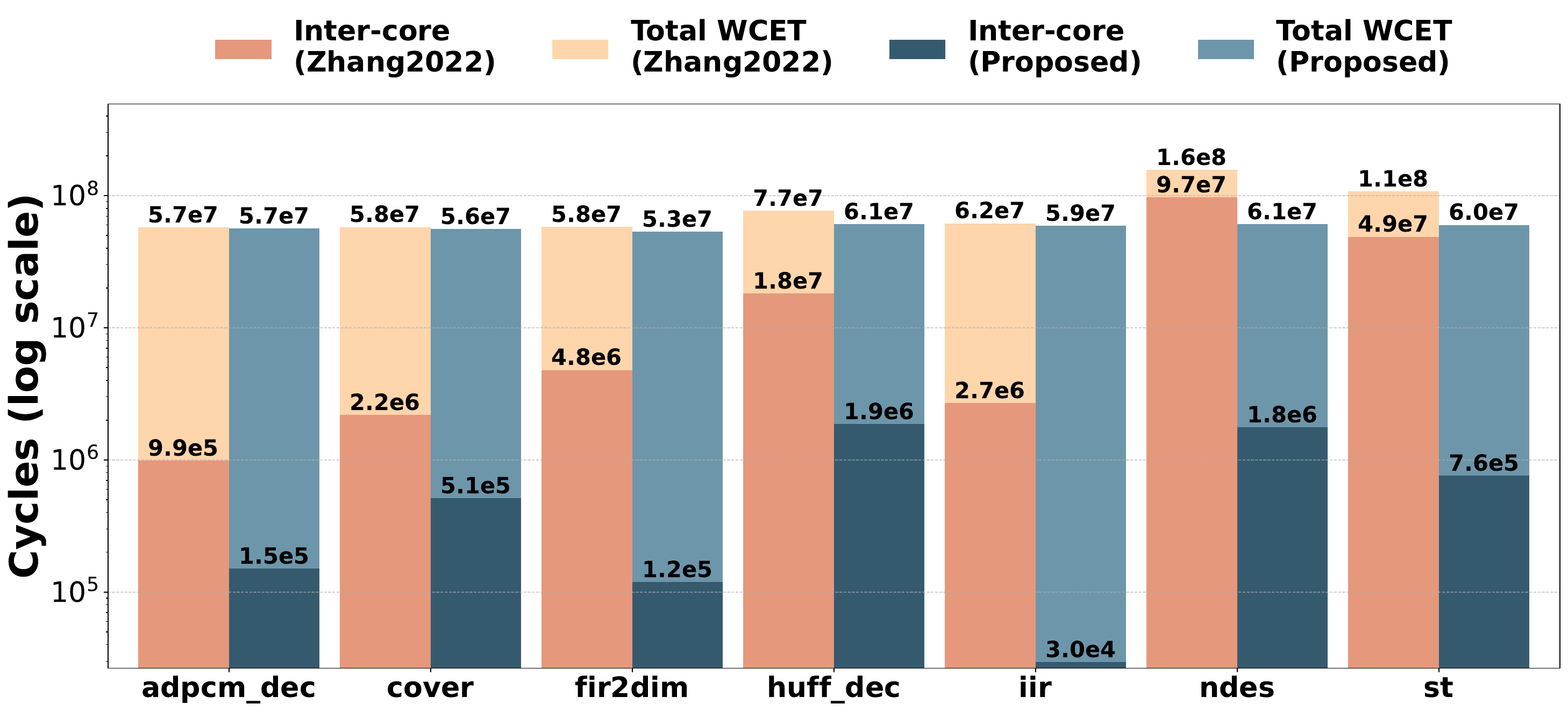}
    \caption{Inter-core cache interference and total WCET  of \texttt{fmref} with $\kappa=2$ \textit{(y-axis: interfering task; dark colours: inter-core cache interference; light colours: WCET)}.}
    \label{fig:fmref}
\end{figure}

\begin{figure}[t]
    \centering
    \includegraphics[width=1\linewidth]{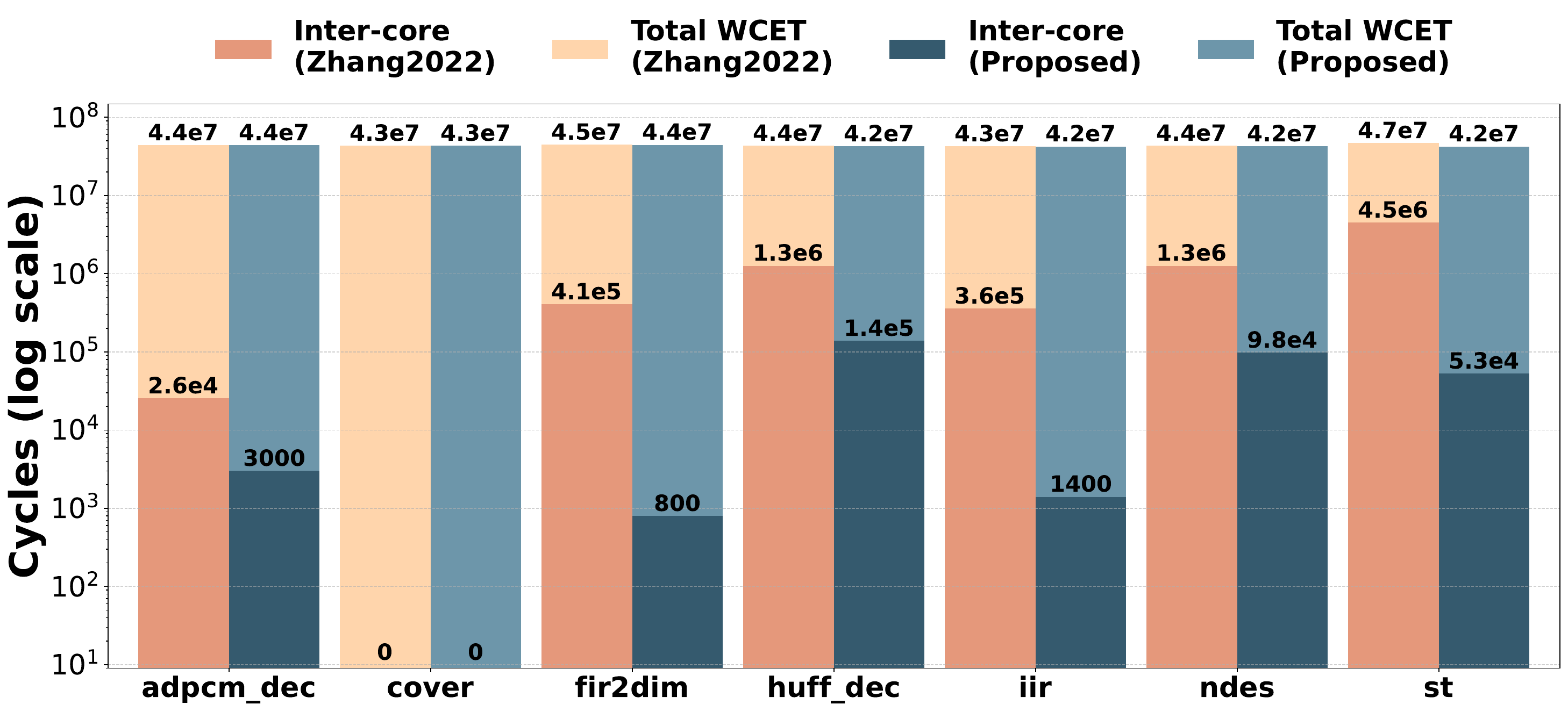}
    \caption{Inter-core cache interference and total WCET  of \texttt{huff\_dec} with $\kappa=2$ \textit{(y-axis: interfering task; dark colours: inter-core cache interference; light colours: WCET)}.}
    \label{fig:huff_dec}
\end{figure}

\begin{figure}[t]
    \centering
    \includegraphics[width=1\linewidth]{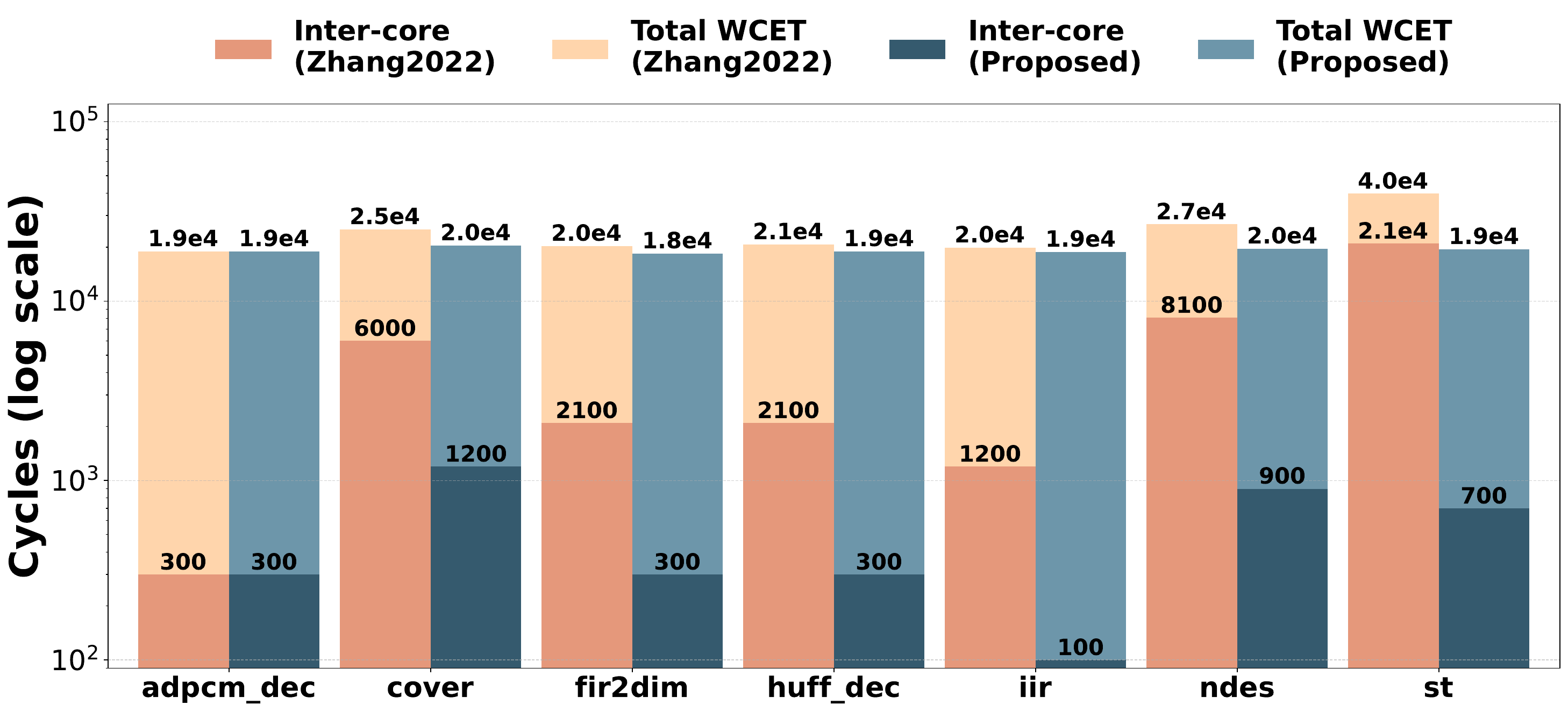}
    \caption{Inter-core cache interference and total WCET  of \texttt{iir} with $\kappa=2$ \textit{(y-axis: interfering task; dark colours: inter-core cache interference; light colours: WCET)}.}
    \label{fig:iir}
\end{figure}

\begin{figure}[t]
    \centering
    \includegraphics[width=1\linewidth]{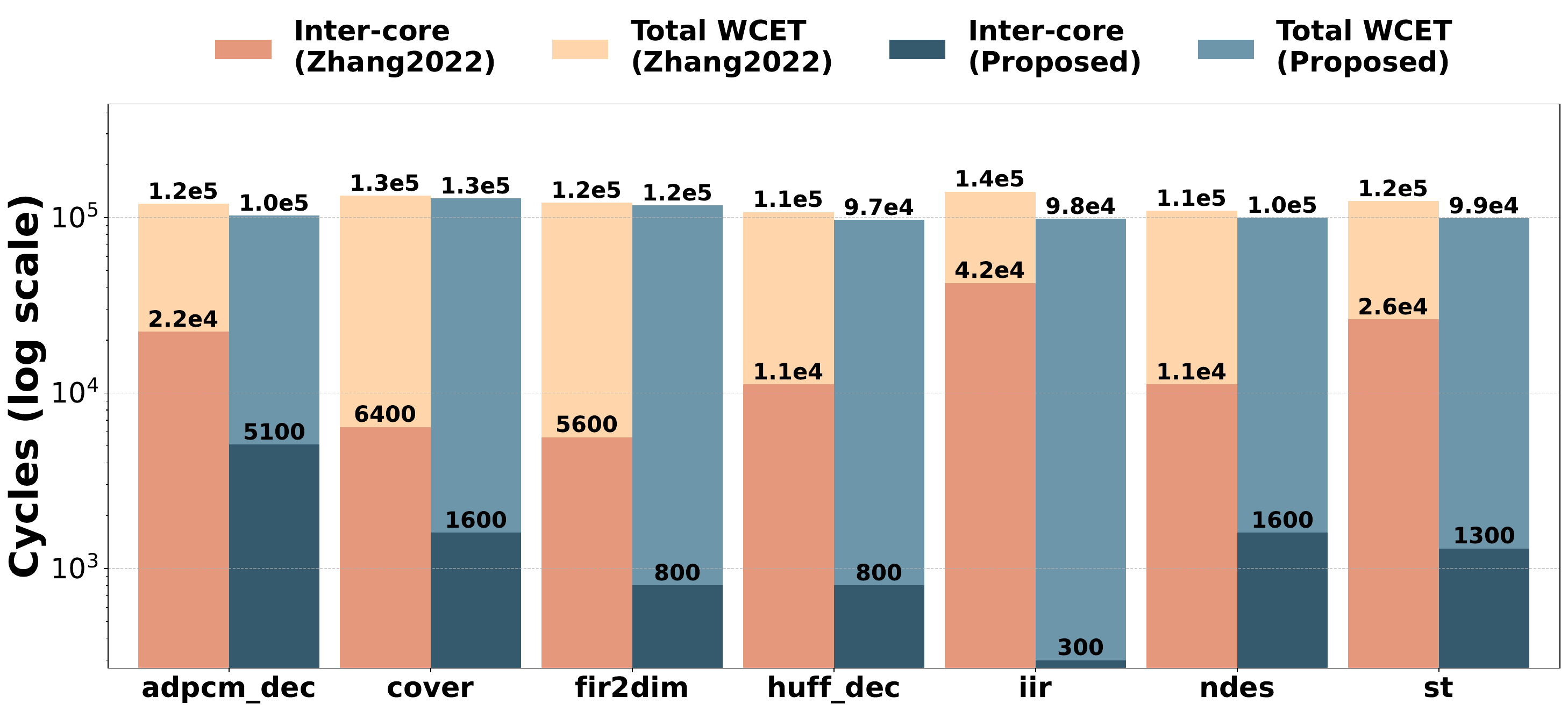}
    \caption{Inter-core cache interference and total WCET  of \texttt{insertsort} with $\kappa=2$ \textit{(y-axis: interfering task; dark colours: inter-core cache interference; light colours: WCET)}.}
    \label{fig:insertsort}
\end{figure}

\begin{figure}[t]
    \centering
    \includegraphics[width=1\linewidth]{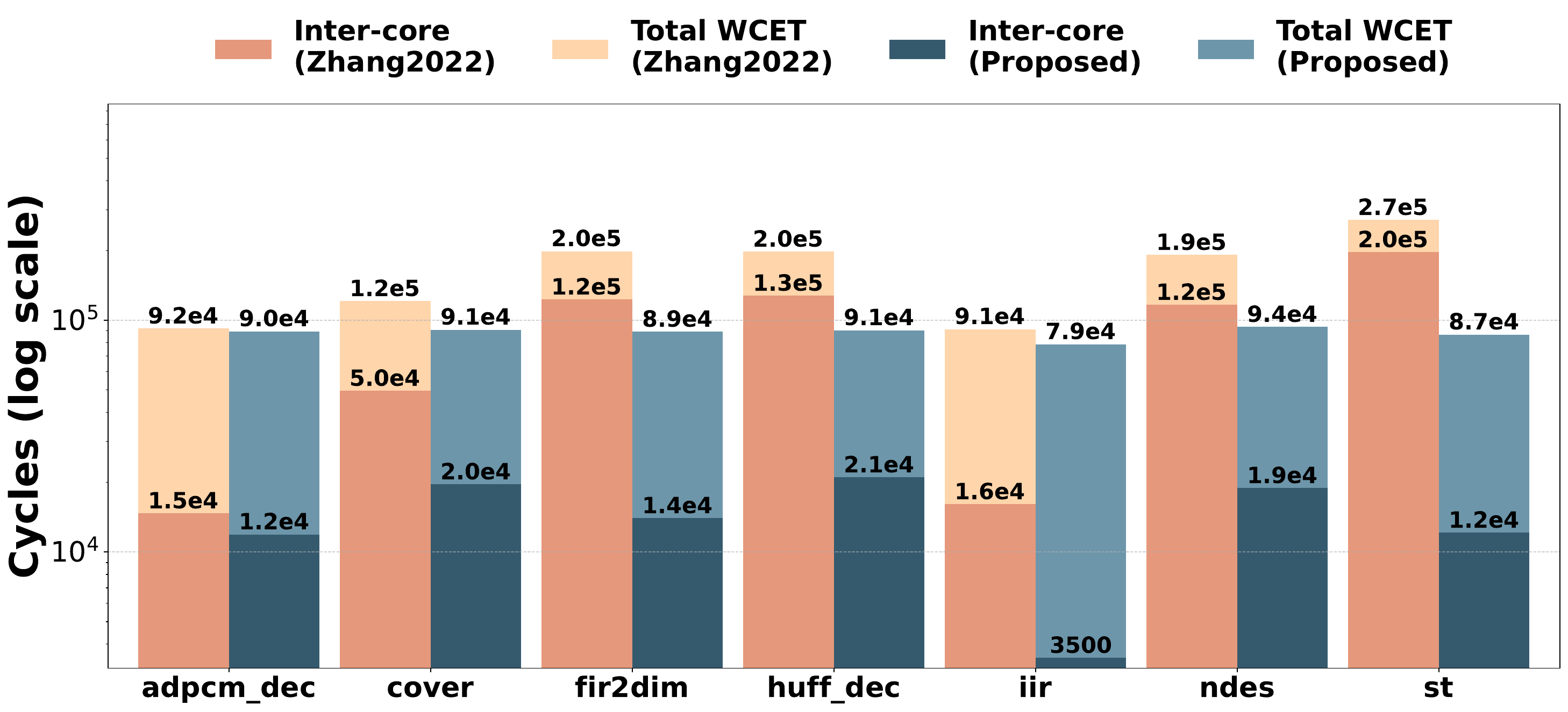}
    \caption{Inter-core cache interference and total WCET  of \texttt{jfdctint} with $\kappa=2$ \textit{(y-axis: interfering task; dark colours: inter-core cache interference; light colours: WCET)}.}
    \label{fig:jfdctint}
\end{figure}

\begin{figure}[t]
    \centering
    \includegraphics[width=1\linewidth]{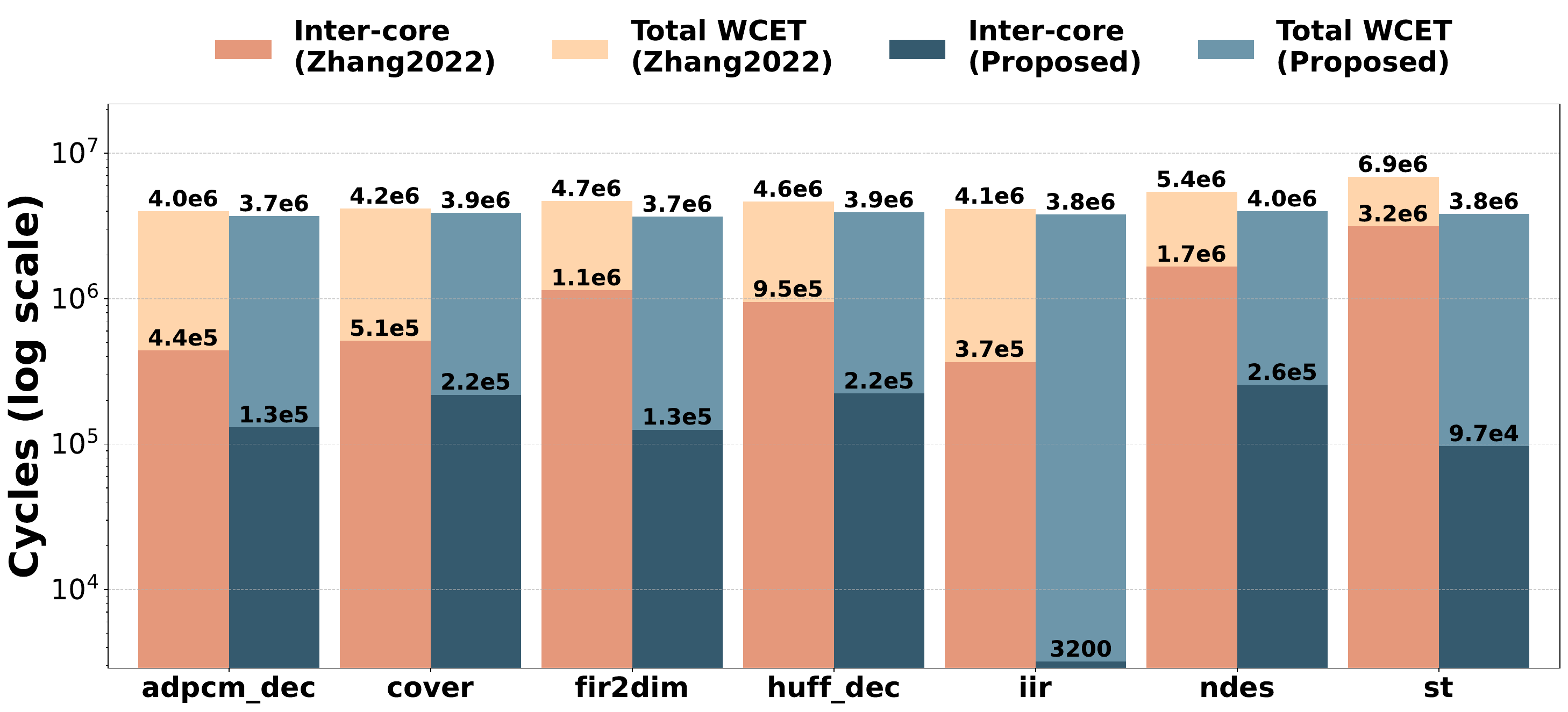}
    \caption{Inter-core cache interference and total WCET  of \texttt{ndes} with $\kappa=2$ \textit{(y-axis: interfering task; dark colours: inter-core cache interference; light colours: WCET)}.}
    \label{fig:ndes}
\end{figure}

\begin{figure}[t]
    \centering
    \includegraphics[width=1\linewidth]{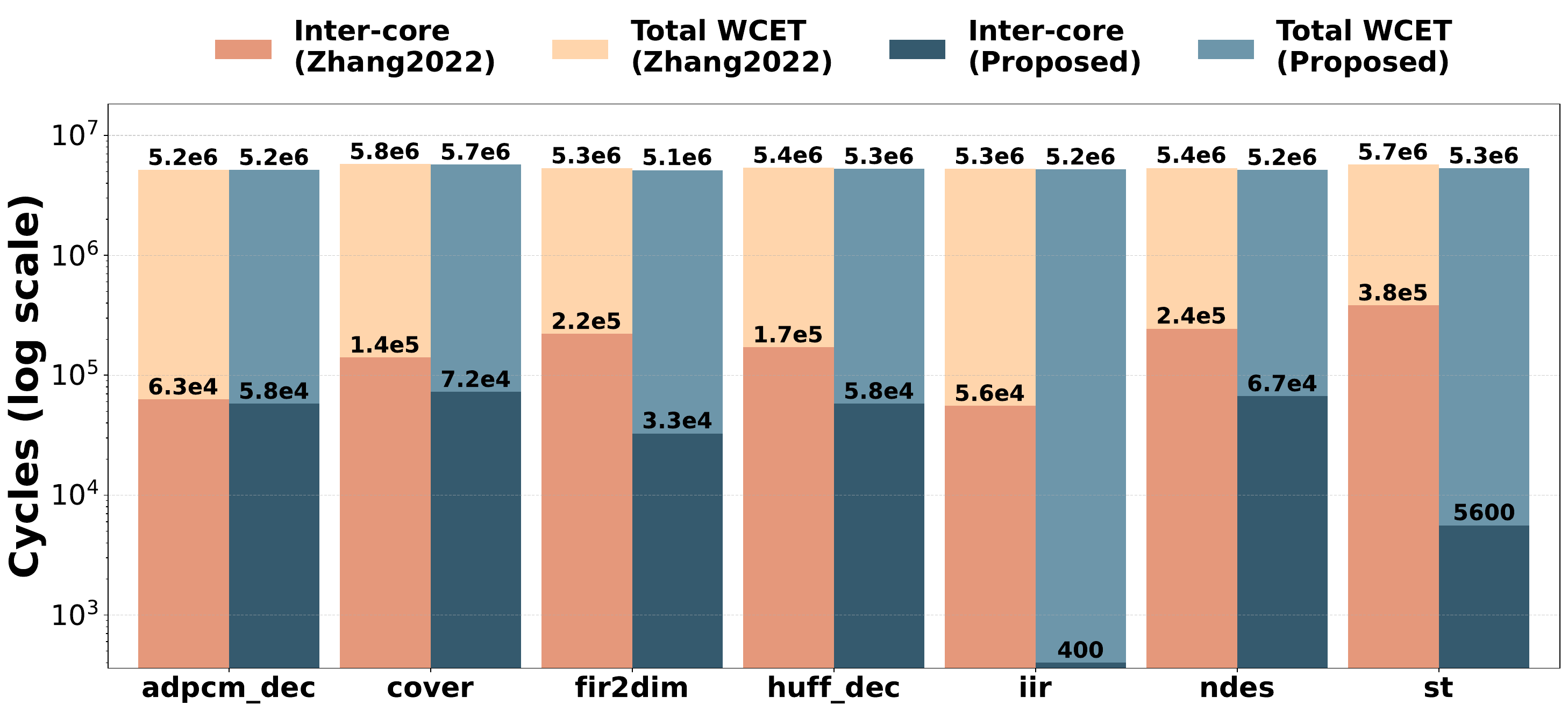}
    \caption{Inter-core cache interference and total WCET  of \texttt{st} with $\kappa=2$ \textit{(y-axis: interfering task; dark colours: inter-core cache interference; light colours: WCET)}.}
    \label{fig:st}
\end{figure}

\begin{figure}[t]
    \centering
    \includegraphics[width=1\linewidth]{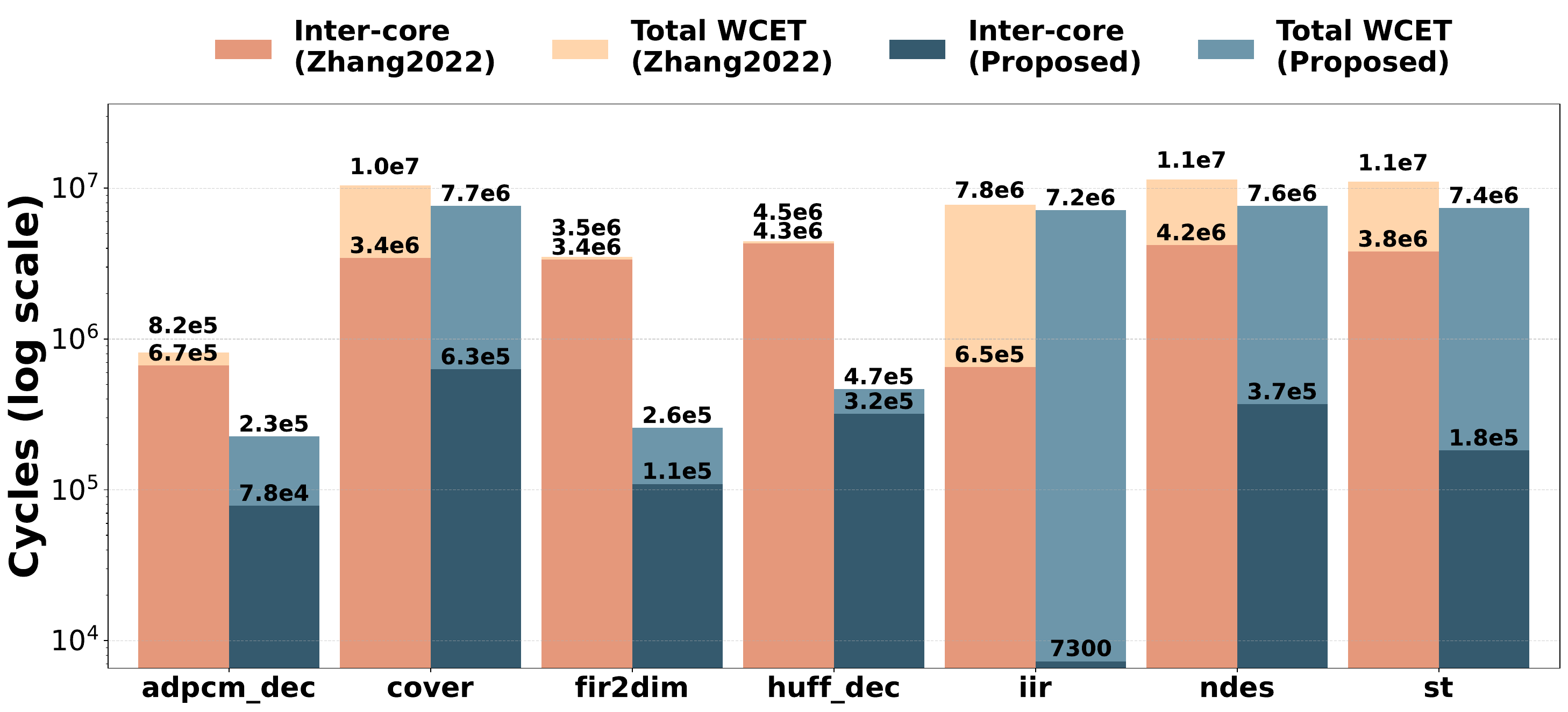}
    \caption{Inter-core cache interference and total WCET  of \texttt{statemate} with $\kappa=2$ \textit{(y-axis: interfering task; dark colours: inter-core cache interference; light colours: WCET)}.}
    \label{fig:statemate}
\end{figure}

\begin{figure}[t]
    \centering
    \includegraphics[width=1\linewidth]{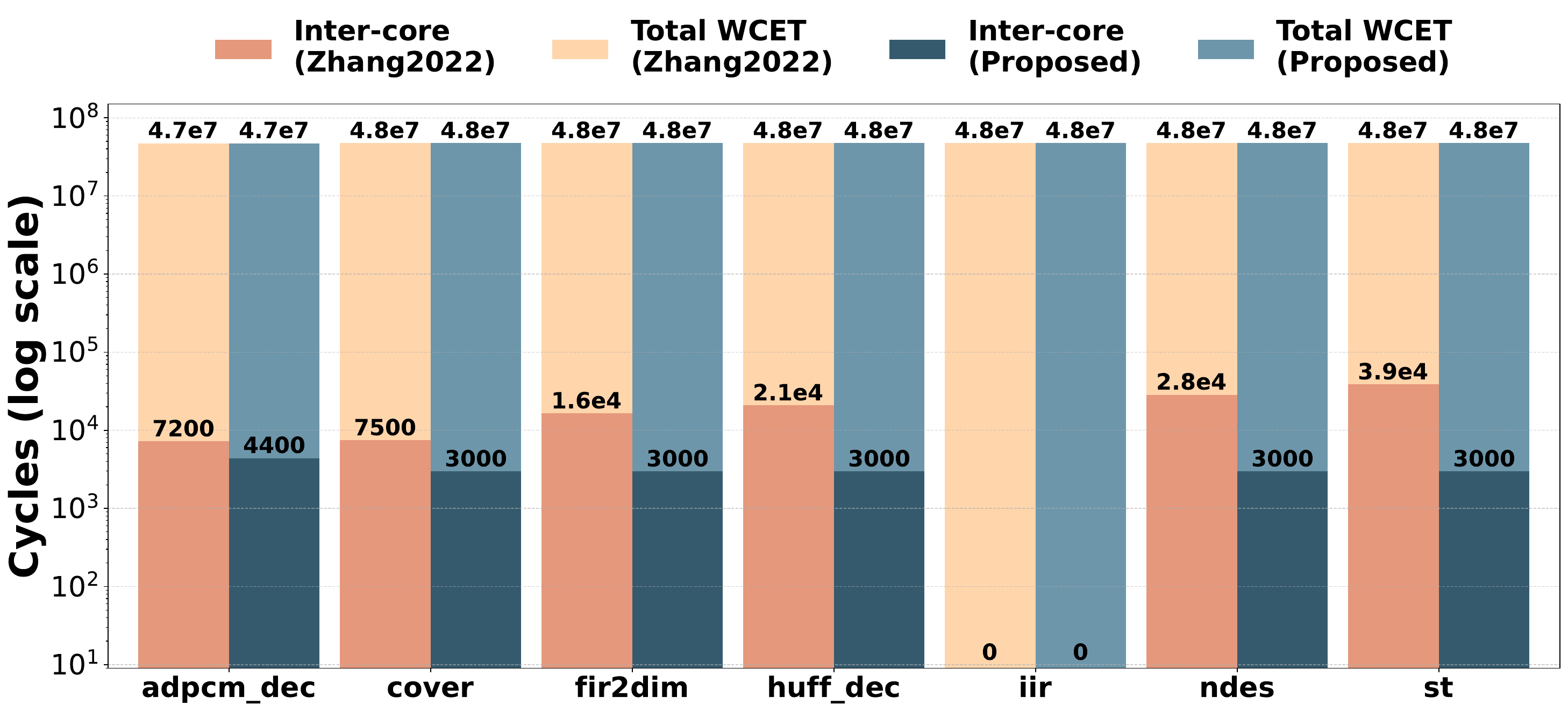}
    \caption{Inter-core cache interference and total WCET  of \texttt{audiobeam} with $\kappa=2$ \textit{(y-axis: interfering task; dark colours: inter-core cache interference; light colours: WCET)}.}
    \label{fig:audiobeam}
\end{figure}

\begin{figure}[t]
    \centering
    \includegraphics[width=1\linewidth]{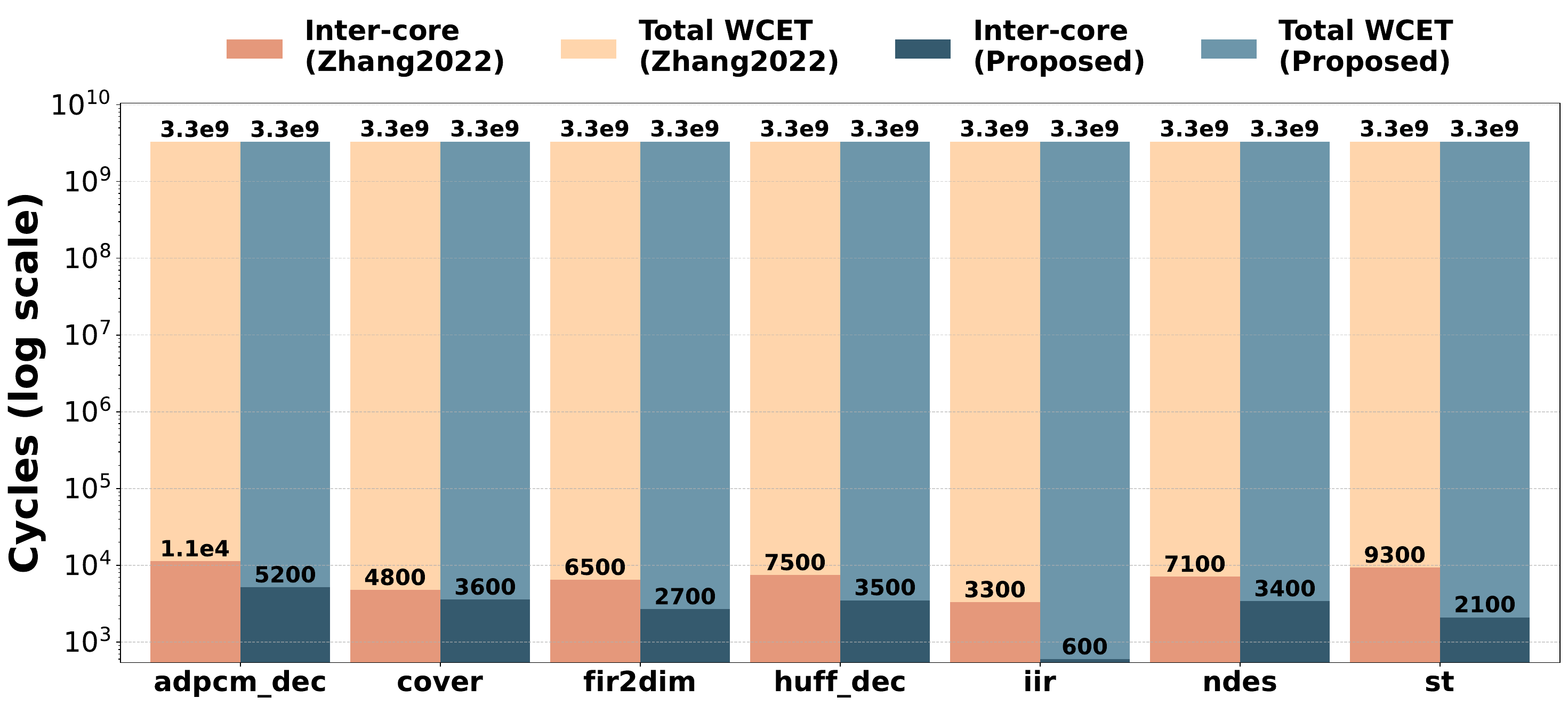}
    \caption{Inter-core cache interference and total WCET  of \texttt{cjpeg\_transupp} with $\kappa=2$ \textit{(y-axis: interfering task; dark colours: inter-core cache interference; light colours: WCET)}.}
    \label{fig:cjpeg_transupp}
\end{figure}

\begin{figure}[t]
    \centering
    \includegraphics[width=1\linewidth]{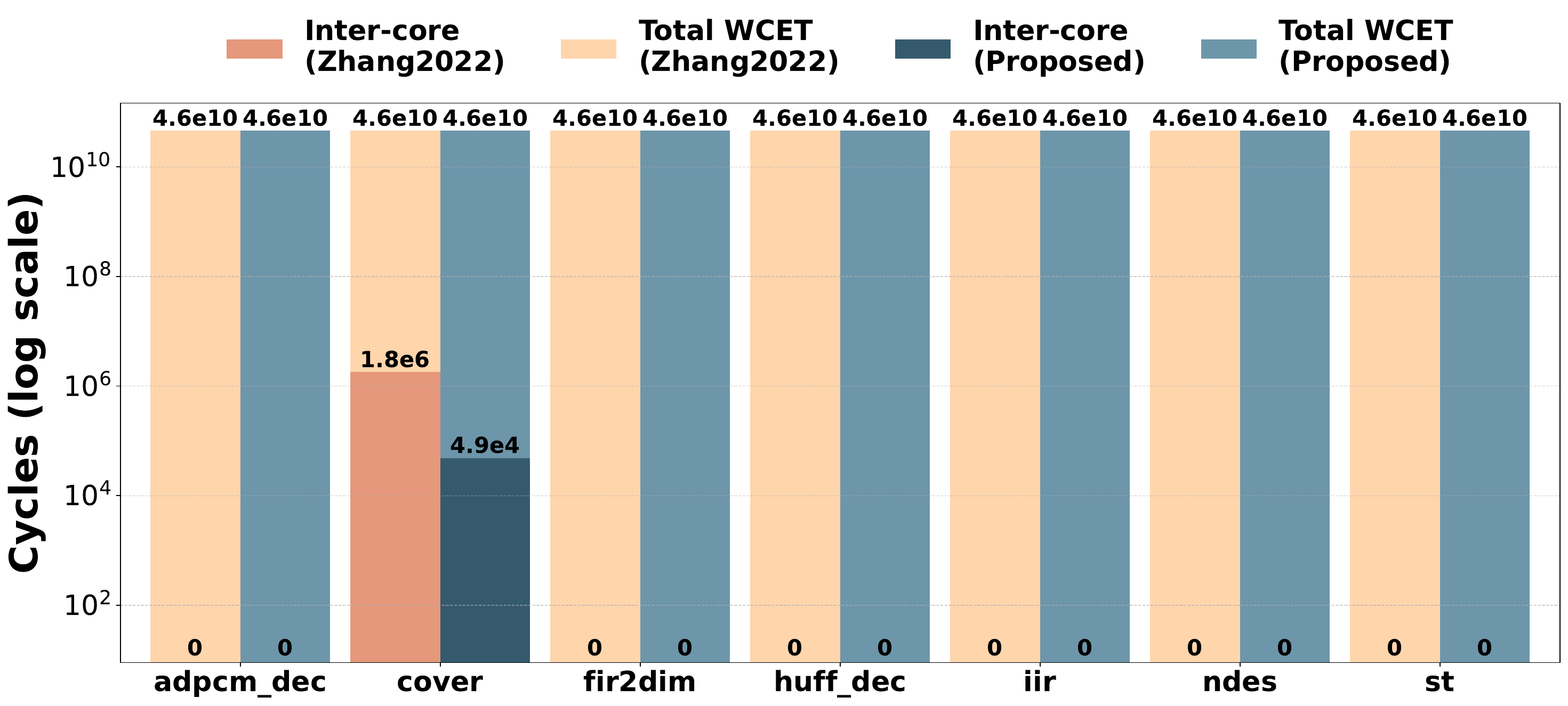}
    \caption{Inter-core cache interference and total WCET  of \texttt{dijkstra} with $\kappa=2$ \textit{(y-axis: interfering task; dark colours: inter-core cache interference; light colours: WCET)}.}
    \label{fig:dijkstra}
\end{figure}

\begin{figure}[t]
    \centering
    \includegraphics[width=1\linewidth]{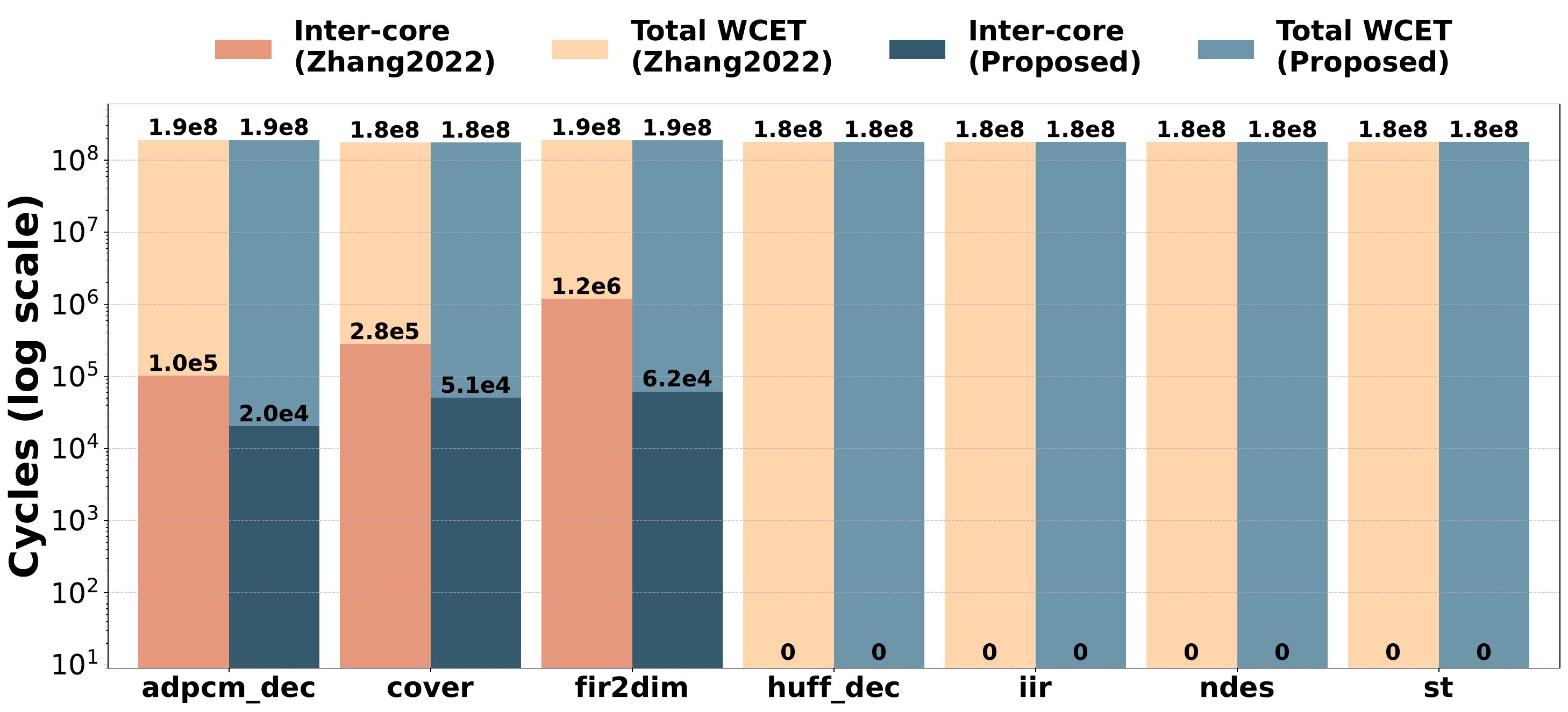}
    \caption{Inter-core cache interference and total WCET  of \texttt{g723\_enc} with $\kappa=2$ \textit{(y-axis: interfering task; dark colours: inter-core cache interference; light colours: WCET)}.}
    \label{fig:g723_enc}
\end{figure}

\begin{figure}[t]
    \centering
    \includegraphics[width=1\linewidth]{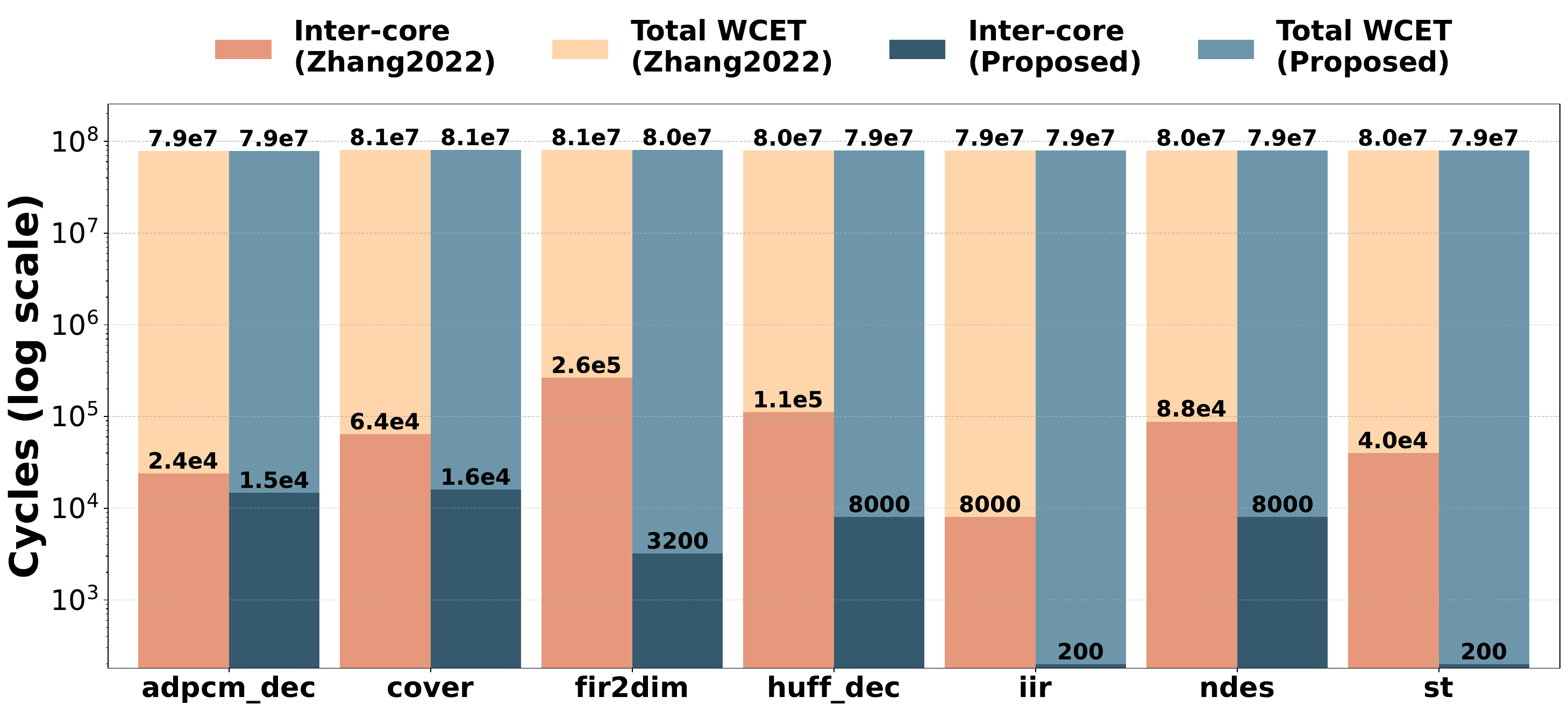}
    \caption{Inter-core cache interference and total WCET  of \texttt{gsm\_dec} with $\kappa=2$ \textit{(y-axis: interfering task; dark colours: inter-core cache interference; light colours: WCET)}.}
    \label{fig:gsm_dec}
\end{figure}

\begin{figure}[t]
    \centering
    \includegraphics[width=1\linewidth]{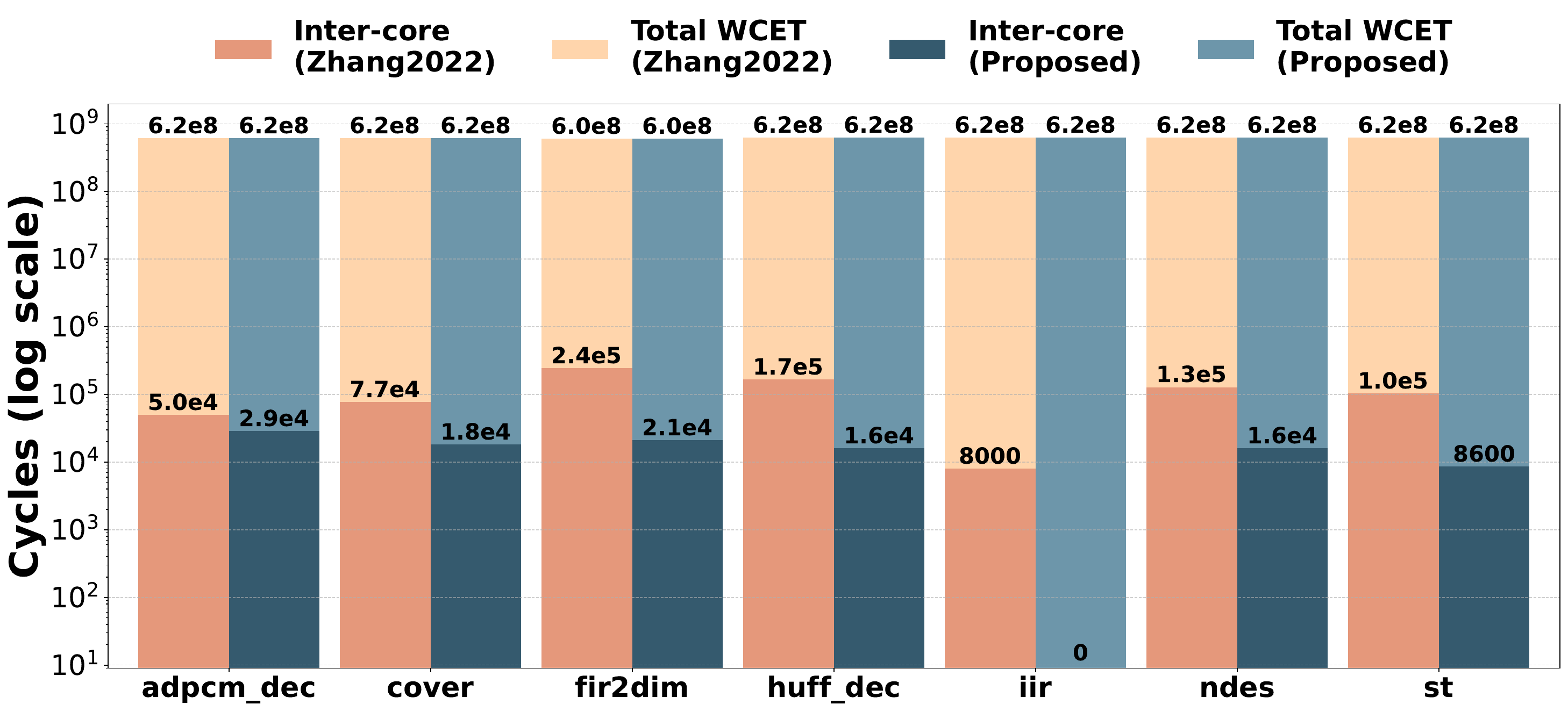}
    \caption{Inter-core cache interference and total WCET  of \texttt{gsm\_enc} with $\kappa=2$ \textit{(y-axis: interfering task; dark colours: inter-core cache interference; light colours: WCET)}.}
    \label{fig:gsm_enc}
\end{figure}

\begin{figure}[t]
    \centering
    \includegraphics[width=1\linewidth]{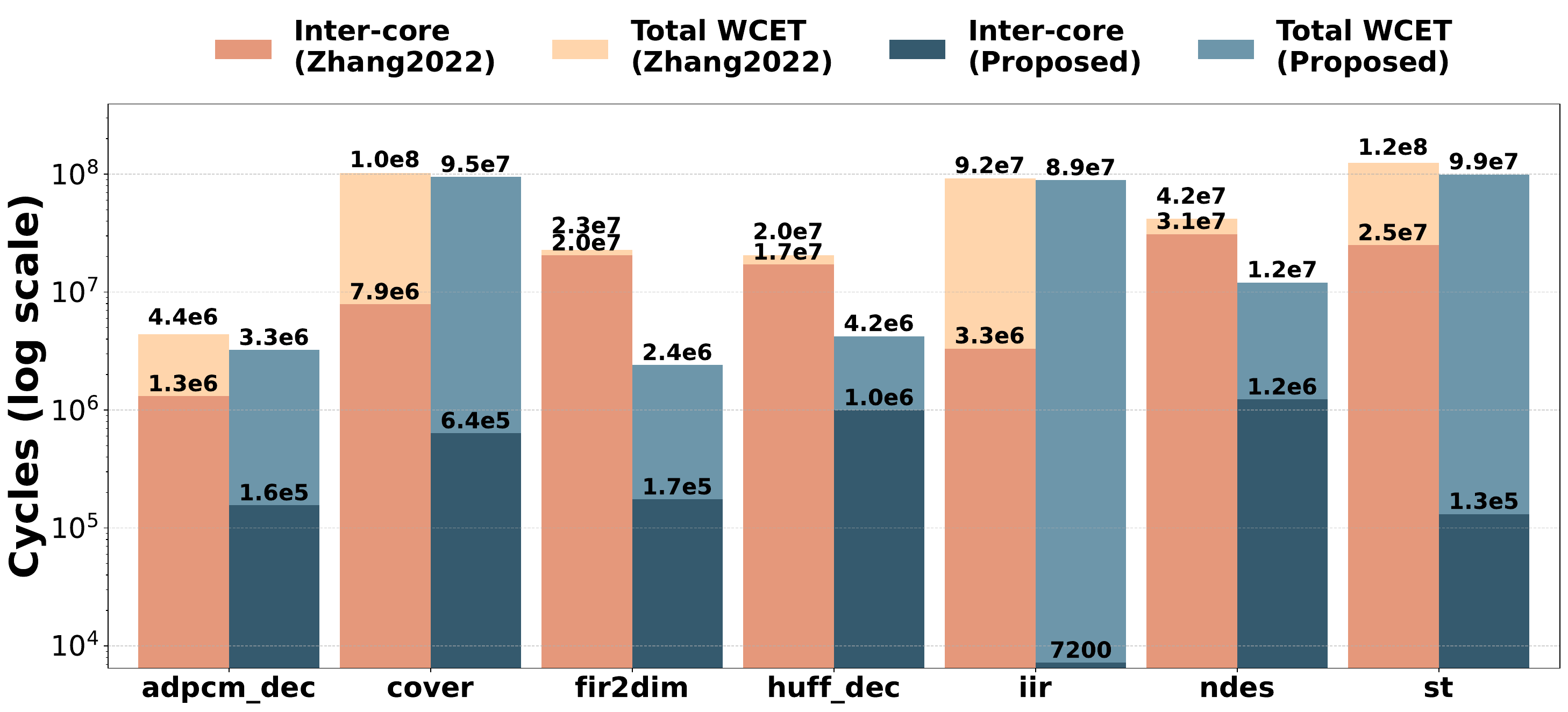}
    \caption{Inter-core cache interference and total WCET  of \texttt{h264\_dec} with $\kappa=2$ \textit{(y-axis: interfering task; dark colours: inter-core cache interference; light colours: WCET)}.}
    \label{fig:h264_dec}
\end{figure}

\begin{figure}[t]
    \centering
    \includegraphics[width=1\linewidth]{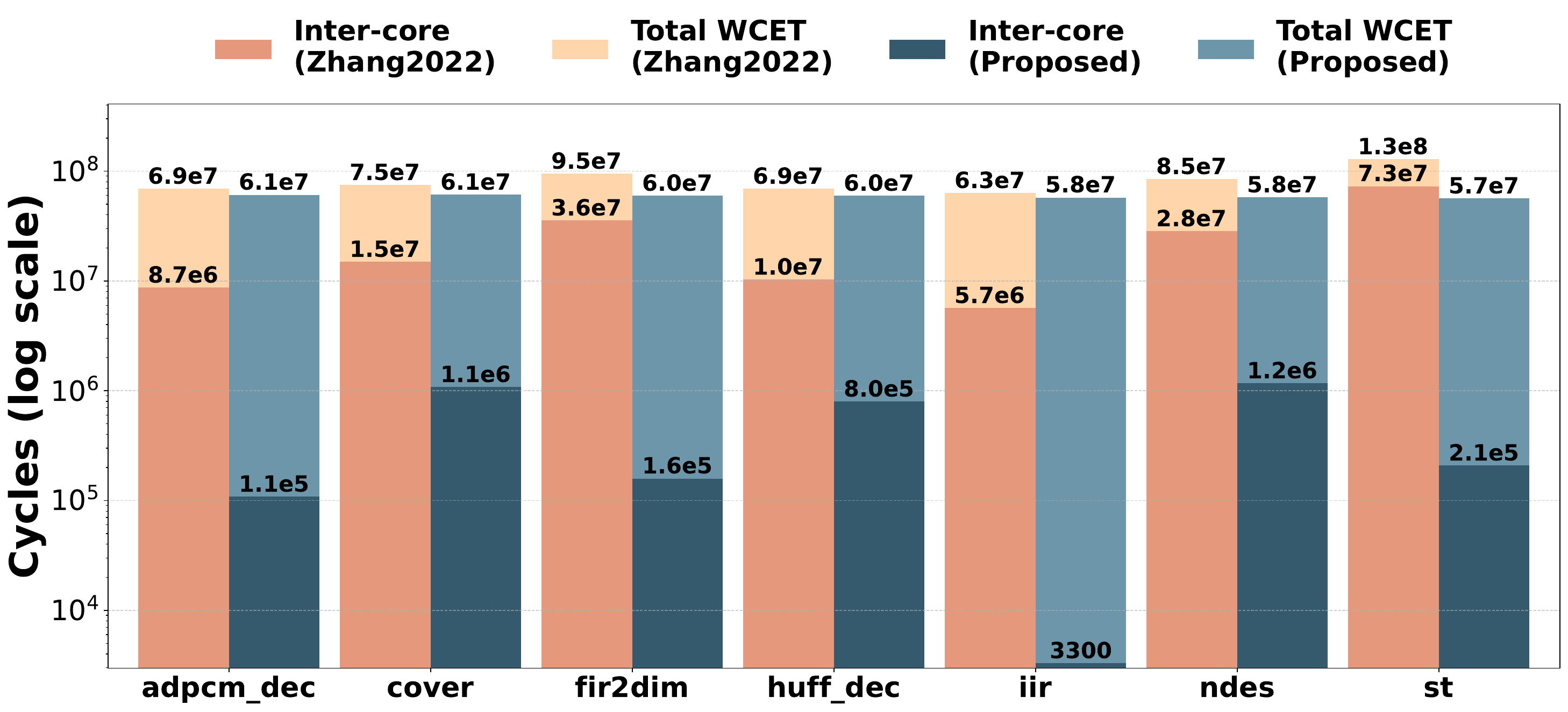}
    \caption{Inter-core cache interference and total WCET  of \texttt{lift} with $\kappa=2$ \textit{(y-axis: interfering task; dark colours: inter-core cache interference; light colours: WCET)}.}
    \label{fig:lift}
\end{figure}

\begin{figure}[t]
    \centering
    \includegraphics[width=1\linewidth]{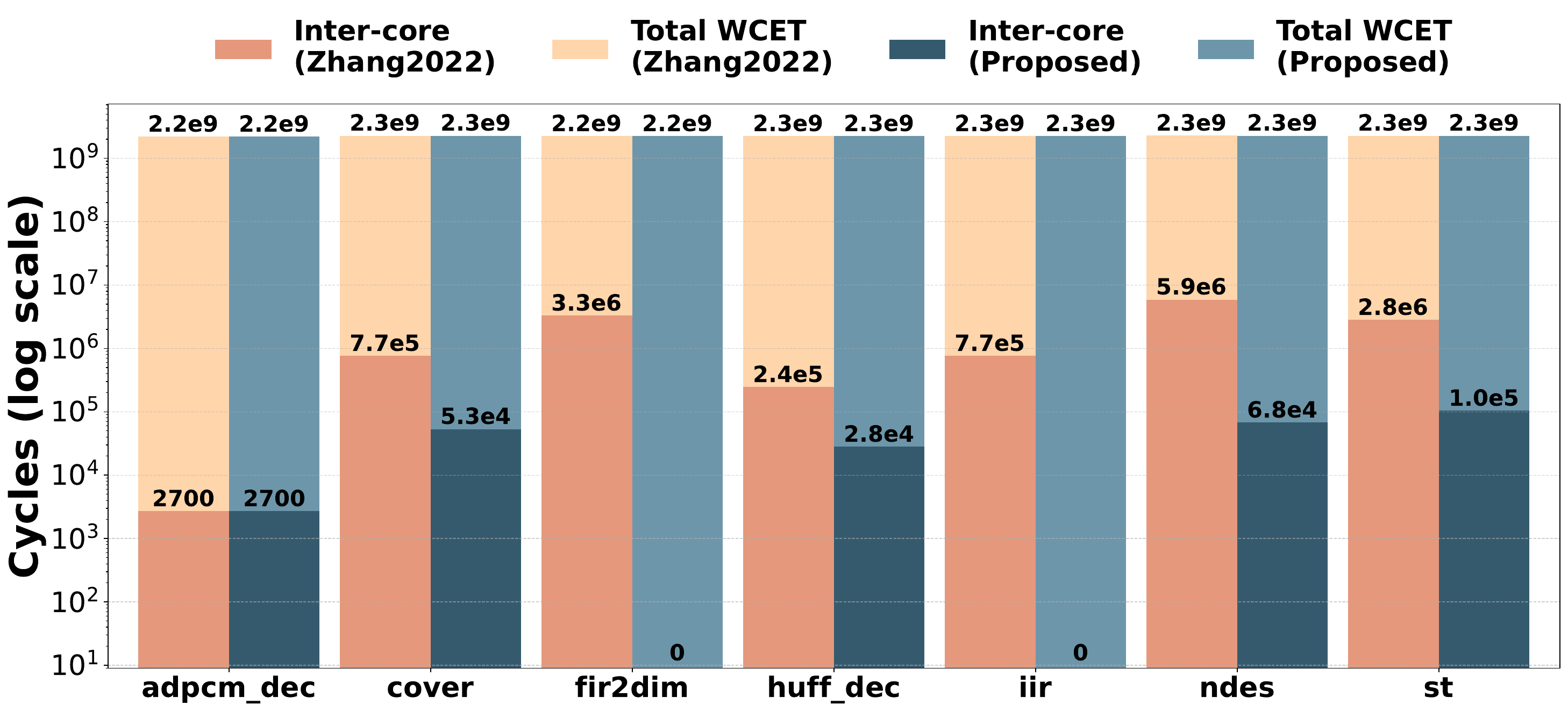}
    \caption{Inter-core cache interference and total WCET  of \texttt{md5} with $\kappa=2$ \textit{(y-axis: interfering task; dark colours: inter-core cache interference; light colours: WCET)}.}
    \label{fig:md5}
\end{figure}

\begin{figure}[t]
    \centering
    \includegraphics[width=1\linewidth]{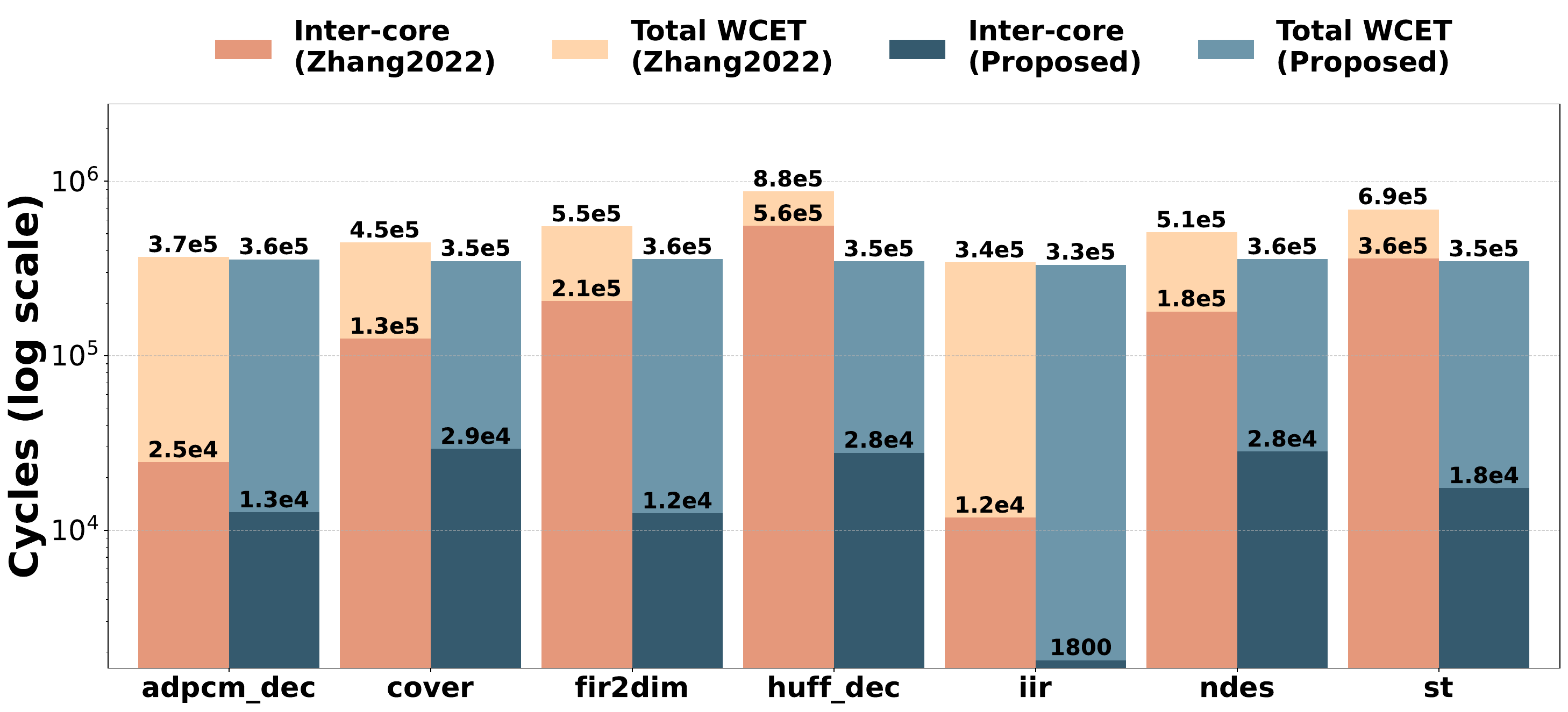}
    \caption{Inter-core cache interference and total WCET  of \texttt{minver} with $\kappa=2$ \textit{(y-axis: interfering task; dark colours: inter-core cache interference; light colours: WCET)}.}
    \label{fig:minver}
\end{figure}

\begin{figure}[t]
    \centering
    \includegraphics[width=1\linewidth]{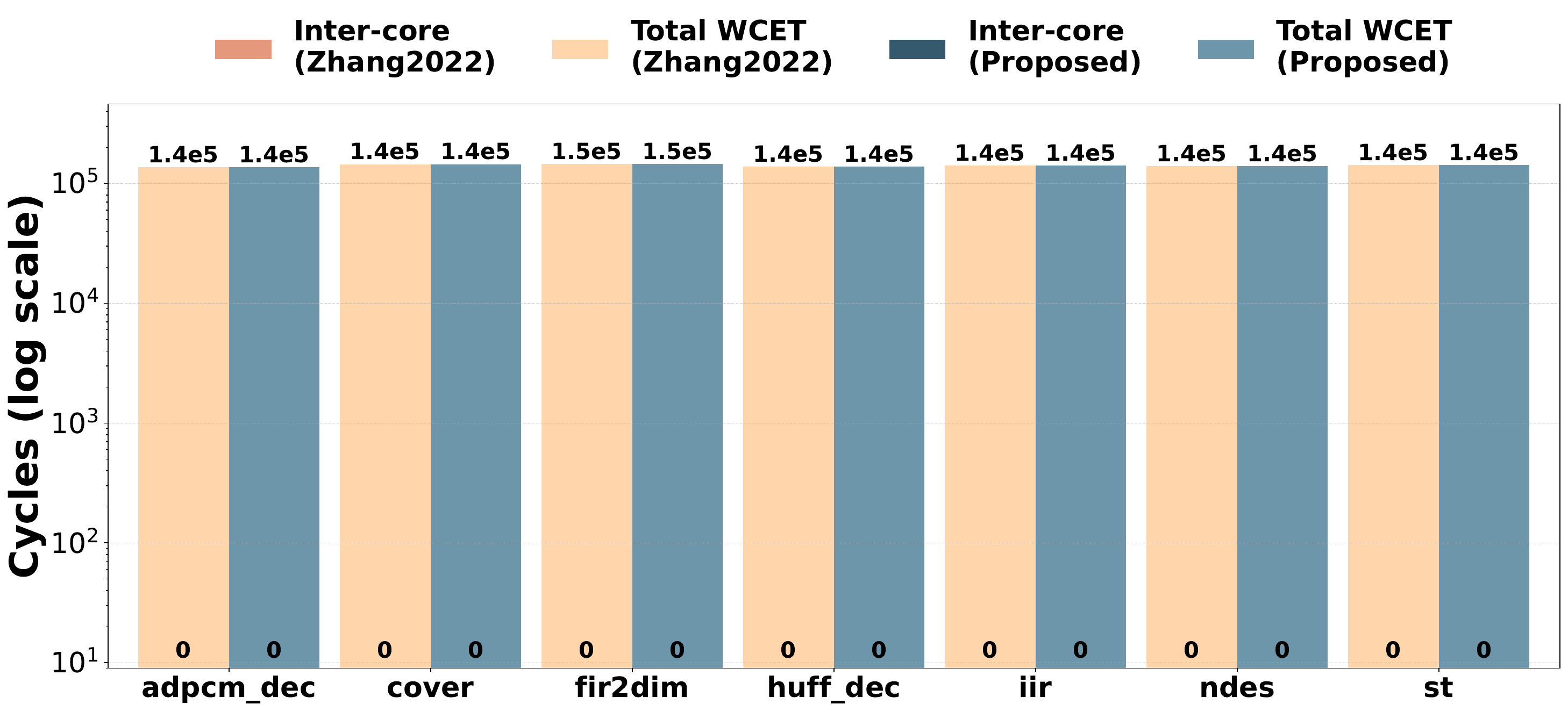}
    \caption{Inter-core cache interference and total WCET  of \texttt{petrinet} with $\kappa=2$ \textit{(y-axis: interfering task; dark colours: inter-core cache interference; light colours: WCET)}.}
    \label{fig:petrinet}
\end{figure}

\begin{figure}[t]
    \centering
    \includegraphics[width=1\linewidth]{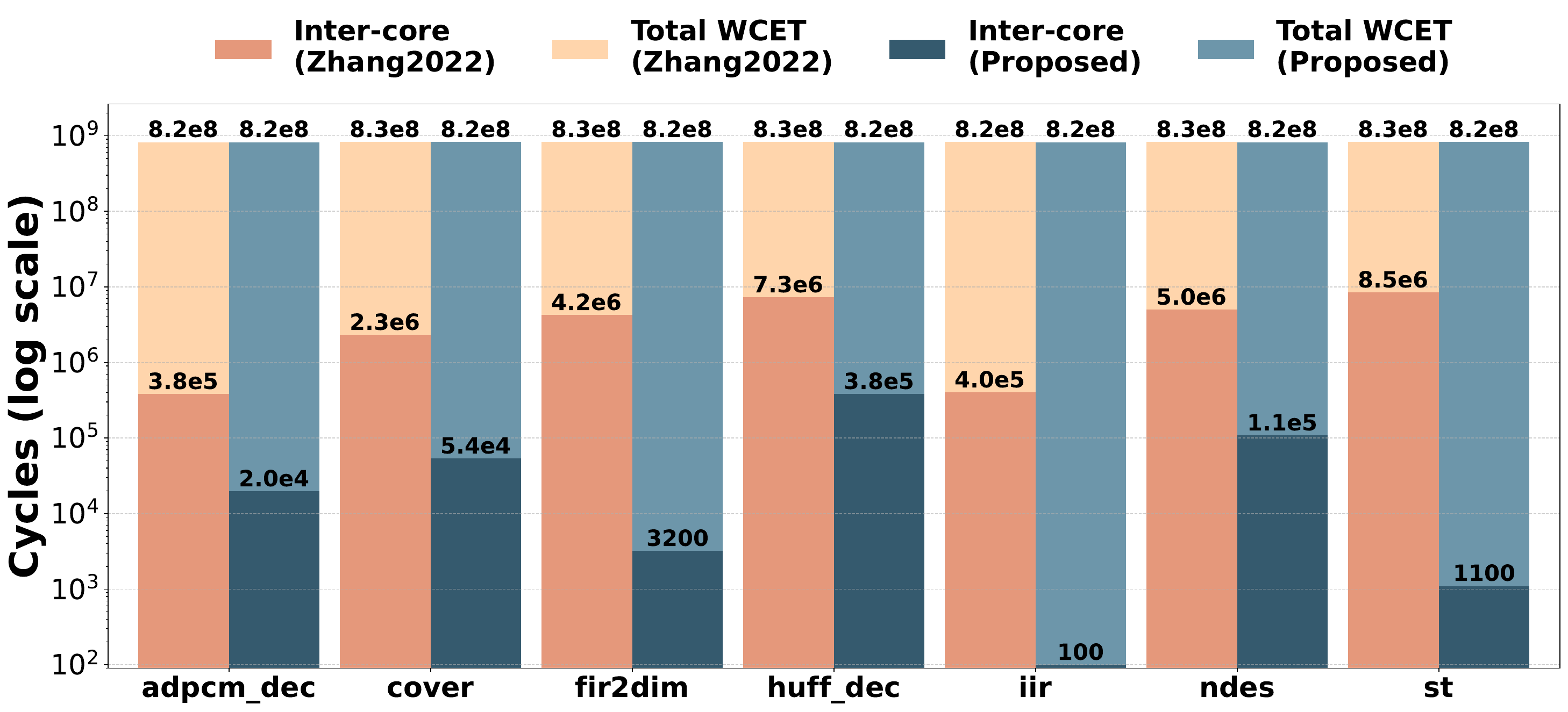}
    \caption{Inter-core cache interference and total WCET  of \texttt{pm} with $\kappa=2$ \textit{(y-axis: interfering task; dark colours: inter-core cache interference; light colours: WCET)}.}
    \label{fig:pm}
\end{figure}

\begin{figure}[t]
    \centering
    \includegraphics[width=1\linewidth]{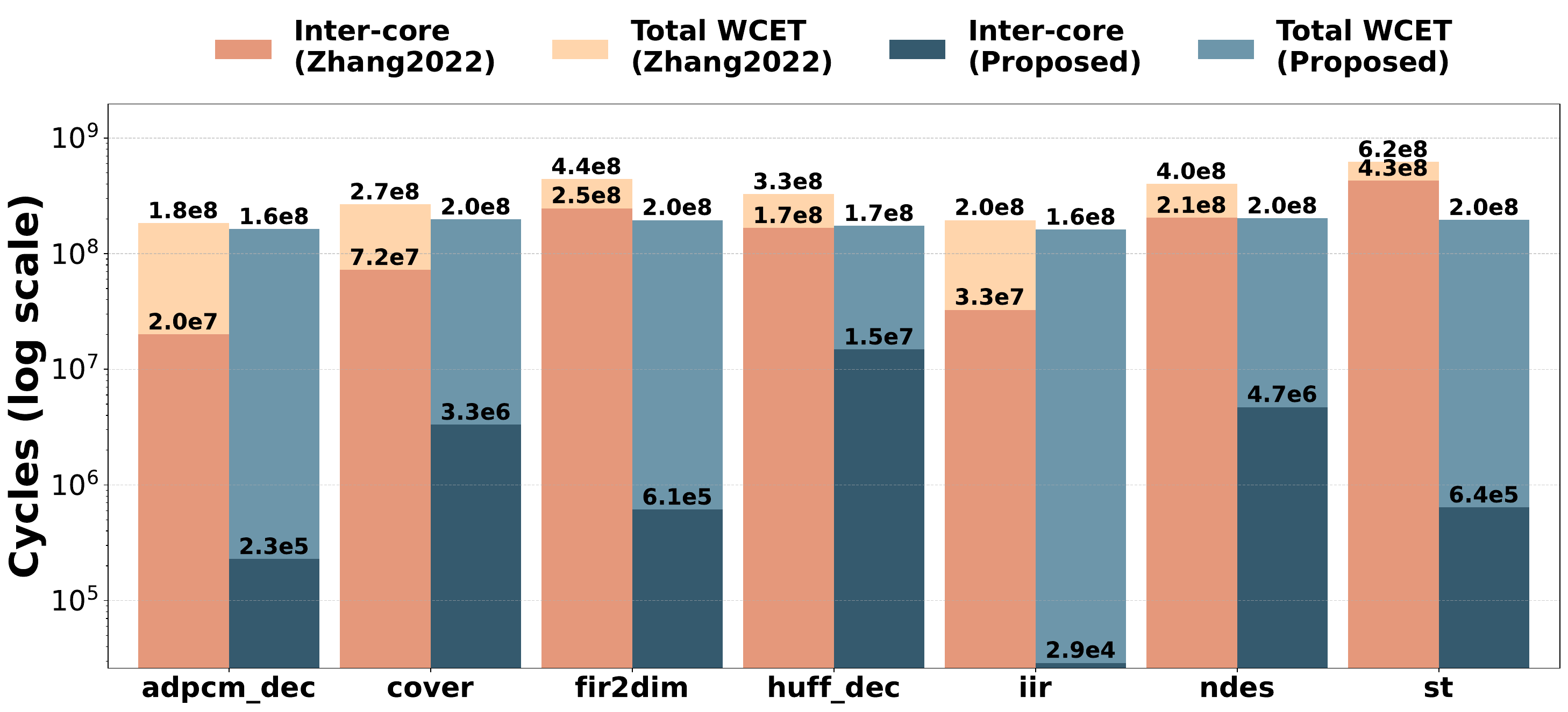}
    \caption{Inter-core cache interference and total WCET  of \texttt{powerwindow} with $\kappa=2$ \textit{(y-axis: interfering task; dark colours: inter-core cache interference; light colours: WCET)}.}
    \label{fig:powerwindow}
\end{figure}

\begin{figure}[t]
    \centering
    \includegraphics[width=1\linewidth]{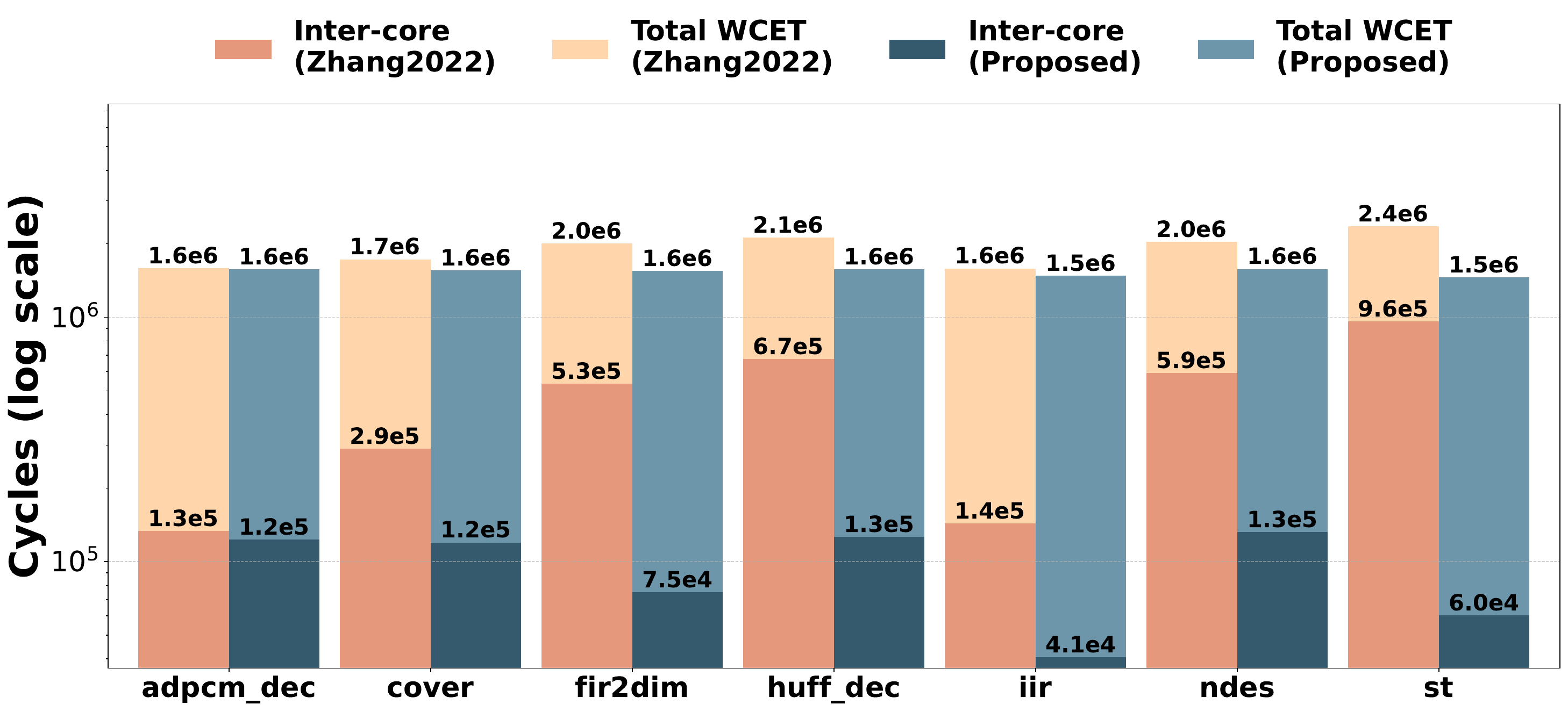}
    \caption{Inter-core cache interference and total WCET  of \texttt{prime} with $\kappa=2$ \textit{(y-axis: interfering task; dark colours: inter-core cache interference; light colours: WCET)}.}
    \label{fig:prime}
\end{figure}

\begin{figure}[t]
    \centering
    \includegraphics[width=1\linewidth]{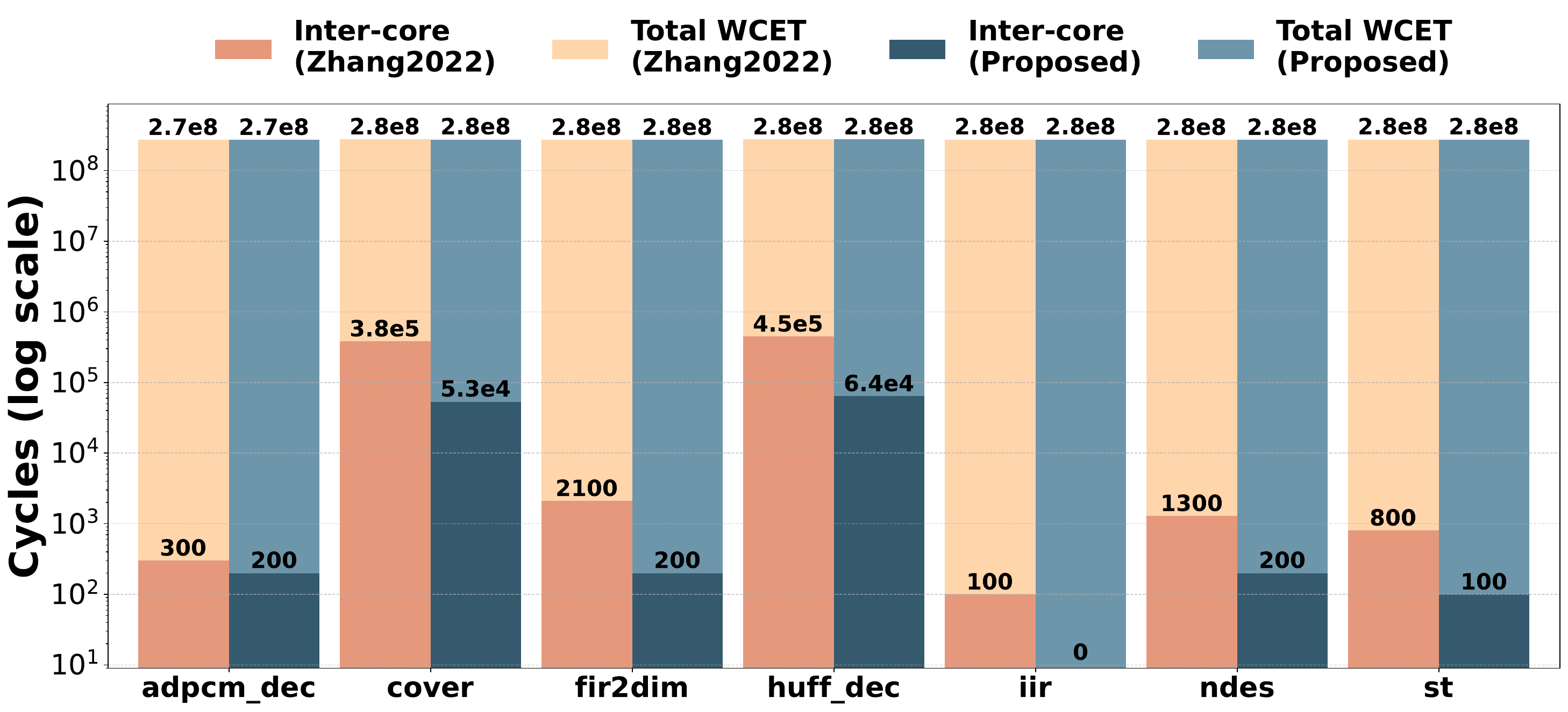}
    \caption{Inter-core cache interference and total WCET  of \texttt{sha} with $\kappa=2$ \textit{(y-axis: interfering task; dark colours: inter-core cache interference; light colours: WCET)}.}
    \label{fig:sha}
\end{figure}

\begin{figure}[t]
    \centering
    \includegraphics[width=1\linewidth]{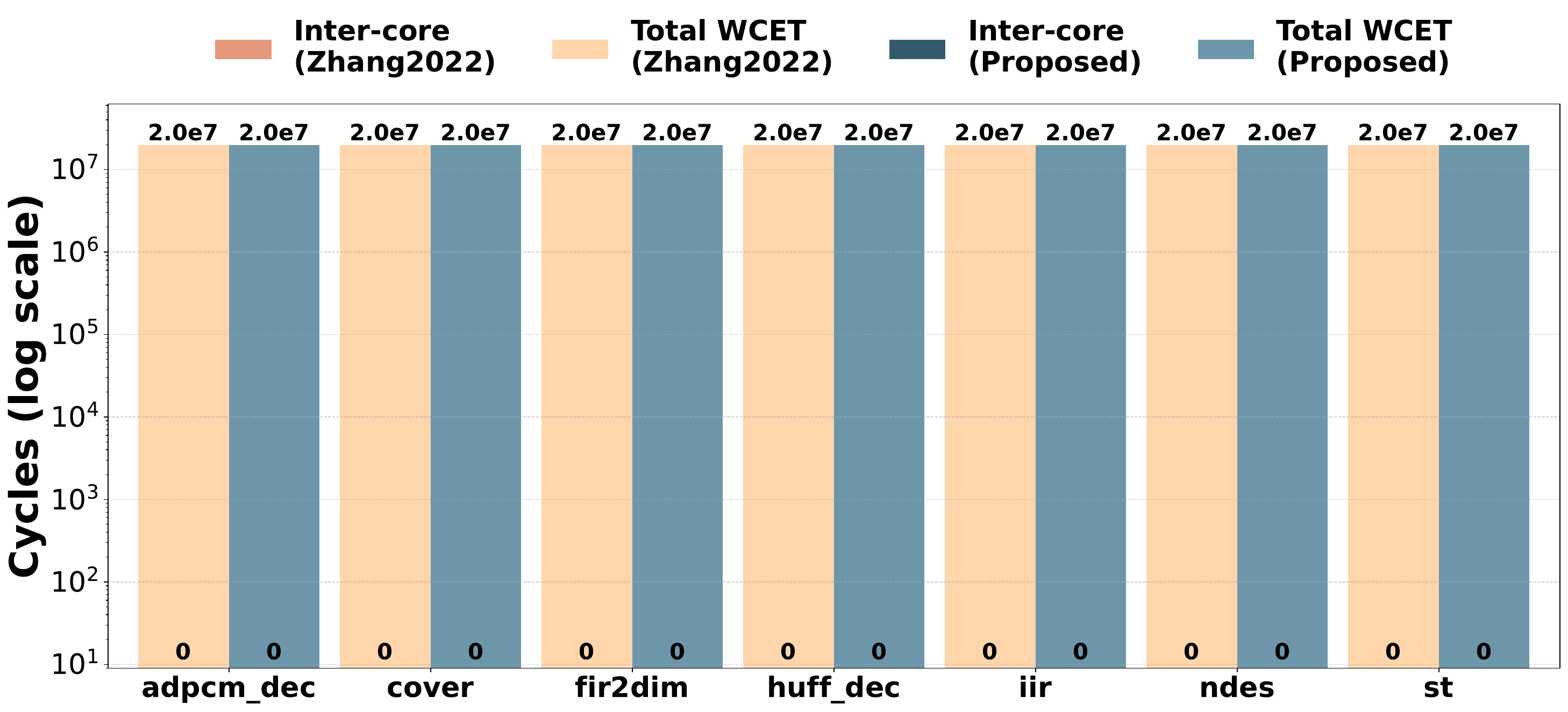}
    \caption{Inter-core cache interference and total WCET  of \texttt{bsort} with $\kappa=2$ \textit{(y-axis: interfering task; dark colours: inter-core cache interference; light colours: WCET)}.}
    \label{fig:bsort}
\end{figure}

\begin{figure}[t]
    \centering
    \includegraphics[width=1\linewidth]{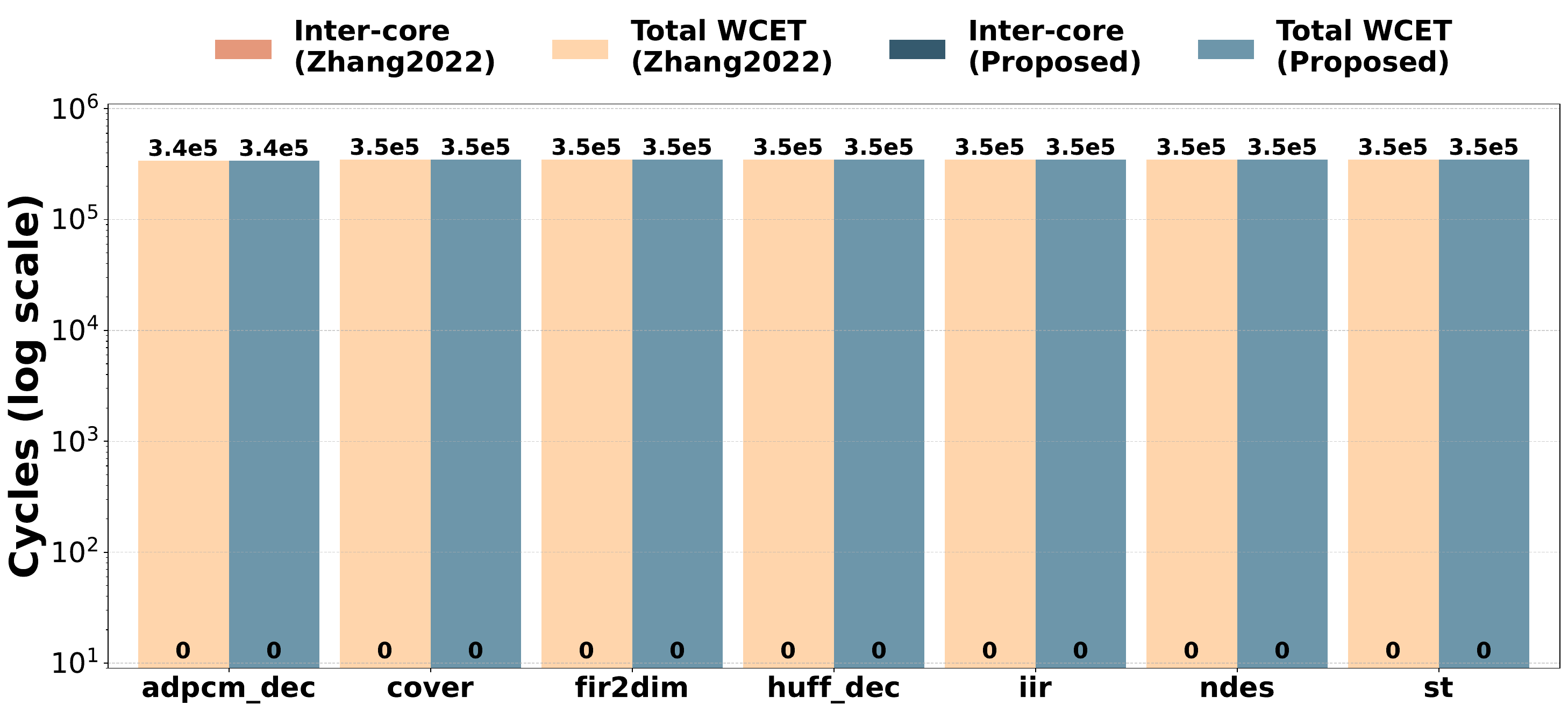}
    \caption{Inter-core cache interference and total WCET  of \texttt{cover} with $\kappa=2$ \textit{(y-axis: interfering task; dark colours: inter-core cache interference; light colours: WCET)}.}
    \label{fig:cover}
\end{figure}

\begin{figure}[t]
    \centering
    \includegraphics[width=1\linewidth]{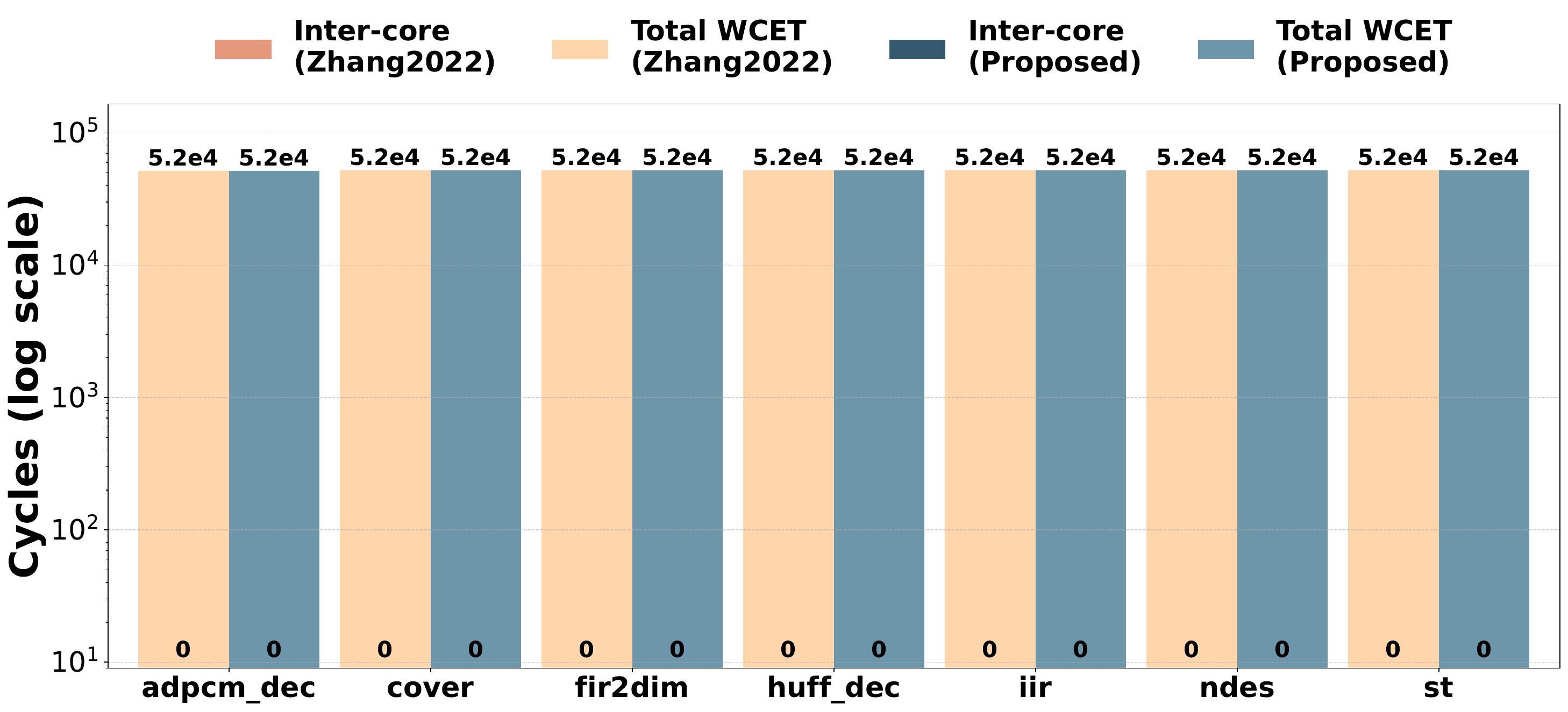}
    \caption{Timing estimations of \texttt{complex\_updates} with $\kappa=2$ \textit{(y-axis: interfering task; dark colours: inter-core interference; light colours: WCET)}.}
    \label{fig:complex_updates}
\end{figure}

\begin{figure}[t]
    \centering
    \includegraphics[width=1\linewidth]{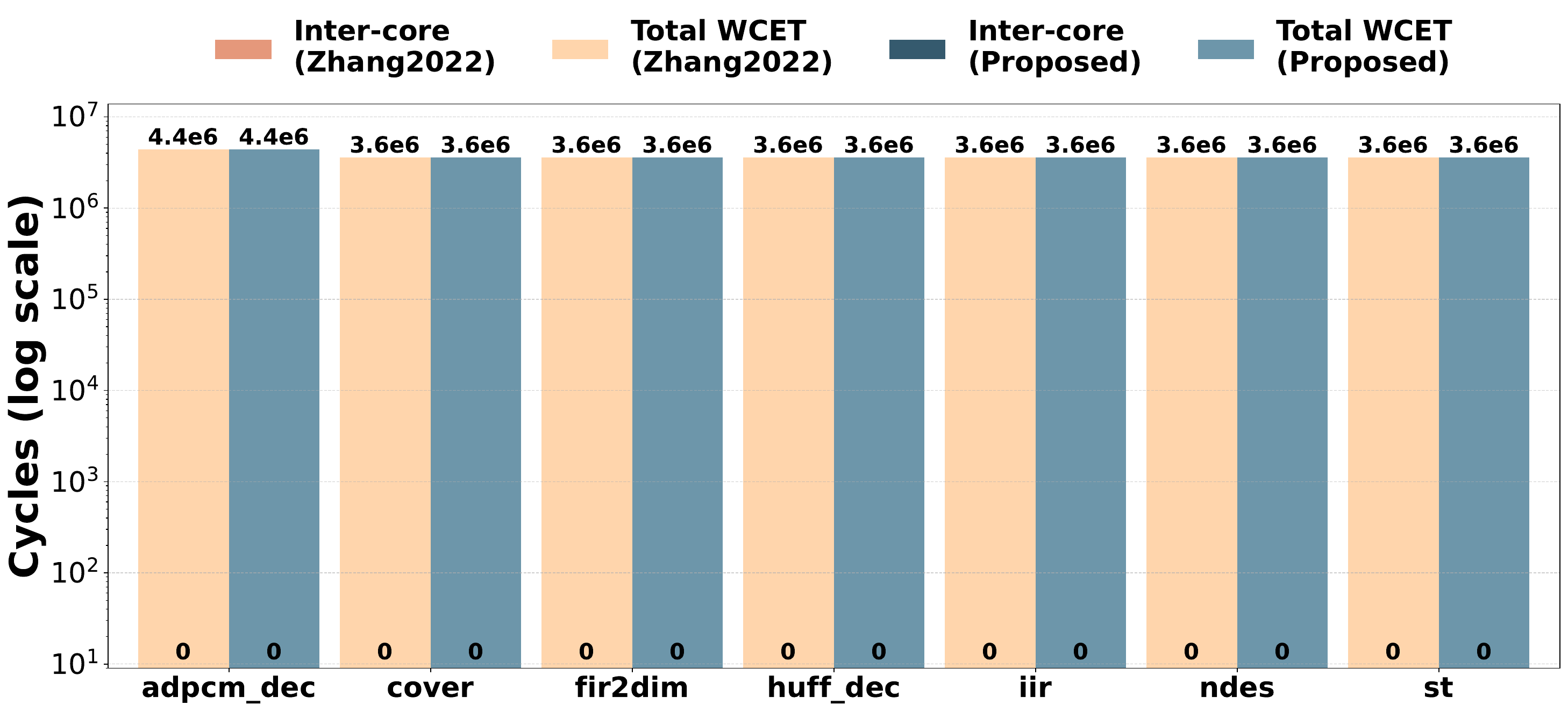}
    \caption{Inter-core cache interference and total WCET  of \texttt{cjpeg\_wrbmp} with $\kappa=2$ \textit{(y-axis: interfering task; dark colours: inter-core cache interference; light colours: WCET)}.}
    \label{fig:cjpeg_wrbmp}
\end{figure}

\begin{figure}[t]
    \centering
    \includegraphics[width=1\linewidth]{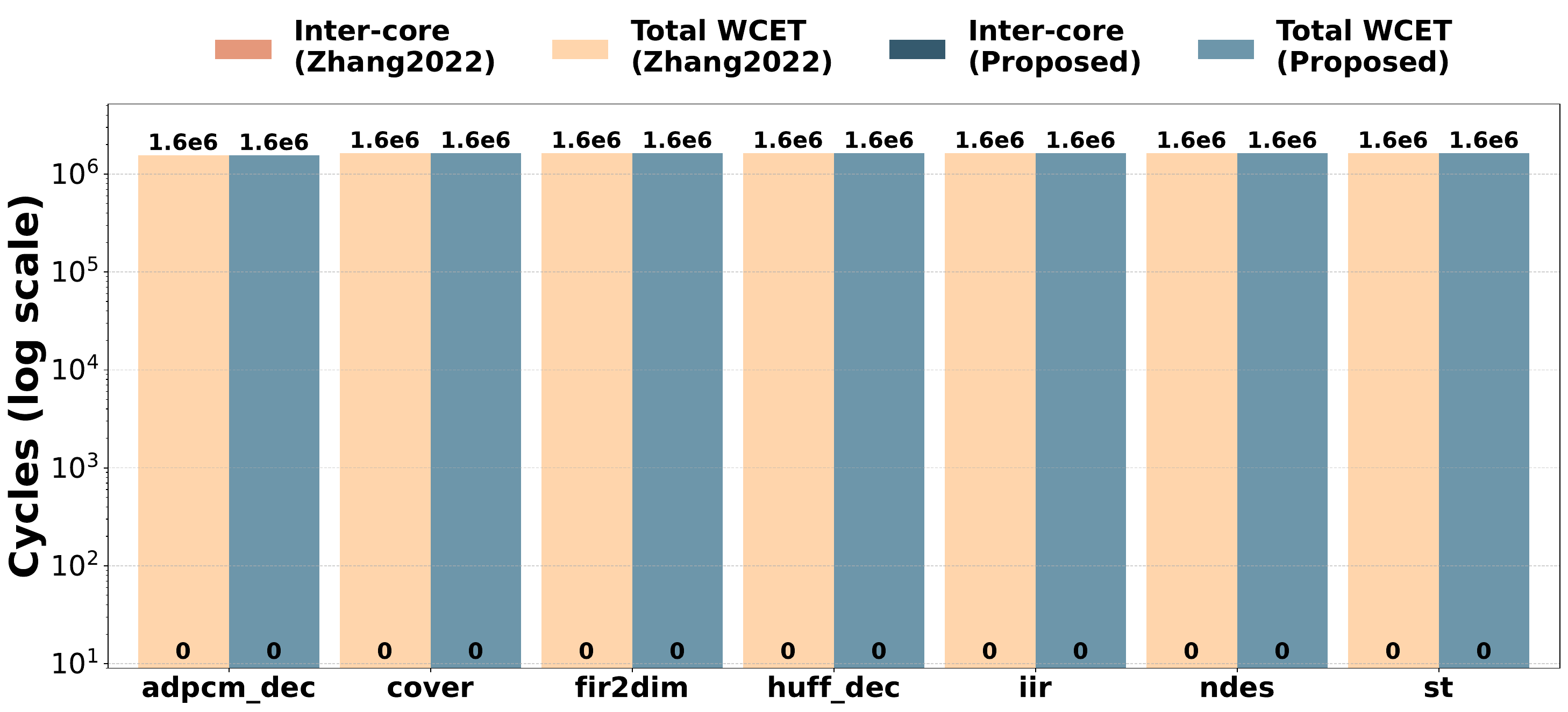}
    \caption{Inter-core cache interference and total WCET  of \texttt{countnegative} with $\kappa=2$ \textit{(y-axis: interfering task; dark colours: inter-core cache interference; light colours: WCET)}.}
    \label{fig:countnegative}
\end{figure}

\begin{figure}[t]
    \centering
    \includegraphics[width=1\linewidth]{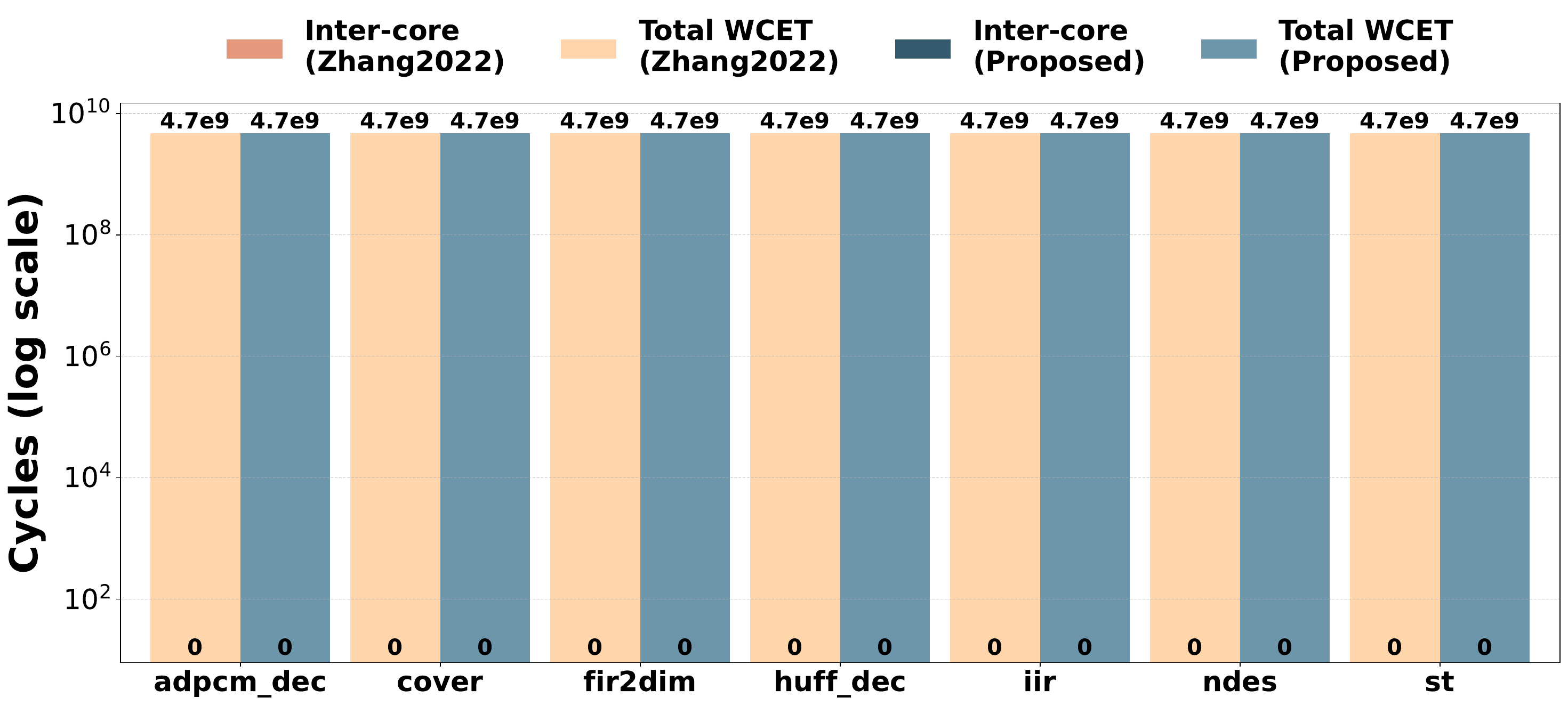}
    \caption{Inter-core cache interference and total WCET  of \texttt{fft} with $\kappa=2$ \textit{(y-axis: interfering task; dark colours: inter-core cache interference; light colours: WCET)}.}
    \label{fig:fft}
\end{figure}

\begin{figure}[t]
    \centering
    \includegraphics[width=1\linewidth]{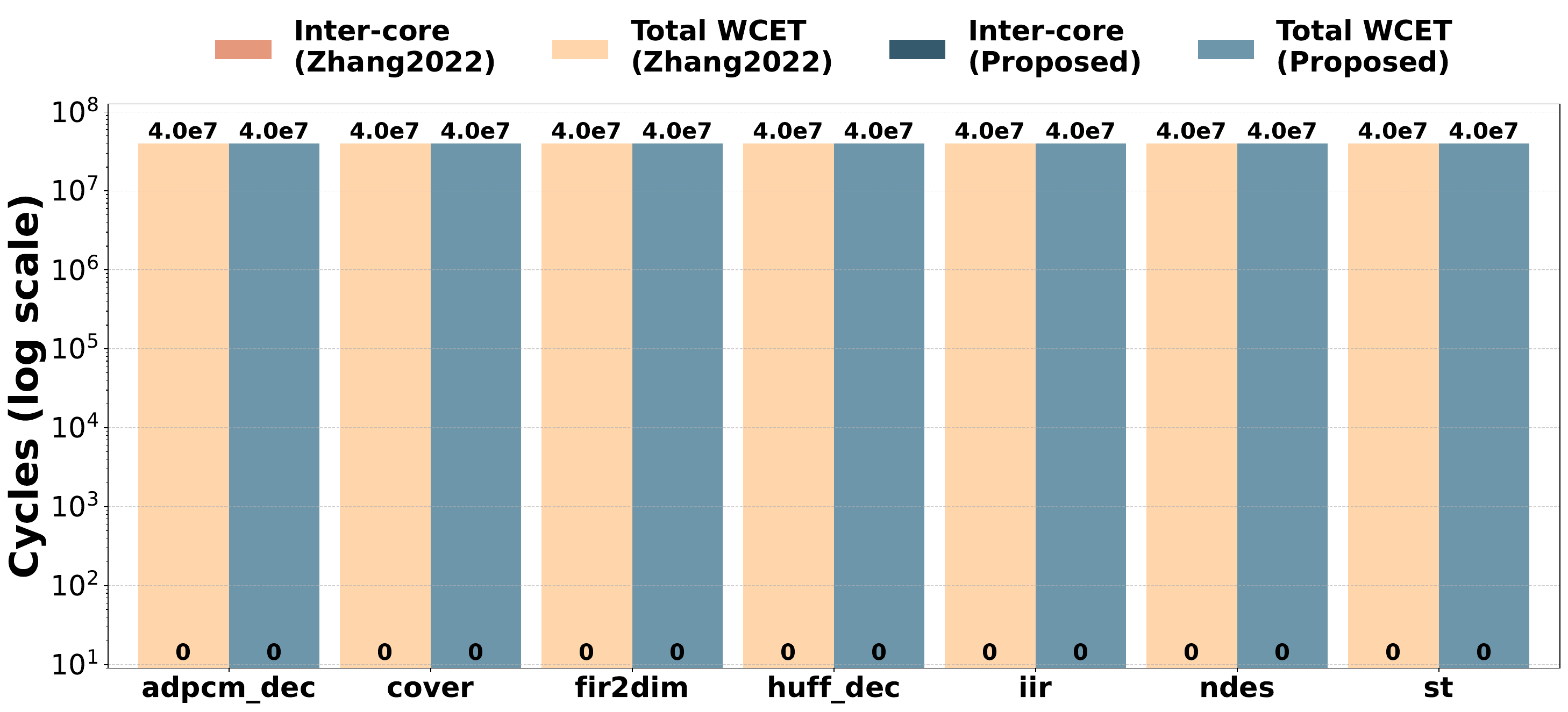}
    \caption{Inter-core cache interference and total WCET  of \texttt{filterbank} with $\kappa=2$ \textit{(y-axis: interfering task; dark colours: inter-core cache interference; light colours: WCET)}.}
    \label{fig:filterbank}
\end{figure}

\begin{figure}[t]
    \centering
    \includegraphics[width=1\linewidth]{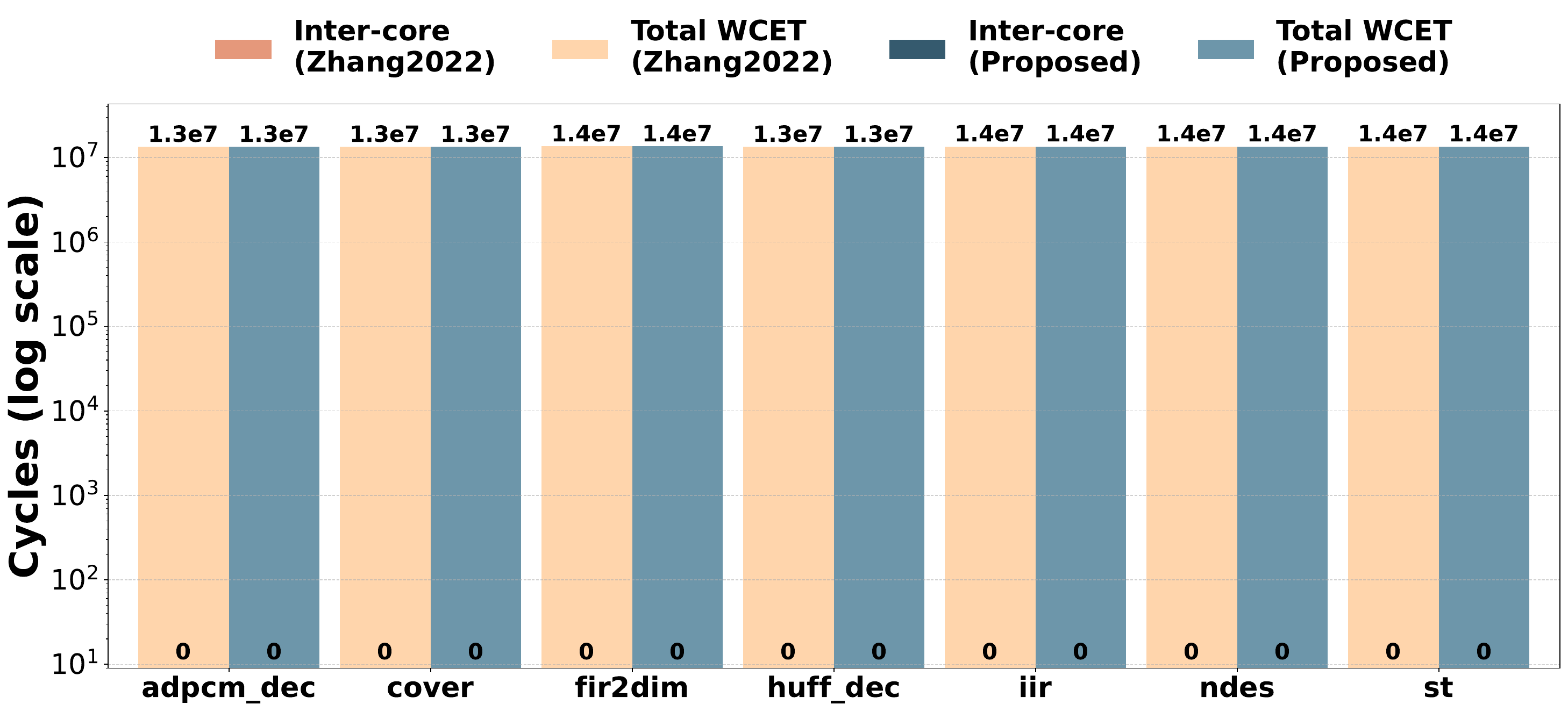}
    \caption{Inter-core cache interference and total WCET  of \texttt{lms} with $\kappa=2$ \textit{(y-axis: interfering task; dark colours: inter-core cache interference; light colours: WCET)}.}
    \label{fig:lms}
\end{figure}

\begin{figure}[t]
    \centering
    \includegraphics[width=1\linewidth]{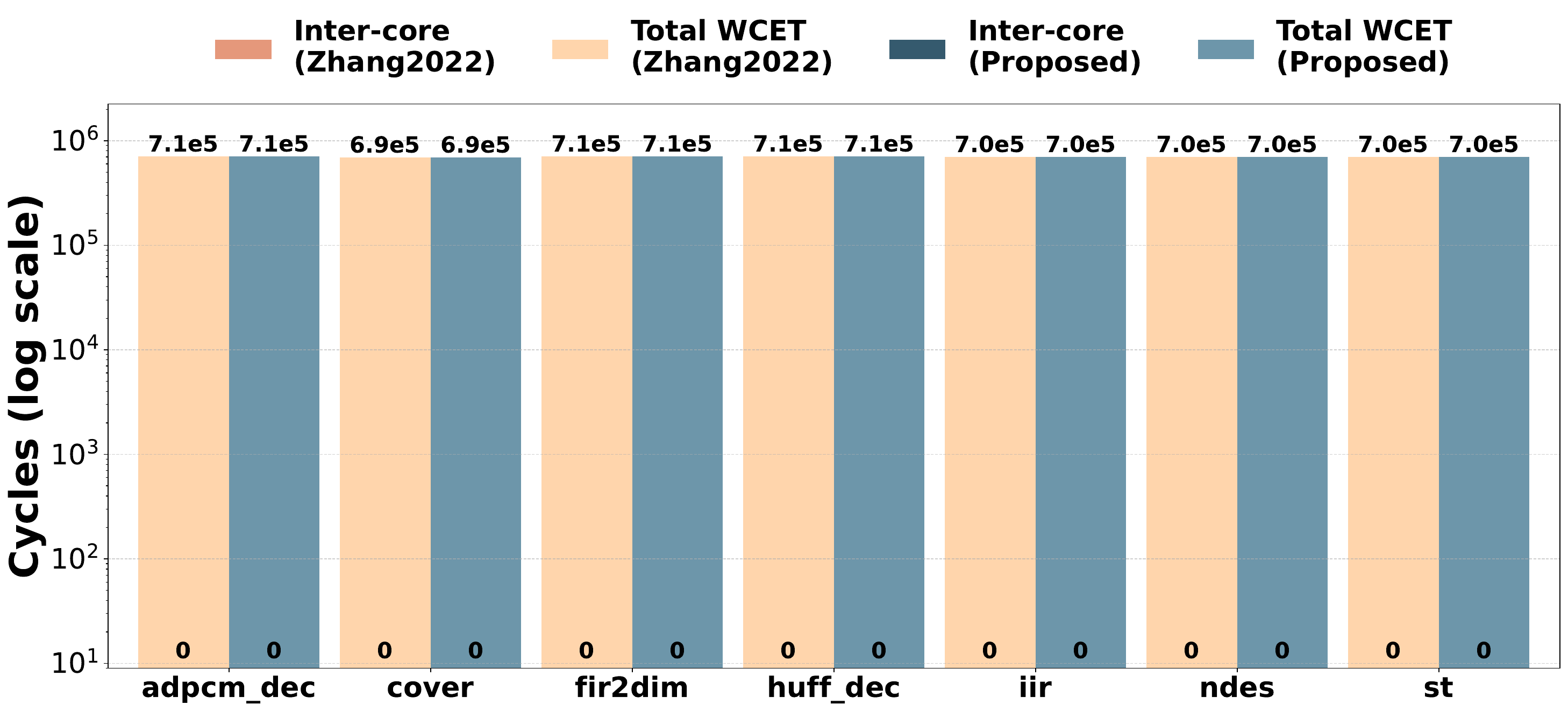}
    \caption{Inter-core cache interference and total WCET  of \texttt{ludcmp} with $\kappa=2$ \textit{(y-axis: interfering task; dark colours: inter-core cache interference; light colours: WCET)}.}
    \label{fig:ludcmp}
\end{figure}

\begin{figure}[t]
    \centering
    \includegraphics[width=1\linewidth]{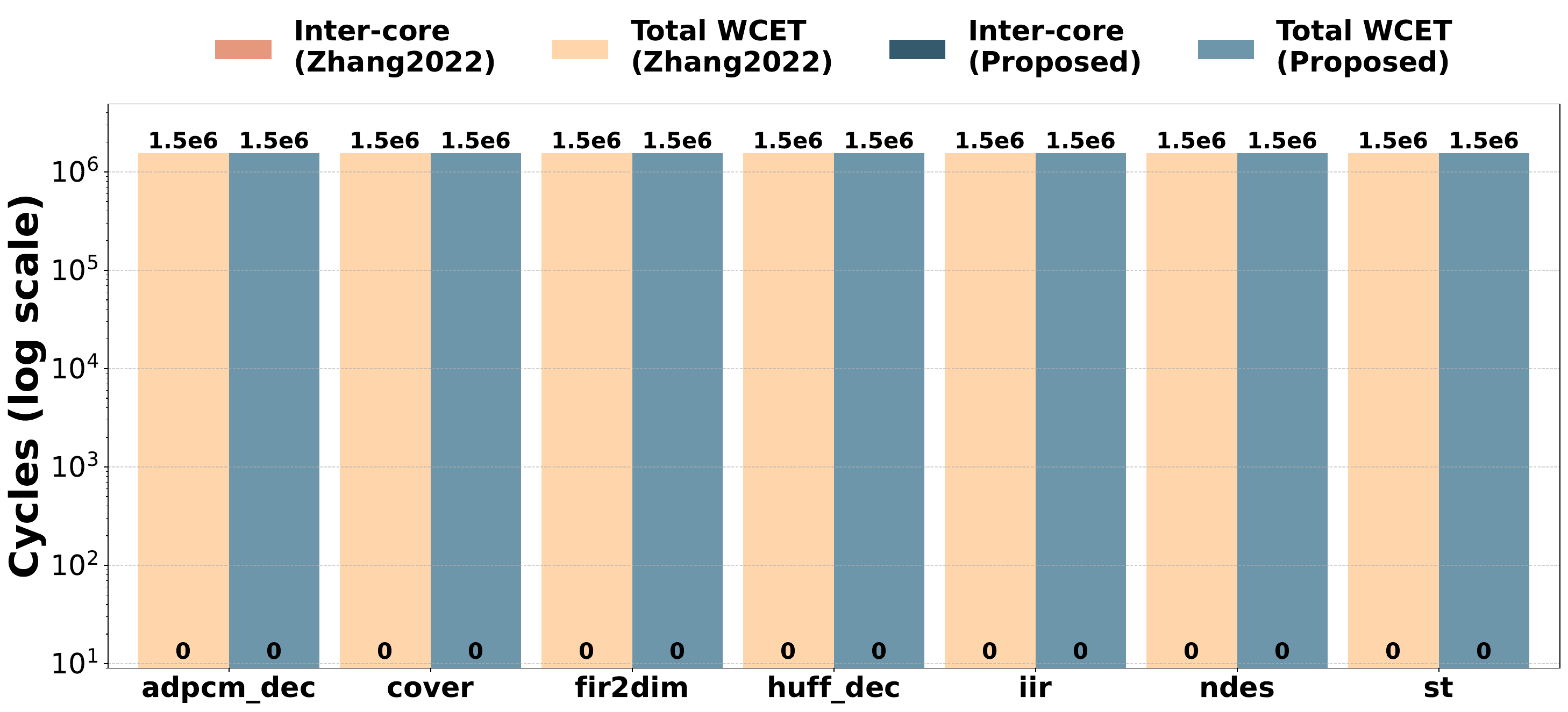}
    \caption{Inter-core cache interference and total WCET  of \texttt{matrix1} with $\kappa=2$ \textit{(y-axis: interfering task; dark colours: inter-core cache interference; light colours: WCET)}.}
    \label{fig:matrix1}
\end{figure}

\begin{figure}[t]
    \centering
    \includegraphics[width=1\linewidth]{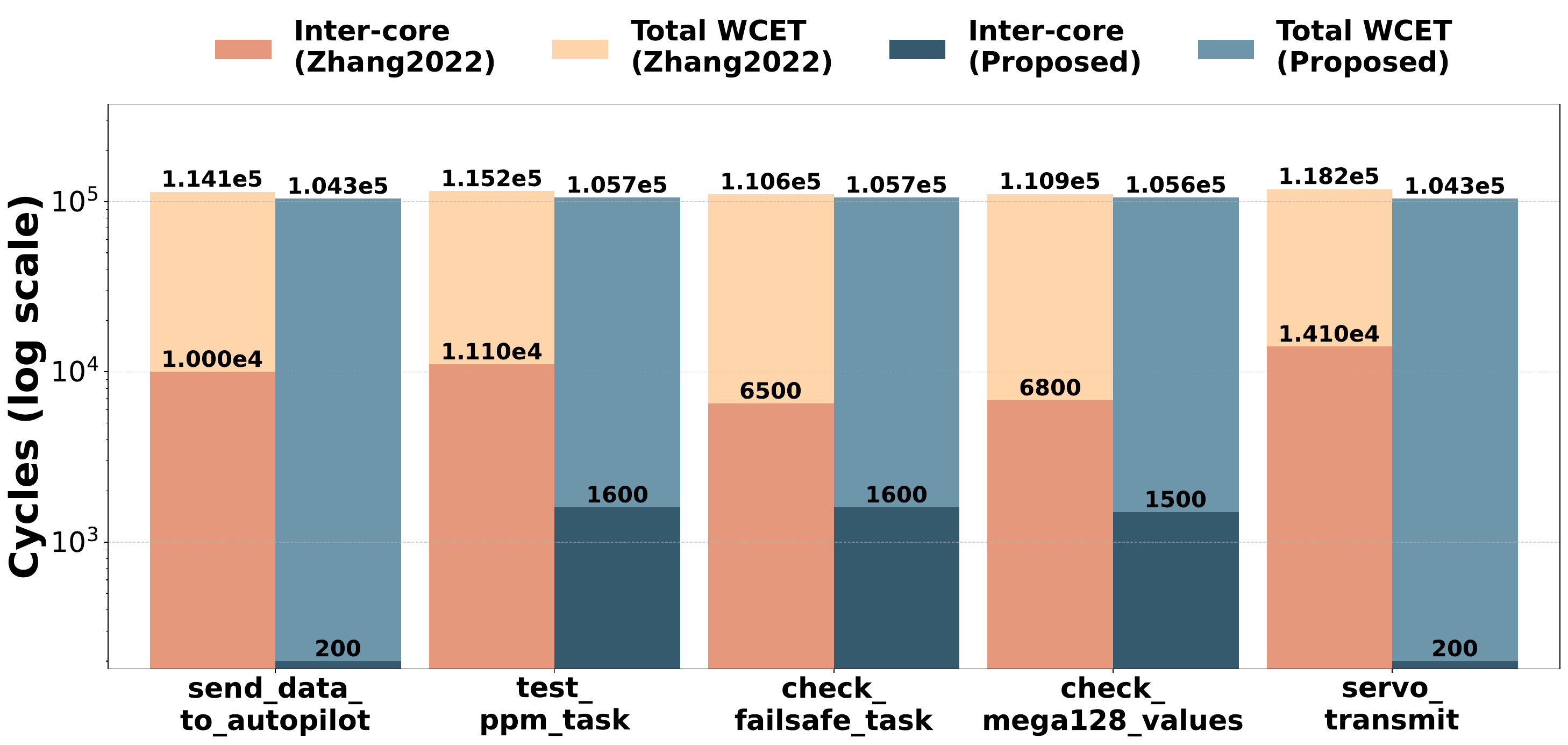}
    \caption{Inter-core cache interference and WCET  of the PapaBench task \texttt{radio\_control\_task} (AUTOPILOT) with $\kappa=2$ \textit{(y-axis: interfering task in FLY-BY-WIRE program; dark colours: inter-core cache interference; light colours: WCET)}.}
    \label{fig:papa_radio}
\end{figure}

\begin{figure}[t]
    \centering
    \includegraphics[width=1\linewidth]{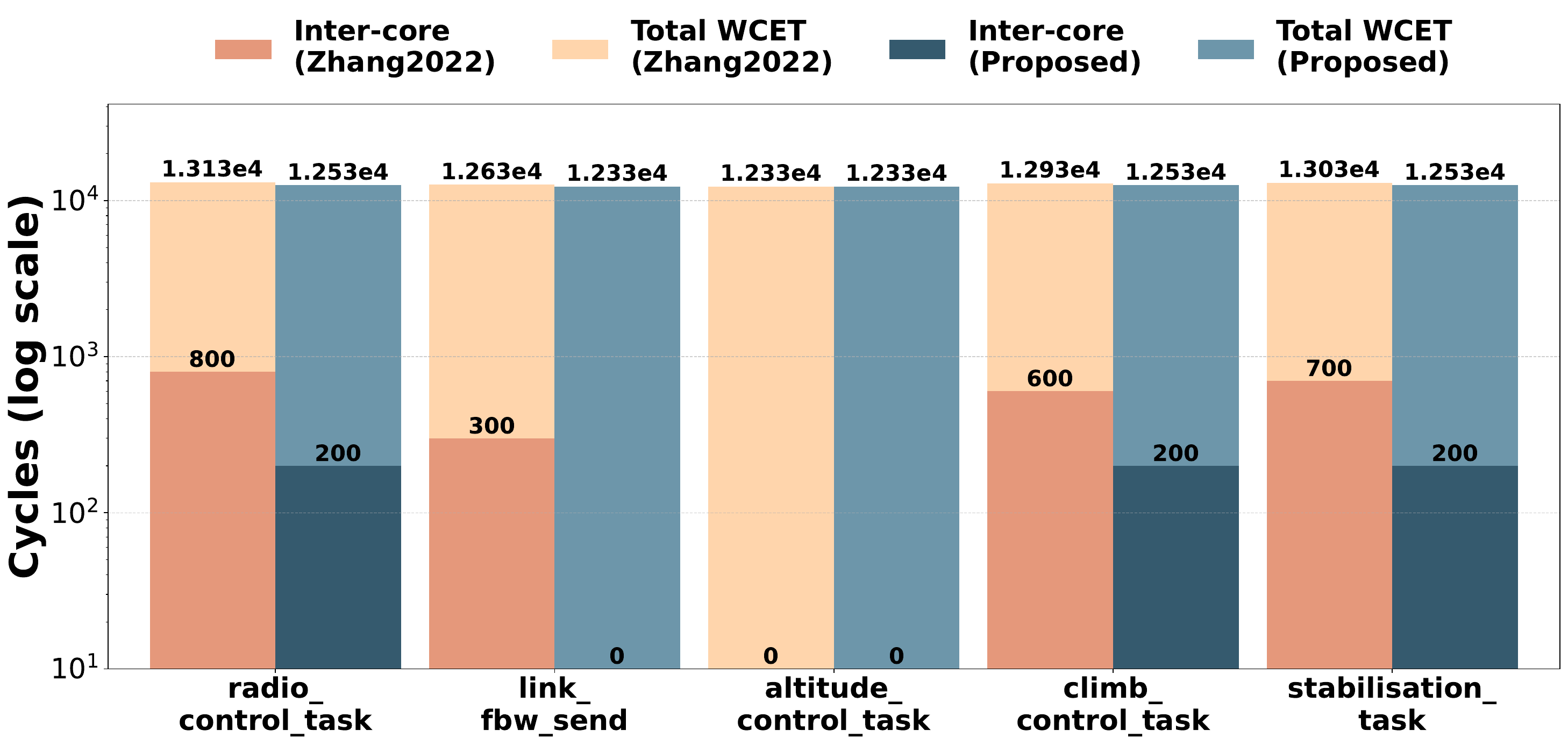}
    \caption{Inter-core cache interference and total WCET of the PapaBench task \texttt{send\_data\_to\_autopilot\_task} (FLY-BY-WIRE) with $\kappa=2$ \textit{(y-axis: interfering task in AUTOPILOT program; dark colours: inter-core cache interference; light colours: WCET)}.}
    \label{fig:papa_send}
\end{figure}

\begin{figure}[t]
    \centering
    \vfill
    \vspace{280pt}
\end{figure}

\end{document}